\journal{Theoretical Computer Science}
\newcommand\thankssymb[1]{\textsuperscript{\@fnsymbol{#1}}}
\newcommand*\Let[2]{\State #1 $\gets$ #2}
\algrenewcommand\algorithmicrequire{\textbf{Requisito:}}
\algrenewcommand\algorithmicensure{\textbf{Observação:}}
\algrenewcommand\algorithmicfunction{\textbf{function}}
\algrenewcommand\algorithmicfor{\textbf{for}}
\algrenewcommand\algorithmicdo{\textbf{do}}
\algrenewcommand\algorithmicend{\textbf{end}}
\algrenewcommand\algorithmicreturn{\textbf{return}}
\algrenewcommand\algorithmicprocedure{\textbf{procedure}}
\algrenewcommand\algorithmicif{\textbf{if}}
\algrenewcommand\algorithmicelse{\textbf{else}}
\algrenewcommand\algorithmicthen{\textbf{then}}
\algrenewcommand\algorithmicwhile{\textbf{while}}
\renewcommand*{\ALG@name}{Algorithm}
\newenvironment{nofloatalgorithmic}[3][0]
  {
  \par
  \def\algoLabel{#3}
  \addvspace{\intextsep}
  \needspace{\dimexpr\baselineskip+6.8pt}%
  \noindent%
  \hrule height.8pt depth0pt \kern2pt
  \refstepcounter{algorithm}
  \addcontentsline{loa}{algorithm}{\numberline{\thealgorithm}#2}
  \noindent\textbf{\fname@algorithm~\thealgorithm} #2\par
  \kern2pt\hrule\kern2pt
  \begin{algorithmic}[#1]
  }
  {
  \end{algorithmic}\label{\algoLabel}
  \nobreak\kern2pt\hrule\relax
  \addvspace{\intextsep}
  }
\newcommand{\LetVertical}[2]{#1 $\gets$ #2}
\newtheorem{theorem}{Theorem}
\newtheorem{lemma}{Lemma}
\newtheorem{corollary}{Corollary}
\newtheorem{definition}{Definition}
\newclass{\Hard}{hard}
\newclass{\pNP}{paraNP}
\newclass{\Hness}{hardness}
\newcommand{\NPH}{\NP\text{-}\Hard}
\newcommand{\pNPH}{\pNP\text{-}\Hard}
\newclass{\Complete}{complete}
\newclass{\Cness}{completeness}
\newcommand{\NPc}{\NP-\Complete}
\newcommand{\NPcness}{\NP-\Cness}
\newfunc{\YES}{YES}
\newfunc{\NOi}{NO}
\newfunc{\tw}{tw}
\newfunc{\sift}{ref}
\newcommand{\pname}[1]{\textsc{#1}}
\newcommand{\WH}[1]{\W[#1]\text{-}\Hard}
\newcommand{\bigO}[1]{\mathcal{O}\!\left(#1\right)}
\newcommand{\nproblem}[3]{{\centering\fbox{\pbox{\textwidth}{\pname{#1}\\\textit{Instance}: #2\\\textit{Question}: #3}}}}
\begin{document}

\begin{frontmatter}



\title{Disconnected Matchings}


\author[inst1]{Guilherme C. M. Gomes}
\ead{gcm.gomes@dcc.ufmg.br}
\author[inst2]{Bruno P. Masquio\corref{cor1}}
\ead{brunomasquio@ime.uerj.br}
\author[inst2]{Paulo E. D. Pinto}
\ead{pauloedp@ime.uerj.br}
\author[inst1]{Vinicius F. dos Santos}
\ead{viniciussantos@dcc.ufmg.br}
\author[inst2,inst3]{Jayme L. Szwarcfiter}
\ead{jayme@nce.ufrj.br}
\cortext[cor1]{Corresponding author}


\affiliation[inst1]{organization={Departamento de Ciência da Computação - Universidade Federal de Minas Gerais (UFMG)},
            city={Belo Horizonte},
            country={Brazil},
            }

\affiliation[inst2]{organization={Instituto de Matemática e Estatística - Universidade do Estado do Rio de Janeiro (UERJ)},
            city={Rio de Janeiro},
            country={Brazil}}
            
\affiliation[inst3]{organization={Instituto de Matemática e PESC/COPPE - Universidade Federal do Rio de Janeiro (UFRJ)},
city={Rio de Janeiro},
country={Brazil}}

\begin{abstract}
In 2005, Goddard, Hedetniemi, Hedetniemi and Laskar [Generalized subgraph-restricted matchings in graphs, Discrete Mathematics, 293 (2005) 129 – 138] asked the computational complexity of determining the maximum cardinality of a matching whose vertex set induces a disconnected graph.
In this paper we answer this question. In fact, we consider the generalized problem of finding \emph{$c$-disconnected matchings}; such matchings are ones whose vertex sets induce subgraphs with at least $c$ connected components.
We show that, for every fixed $c \geq 2$, this problem is {\NPc} even if we restrict the input to bounded diameter bipartite graphs, while can be solved in polynomial time if $c = 1$. For the case when $c$ is part of the input, we show that the problem is {\NPc} for chordal graphs, while being solvable in polynomial time for interval graphs.
Finally, we explore the parameterized complexity of the problem.
We present an {\FPT} algorithm under the treewidth parameterization, and an {\XP} algorithm for graphs with a polynomial number of minimal separators when parameterized by $c$.
We complement these results by showing that, unless $\NP \subseteq \coNP/\poly$, the related \textsc{Induced Matching} problem does not admit a polynomial kernel when parameterized by vertex cover and size of the matching nor when parameterized by vertex deletion distance to clique and size of the matching. As for \textsc{Connected Matching}, we show how to obtain a maximum connected matching in linear time given an arbitrary maximum matching in the input.
\end{abstract}



\begin{keyword}

Algorithms \sep Complexity \sep Induced Subgraphs \sep Matchings
\end{keyword}

\end{frontmatter}


\section{Introduction}

Matchings are a widely studied subject both in structural and algorithmic graph theory~\cite{edmonds_matching,10.1007/s00453-003-1035-4,LOZIN20027,BrunoMasquio:2019:TeseMestrado,vazirani,MOSER2009715,10.1007/978-3-030-48966-3_31}.
A matching is a subset $M \subseteq E$ of edges of a graph $G = (V,E)$ that do not share any endpoint.
A $\mathscr{P}$-matching is a matching such that $G[M]$, the subgraph of $G$ induced by the endpoints of edges of $M$, satisfies property $\mathscr{P}$.
The complexity of deciding whether or not a graph admits a $\mathscr{P}$-matching has been investigated for many different properties $\mathscr{P}$ over the years. One of the most well known examples is the {\NPcness} of the \pname{Induced Matching} problem~\cite{CAMERON198997}, where $\mathscr{P}$ is the property of being a 1-regular graph.
Little is known about structural parameterizations for \pname{Induced Matching}, and even less about kernelization.
In~\cite{MOSER2009715}, Moser and Sikdar present a series of \FPT\ algorithms parameterized by the size of the matching for various graph classes, including planar graphs, bounded degree graphs, and line graphs; they also present a linear kernel under this parameterization for planar graphs.
Another commonly studied parameter is the vertex deletion distance to a matching, i.e. the minimum number of vertices that must be removed from the graph to obtain a 1-regular graph.
A corollary of the work of Moser and Thilikos~\cite{cubic_kernel_induced_matching} on regular graphs yields a cubic kernel for \pname{Induced Matching} under this parameterization; this was later improved to a quadratic kernel by Mathieson and Szeider~\cite{quadratic_kernel_induced_matching} and, more recently, to a linear kernel by Xiao and Kou~\cite{linear_kernel_induced_matching}.

Other {\NPH} problems include \pname{Acyclic Matching}~\cite{GODDARD2005129}, \pname{$k$-Degenerate Matching}~\cite{BASTE201838}, deciding if the subgraph induced by a matching contains a unique maximum matching~\cite{Golumbic2001}, and \pname{Line-Complete Matching}\footnote{A line-complete matching $M$ is a matching such that every pair of edges of $M$ has a common adjacent edge.}~\cite{cameron_connected_matching}; the latter was originally named \pname{Connected Matching}, but we adopt the more recent meaning of \pname{Connected Matching} given by Goddard et. al~\cite{GODDARD2005129}, where we want the subgraph induced by the matching to be connected.
We summarize the above results in Table~\ref{tab:p-matchings}.


\begin{table}[!htb]
\begin{center}
\begin{tabular}{c|c|c}
\hline
$\mathscr{P}$-matching & \makecell{Property $\mathscr{P}$} & Complexity \\ \hline
{ \sc Induced Matching }                                                & $1$-regular               & \makecell{{\NPc} \cite{CAMERON198997}} \\ \hline
{ \sc Acyclic Matching }                                                & acyclic                  & \makecell{{\NPc} \cite{GODDARD2005129}} \\ \hline
{ \sc $k$-Degenerate Matching }                                          & $k$-degenerate            & \makecell{{\NPc} \cite{BASTE201838}} \\ \hline
\makecell{{ \sc Uniquely Restricted} \\ { \sc  Matching }}                                    & \makecell{has a unique \\ maximum \\ matching}       & \makecell{{\NPc} \cite{Golumbic2001}} \\ \hline
{ \sc Connected Matching }                                                  & connected                    & \makecell{Polynomial~\cite{GODDARD2005129} \\ Same as \\ {\sc Maximum Matching}$^\dagger$  } \\ \hline
\makecell{{\sc $c$-Disconnected Matching},  \\ for each $c \geq 2$}                                               & \makecell{has $c$ \\ connected \\ components}                 & \makecell{\makecell{{\NPc} \\ for bipartite graphs$^\dagger$}} \\ \hline
\makecell{{\sc Disconnected Matching},  \\ with $c$ as part of the input}                                    & \makecell{has $c$ \\ connected \\ components}                 & \makecell{\makecell{{\NPc} \\ for chordal graphs$^\dagger$}} \\ \hline
\end{tabular}
\end{center}
\caption{$\mathscr{P}$-matchings and some of its complexity results. Entries marked with a $\dagger$ are presented in this paper.} \label{tab:p-matchings}
\end{table}

Motivated by a question posed by Goddard et al.~\cite{GODDARD2005129} about the complexity of finding a matching that induces a disconnected graph, in this paper we study the \pname{Disconnected Matching} problem, which we define as follows:

\nproblem{Disconnected Matching}{A graph $G$ and two integers $k$ and $c$.}{Is there a matching $M$ with at least $k$ edges such that $G[M]$ has at least $c$ connected components?}

Our first result is an alternative proof for the polynomial time solvability of \pname{Connected Matching}.
We then answer Goddard et al.'s question by showing that \pname{Disconnected Matching} is {\NPc} for $c=2$. Indeed, we show that the problem remains {\NPc} even on bipartite graphs of diameter three, for every fixed $c \geq 2$; we denote this version of the problem by \pname{$c$-Disconnected Matching}.
Note that, while \pname{Induced Matching} is the particular case of \pname{Disconnected Matching} when $c = k$, our result is much more general since we decouple these two parameters.
Then, we turn our attention to the complexity of this problem on graph classes and parameterized complexity.
We begin by showing that, unlike \pname{Induced Matching}, \pname{Disconnected Matching} remains {\NPc} even when restricted to chordal graphs of diameter $3$; in this case, however, $c$ is part of the input, and we also prove that, for every fixed $c$, we can solve the problem in {\XP} time.
Afterwards, we present a polynomial time dynamic programming algorithm for interval graphs.
We then focus on the parameterized complexity of \pname{Disconnected Matching}.
In this context, we first show an {\FPT} algorithm parameterized by treewidth, then proceed to explore kernelization aspects of the problem.
Using the cross-composition framework~\cite{cross_composition}, we show that, unless $\NP \subseteq \coNP/\poly$, \pname{Induced Matching} and, consequently, \pname{Disconnected Matching}, do not admit polynomial kernels when parameterized by vertex cover and size of the matching nor when parameterized by vertex deletion distance to clique and size of the matching.
We summarize our complexity results in Table~\ref{tab:disc-p-matching}.

\begin{table}[!htb]
\begin{center}
\begin{tabular}{c|c|c|c}
\hline
Graph class & $c$ & Complexity & Proof \\ \hline
General                                                & $c = 1$               & \makecell{Same as \pname{Maximum Matching}} & Theorem \ref{teo:emp-1-desc} \\ \hline
Bipartite                                              & \makecell{Fixed $c \geq 2$}   & \makecell{\NPc} & Theorem \ref{theo:npcomplete} \\ \hline
Chordal                                                & \makecell{Input }               & \makecell{{\XP} and {\NPc}} & \makecell{Theorems \ref{teo:c-disc-npc-chordal} \\ and \ref{teo:c-matching-xp}} \\ \hline

Interval                                                & \makecell{Input}               & $\bigO{|V|^2c\max\{|V|c, |E|\sqrt{|V|}\}}$ & \makecell{Theorem \ref{teo:c-disc-interval}} \\ \hline

Treewidth $t$                                       & \makecell{Input}               & $\bigO{8^t\eta_{t+1}^3|V|^2}$ & Theorem \ref{teo:c-disc-tw} \\ \hline


\end{tabular}
\end{center}
\caption{Complexity results for { \sc Disconnected Matching } restricted to some input scopes. We denote by $\eta_i$ the $i$-th Bell number.} \label{tab:disc-p-matching}
\end{table}


\noindent \textbf{Preliminaries}. For an integer $k$, we define $[k] = \{1, \dots, k\}$. For a set $S$, we say that $A,B \subseteq S$ partition $S$ if $A \cap B = \emptyset$ and $A \cup B = S$; we denote a partition of $S$ in $A$ and $B$ by $A \dot{\cup} B = S$.
A parameterized problem $\Gamma$ is said to be {\XP} when parameterized by $k$ if it admits an algorithm running in $f(k)n^{g(k)}$ time for computable functions $f,g$; it is said to be {\FPT} when parameterized by $k$ if $g \in \bigO{1}$.
We say that an \NPH\ problem $\Pi$ OR-cross-composes into a parameterized problem $\Gamma$ if, given $t$ instances $\{P_1, \dots, P_t\}$ of $\Pi$, we can build, in time polynomial on $N = \sum_{i \in [t]} |P_i|$, an instance $(x, k)$ of $\Gamma$ such that: (i) $k \leq \poly(\max\{|P_i| \mid i \in [t]\}\log t)$ and (ii) $(x, k)$ admits a solution if and only if at least one $P_i$ admits a solution.
For more on parameterized complexity, we refer to~\cite{cygan_parameterized}.
We use standard graph theory notation and nomenclature as in~\cite{murty,classes_survey}.
Let $G = (V, E)$ be a graph, $W \subseteq V(G)$, $M \subseteq E(G)$, and $V(M)$ to be the set of endpoints of edges of $M$, which are also called $M$-saturated vertices.
We denote by $G[W]$ the subgraph of $G$ induced by $W$; in an abuse of notation, we define $G[M] = G[V(M)]$.
A matching is said to be maximum if no other matching of $G$ has more edges than $M$, and perfect if $V(M) = V(G)$.
Also, $M$ is said to be connected if $G[M]$ is connected and $c$-disconnected if $G[M]$ has at least $c$ connected components.
A graph $G$ is $H$-free if $G$ has no copy of $H$ as an induced subgraph; $G$ is chordal if it has no induced cycle with more than three edges.
A graph is an interval graph if it is the intersection graph of intervals on a line.
In $G$, we denote by $\beta(G)$ the number of edges in a maximum matching, by $\beta_c(G)$ the cardinality of a maximum connected matching, by $\beta_*(G)$ the size of a maximum induced matching, and by $\beta_{d,i}(G)$ the size of a maximum $i$-disconnected matching.
If $G$ is connected, note that:
\begin{enumerate}
    \item Every maximum induced matching $M^*$ is a $\beta_*(G)$-disconnected matching, since each connected component of $G[M^*]$ is an edge.
    \item Since $\beta_*(G)$ is the maximum number of components that $G[M]$ can have with any matching $M$, there exists no $c$-disconnected matching for $c > \beta_*(G)$.
    \item Every matching is a $1$-disconnected matching.
    \item As shown in~\cite{GODDARD2005129}, $\beta(G) = \beta_c(G)$.
\end{enumerate}
Consequently, we have that both Theorem~\ref{teo:c-disc-induced-np-complete} and the following bounds hold:

\begin{equation*}
    \beta = \beta_c = \beta_{d,1} \geq \beta_{d,2} \geq \beta_{d,3} \geq \ldots \geq \beta_{d,\beta_*} \geq \beta_*
\end{equation*}

\begin{theorem}\label{teo:c-disc-induced-np-complete}
    \pname{Disconnected Matching} is {\NPc} for every graph class for which the \pname{Induced Matching} is \NPc.
\end{theorem}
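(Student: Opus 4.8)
The plan is to exploit the observation, already highlighted in the introduction, that \pname{Induced Matching} is precisely the special case of \pname{Disconnected Matching} in which $c = k$. Given an instance $(G, k)$ of \pname{Induced Matching} with $G$ belonging to a class $\mathcal{C}$ on which that problem is \NPc, I would produce the instance $(G, k, k)$ of \pname{Disconnected Matching}. Since the underlying graph is copied verbatim, the reduced instance still lies in $\mathcal{C}$, and the transformation is trivially computable in polynomial time; thus it suffices to prove that $(G,k)$ is a yes-instance of \pname{Induced Matching} if and only if $(G,k,k)$ is a yes-instance of \pname{Disconnected Matching}.

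Membership in \NP\ is immediate: a matching $M$ certifies a yes-instance, and one checks $|M| \geq k$ and counts the connected components of $G[M]$ in polynomial time. For the forward direction, if $G$ has an induced matching $M$ of size at least $k$, then $G[M]$ is $1$-regular, so its connected components are exactly the $|M| \geq k$ edges of $M$; hence $M$ witnesses a matching with at least $k$ edges and at least $k$ components, exactly as required by $(G,k,k)$.

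The reverse direction is the heart of the argument. Suppose $(G,k,k)$ is a yes-instance, witnessed by a matching $M$ with $|M| \geq k$ and with $G[M]$ having at least $k$ connected components. Every $M$-saturated vertex is the endpoint of a matching edge whose other endpoint lies in the same component of $G[M]$, so each component contains at least one edge of $M$. Selecting one such edge from each component yields a set $M' \subseteq M$ with at least $k$ edges. Because any two of these edges have their endpoints in distinct components of $G[M]$, there is no edge of $G[M]$ joining them, and therefore $G[M']$ is $1$-regular; that is, $M'$ is an induced matching of size at least $k$.

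The only place demanding care is this last step, where I must argue that choosing one matching edge per component produces an induced---rather than merely a disconnected---matching. The argument hinges on $G[M]$ being an \emph{induced} subgraph, so that the absence of cross-component edges in $G[M]$ reflects a genuine absence of such edges in $G$ between the selected endpoints. Everything else is bookkeeping, and the preservation of the graph class follows for free because $G$ is left untouched by the reduction.
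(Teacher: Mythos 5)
Your proof is correct and follows essentially the same route as the paper: the reduction $(G,k) \mapsto (G,k,k)$, the observation that an induced matching of size $k$ is a $k$-disconnected matching of size $k$, and the extraction of an induced matching by selecting one edge per connected component of $G[M]$. Your write-up merely spells out the details (NP membership, and why cross-component edges cannot exist since $G[M]$ is an induced subgraph) that the paper leaves implicit.
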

\begin{proof}
Note that for every input instance $(G,k)$ to \pname{Induced Matching}, we can build an equivalent instance $(G,k,k)$ of \pname{Disconnected Matching}. That is, we want to find, in the same graph $G$, a disconnected matching $M$ with at least $k$ edges and $k$ connected components.
To obtain the induced matching, it suffices to pick, for each connected component of $G[M]$, exactly one edge.
Finally, observe that an \pname{Induced Matching} on $k$ edges is also a $k$-disconnected matching with $k$ edges.
\end{proof}

This paper is organized as follows.
In Section \ref{sec:c-disc}, we give an alternative proof to the fact that \pname{Connected Matching} is in \P\ and present an algorithm for {\sc Maximum Connected Matching}, then present a construction used to show that \pname{$c$-Disconnected Matching} is \NPc\ for every fixed $c \geq 2$ on bipartite graphs of diameter three.
In Section \ref{sec:c-disc-chordal}, we prove our final negative result, that {\sc Disconnected Matching} is {\NPc} on chordal graphs.
We show, in Section \ref{sec:minimal-separators}, that the previous proof cannot be strengthened to fixed $c$ by giving an \XP\ algorithm for \pname{Disconnected Matching} parameterized by $c$ on graphs with a polynomial number of minimal separators.
Finally, in Sections~\ref{sec:disc-interval} and \ref{sec:tw}, we present polynomial time algorithms for \pname{Disconnected Matching} in interval and bounded treewidth graphs.
We present our concluding remarks and directions for future work in Section~\ref{sec:conclusions}.

\section{Complexity of \pname{$c$-Disconnected Matching}}
\label{sec:c-disc}

\subsection{$1$-disconnected and connected matchings}\label{sec:1-disc}

We consider that the input graph has at least one edge and is connected. Otherwise, the solution is trivial or we can solve the problem independently for each connected component.
Recall that \pname{$1$-Disconnected Matching} allows its solution to have any number of connected components. Consequently, any matching with at least $k$ edges is a valid solution to an instance $(G, k, 1)$, which leads to Theorem~\ref{teo:emp-1-desc}.

\begin{theorem}\label{teo:emp-1-desc}
{\sc $1$-Disconnected Matching} is in {\P}.
\end{theorem}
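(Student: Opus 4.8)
The plan is to show that \pname{$1$-Disconnected Matching} reduces directly to \pname{Maximum Matching}, which is solvable in polynomial time by Edmonds' algorithm. The key observation, which is already spelled out in the preamble, is that $c = 1$ imposes no real constraint: by item~3 of the listed bounds, every matching is a $1$-disconnected matching, so the property $\mathscr{P}$ of ``having at least $1$ connected component'' is vacuously satisfied by any nonempty matching. Hence an instance $(G, k, 1)$ is a \textsc{yes}-instance if and only if $G$ has a matching with at least $k$ edges, i.e.\ if and only if $\beta(G) \geq k$.

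The steps I would carry out are as follows. First I would note that we may assume $G$ is connected and has at least one edge, as stated at the start of the subsection; otherwise the answer is immediate. Second, I would compute a maximum matching $M$ of $G$ in polynomial time using Edmonds' blossom algorithm, obtaining $\beta(G) = |M|$. Third, I would compare $\beta(G)$ against $k$: if $|M| \geq k$, then $M$ (or any size-$k$ subset of it, which trivially has at least one component) witnesses a \textsc{yes}-instance; if $|M| < k$, then no matching of $G$ has $k$ edges, so we reject. Since every matching induces a subgraph with at least one connected component, the component constraint is never binding, and correctness is immediate from $\beta = \beta_{d,1}$, which is exactly the leftmost equality in the displayed chain of inequalities.

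There is essentially no obstacle here: the entire content is recognizing that the $c = 1$ case strips away the disconnectedness requirement, reducing the problem to plain \textsc{Maximum Matching}. The only thing to be careful about is the phrasing ``at least $k$ edges'' versus ``exactly $k$ edges'' in the two versions of the problem statement appearing in the excerpt; under either reading the reduction is the same, since a matching of size $\geq k$ always contains a submatching of size exactly $k$, and any submatching of a matching is again a matching. Thus the decision problem is solved by a single maximum-matching computation, placing \pname{$1$-Disconnected Matching} in \P\ and matching the complexity of \textsc{Maximum Matching} as claimed in Table~\ref{tab:disc-p-matching}.
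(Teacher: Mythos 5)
Your proof is correct and follows essentially the same route as the paper: both reduce \pname{$1$-Disconnected Matching} to \pname{Maximum Matching} by observing that every matching trivially has at least one connected component, so $(G,k,1)$ is a \textsc{yes}-instance if and only if $\beta(G) \geq k$. The only cosmetic difference is that you invoke Edmonds' blossom algorithm whereas the paper cites the $\bigO{|E|\sqrt{|V|}}$ algorithm of Micali and Vazirani; either suffices for membership in \P.
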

\begin{proof}
Solving the {\sc $1$-Disconnected Matching} decision problem is equivalent to ask if the answer to {\sc Maximum Matching} is $\geq k$. This equivalency is true because $G[M]$ can have any number of connected component in both. Therefore, {\sc $1$-Disconnected Matching} is in {\P} and can be solved in the same complexity of {\sc Maximum Matching}, which is $\bigO{|E|\sqrt{|V|}}$ \cite{vazirani}.
\end{proof}

Note that if $M$ is a solution to an instance $(G,k)$ of \pname{Connected Matching}, then it is also a solution to the instance $(G,k,1)$ of \pname{1-Disconnected Matching}.
Our next theorem shows that the converse is also true and, using the former theorem, that \pname{Maximum Matching} and \pname{Connected Matching} are also related. 

Based on the proof of Goddard et al.~\cite{GODDARD2005129} that $\beta(G) = \beta_c(G)$, a linear algorithm can be built to find a maximum connected matching, as described in the following theorem.


\begin{theorem}
\label{teo:connected-matching-complexity}
Given a maximum matching, a maximum connected matching can be found in linear time.
\end{theorem}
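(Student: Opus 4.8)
The plan is to lean on the identity $\beta(G)=\beta_c(G)$ for connected $G$, which is already recorded in the excerpt (item~4 of the bounds): a maximum connected matching has exactly $|M|$ edges, where $M$ is the given maximum matching, so it suffices to transform $M$ into a connected matching of the same cardinality. I would organize the whole procedure as a sequence of \emph{size-preserving} local swaps, each intended to strictly reduce the number of connected components of $G[M]$, starting from the components of $G[M]$ (computable in $O(|V|+|E|)$ time) and stopping once $G[V(M)]$ is connected.

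The structural engine is an observation forced by maximality. First, no edge of $G$ joins two distinct components of $G[M]$, since both of its endpoints would be $M$-saturated and hence already in the same component; thus distinct components can only be linked through unsaturated vertices. Second, if I take a shortest path in $G$ between two nearest components, every internal vertex must be unsaturated (otherwise the path could be shortcut through whatever component it meets), and no two internal vertices can be consecutive, because two adjacent unsaturated vertices would yield an edge extending $M$, contradicting maximality. Hence any two nearest components are joined by a single unsaturated vertex $w$, adjacent to some $a$ in one component and some $b$ in the other, with $aa',bb'\in M$.

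The merge operation is then to replace $aa'$ by $aw$ (symmetrically, $bb'$ by $bw$): this keeps $M$ a maximum matching, frees $a'$, saturates $w$, and, since $w$ is now adjacent in $G[V(M)]$ both to its partner $a$ and to $b$, in the generic case it fuses the two components into one. Iterating this $p-1$ times, where $p$ is the initial number of components, would produce a connected matching of size $|M|$. For the running time I would maintain the components with a union--find structure, store for each unsaturated vertex a list of the components it is incident to, and charge the cost of each swap to the freed vertex and to the edges it scans, so that the total telescopes to $O(|V|+|E|)$.

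The main obstacle is that freeing $a'$ can \emph{disconnect} its component when $a'$ is a cut vertex of it, so the naive swap need not lower the component count. I would resolve this by not releasing an arbitrary matched partner but, inside the component (which carries a perfect matching, as all its vertices are saturated and matched internally), first shifting the matching along an $M$-alternating path so that the released vertex becomes a non-cut vertex of the component, for instance a leaf of one of its spanning trees, all while keeping the bridge vertex $a$ matched to $w$. Proving that a suitable non-cut target is always reachable by such alternating shifts, and doing the relocation cheaply enough that the amortized cost over all merges stays linear, is the delicate point; the remaining bookkeeping is routine.
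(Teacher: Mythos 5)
Your structural observations (no edge joins two components of $G[M]$; by maximality of $M$ nearest components are bridged by a single unsaturated vertex $w$ adjacent to saturated $a$ and $b$) are correct, but the proof has a genuine gap, and it is exactly the one you flag yourself: the merge step is never shown to make progress. Replacing $aa'$ by $aw$ frees $a'$, and when $a'$ is a cut vertex of its component the swap need not decrease the number of connected components of $G[M]$ (it can even increase it), so the plan ``iterate $p-1$ times'' has no valid progress measure. The proposed repair --- shifting the matching along $M$-alternating paths inside the component so that the released vertex is a non-cut vertex --- is left unproven on both counts that matter: (i) you do not establish that a near-perfect matching of the component minus $a$ exposing a non-cut vertex always exists and is reachable by alternating shifts that keep $a$ matched to $w$; and (ii) even granting existence, each relocation may cost time proportional to the component's size, and over up to $\Theta(|V|)$ merges nothing in your argument rules out a quadratic total, so the claimed linear running time does not follow either.

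The paper avoids this obstacle entirely by changing the progress measure: rather than decreasing the number of components, it grows a \emph{single} component $S$, the one containing an arbitrary saturated root. Every vertex of $N(S)$ is unsaturated (a saturated neighbor would already lie in $S$), and if some $v \in N(S)$ has a neighbor $w \notin S$, then $w$ is saturated by maximality of $M$; replacing the matching edge at $w$ by $vw$ strictly enlarges $S$ --- by at least $\{v,w\}$ --- and it is completely irrelevant whether $w$'s old component shatters when its partner is freed, so cut vertices never enter the picture. When no vertex of $N(S)$ has a neighbor outside $S$, connectivity of $G$ together with the fact that no two unsaturated vertices are adjacent forces $S = V(M)$, i.e.\ $G[M]$ is connected and still maximum. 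Since $|S|$ only grows and each neighborhood is scanned $\bigO{1}$ times, linearity is immediate. If you replace your pairwise-merge scheme with this single-component growth, your swap operation and structural lemmas go through essentially verbatim and the cut-vertex difficulty disappears.
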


\begin{proof}

Let $M$ be the maximum matching of the input graph, and $r \in V(M)$.
We begin by running a BFS search on $G[M]$, starting at $r$, in order to obtain the connected component $S$ of $G[M]$ that contains $r$.

Note that every vertex in $N(S)$ is unsaturated, since $S$ is maximal. Also, by the maximality of $M$, for any unsaturated vertex $v$, every vertex in $N(v)$ is saturated. Hence, if a vertex $v \in N(S)$ has any neighbor $w$ outside $S$, we can change the matching by replacing the edge saturating $w$ by $vw$. Note that now we can expand $S$ by adding $v$, $w$ and possibly other vertices, proceeding again in a BFS-like way starting from $v$ and $w$. If there is a vertex $v$ in $N(S)$ that was not considered before, repeat this step.

This can be implemented in linear time, since we never need to look any neighborhood of a vertex of $S$ twice, and during the algorithm any time a vertex $v \in N(S)$ is considered, either $v$ is included in $S$ or its neighborhood is a subset of $S$ and $v$ can be ignored.
\end{proof}

Note that, to use the algorithm described in Theorem  \ref{teo:connected-matching-complexity}, it is necessary to have calculated a maximum matching before. To obtain a maximum matching, an algorithm with complexity $\bigO{|E|\sqrt{|V|}}$ is known~\cite{vazirani}. Therefore, this complexity is the same as the { \sc Maximum Connected Matching}.


\begin{corollary}
\pname{Maximum Connected Matching} has the same time complexity of \pname{Maximum matching}.
\end{corollary}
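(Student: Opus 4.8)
The plan is to combine Theorem~\ref{teo:connected-matching-complexity} with Theorem~\ref{teo:emp-1-desc} (or equivalently the classical bound $\beta(G) = \beta_c(G)$ of Goddard et al.) to pin down the exact running time. The statement to prove is the corollary asserting that \pname{Maximum Connected Matching} can be solved in the same asymptotic time as \pname{Maximum Matching}, namely $\bigO{|E|\sqrt{|V|}}$.

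First I would observe that Theorem~\ref{teo:connected-matching-complexity} does not by itself solve the problem from scratch: it assumes that a maximum matching has already been computed, and then refines it into a maximum connected matching in linear time. So the proof of the corollary is a two-phase argument. In the first phase, I would invoke the Micali--Vazirani algorithm~\cite{vazirani} to compute an arbitrary maximum matching $M$ of $G$ in $\bigO{|E|\sqrt{|V|}}$ time. In the second phase, I would feed $M$ into the linear-time procedure described in Theorem~\ref{teo:connected-matching-complexity}, which produces a connected matching whose cardinality equals $\beta(G)$, and hence equals $\beta_c(G)$ by fact~(4) in the preliminaries. Adding the two costs gives $\bigO{|E|\sqrt{|V|}} + \bigO{|V| + |E|} = \bigO{|E|\sqrt{|V|}}$, since the linear term is dominated.

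The only genuine subtlety — and the step I would be most careful about — is justifying that the linear-time refinement of Theorem~\ref{teo:connected-matching-complexity} is correct, i.e.\ that it really outputs a \emph{maximum} connected matching rather than merely a large connected one. This, however, is already established by that theorem and the underlying equality $\beta(G) = \beta_c(G)$: since the refinement preserves the cardinality of $M$ at every re-augmentation step (it only swaps one saturating edge for another, never decreasing $|M|$) and terminates with a connected component that is a maximum matching of $G$, its output has exactly $\beta(G) = \beta_c(G)$ edges and is therefore optimal. I would state this explicitly so the reader sees why correctness is inherited rather than re-proved.

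Thus the corollary is essentially an accounting argument layered on top of the two preceding theorems: the hard algorithmic content lives in Theorem~\ref{teo:connected-matching-complexity}, and the corollary's role is simply to package the matching-computation cost together with the linear refinement cost and note that the former dominates. I would also remark that this matches the lower-bound intuition, since \pname{Connected Matching} is at least as hard as deciding $\beta(G) \geq k$, so no asymptotically faster algorithm via this route is possible without also improving \pname{Maximum Matching} itself.
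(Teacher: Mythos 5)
Your proposal is correct and follows essentially the same route as the paper: compute a maximum matching in $\bigO{|E|\sqrt{|V|}}$ time via the algorithm of~\cite{vazirani}, then apply the linear-time refinement of Theorem~\ref{teo:connected-matching-complexity}, observing that the matching computation dominates the total cost. Your additional remarks on inherited correctness (via $\beta(G) = \beta_c(G)$) and on the lower-bound direction merely make explicit what the paper leaves terse.
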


Some classes, such as trees and block graphs, have linear time algorithms for the { \sc Maximum Matching}\cite{BrunoMasquio:2019:TeseMestrado}
\cite{SAVAGE1980202}. Hence, for these classes, the algorithm we describe shows that { \sc Maximum Connected Matching } can also be solved in linear time.

\subsection{$2$-disconnected matchings}\label{subsec:2-disc-np-complete}

Next, we show that \pname{$2$-Disconnected Matching} is {\NPc} for bipartite graphs with bounded diameter. Our reduction is from the {\NPH} problem \pname{One-in-three 3SAT}~\cite{garey_johnson}; in this problem, we are given a set of $m$ clauses $I$ with exactly three literals in each clause, and asked if there is a truth assignment of the variables such that only one literal of each clause resolves to true.
We consider that each variable must be present in at least one clause and that a variable is not repeated in the same clause. This follows from the original {\NPcness} reduction \cite{schaefer_1_3_sat}.

\subsubsection{Input transformation in {\sc One-in-three 3SAT}.}\label{subsec:transformacaoEntrada}

We use $k = 12m$ and build a bipartite graph $G = (V_1 \dot{\cup} V_2,E)$ from a set of clauses $I$ as follows.

\begin{enumerate}[(I)]

    \item For each clause $c_i$, generate a subgraph $B_i$ as described below.

    \begin{itemize}
        \item $V(B_i) = \{ l_{ij}, r_{ij}$ $|$ $j \in \{1,\ldots,9 \} \}$
        \item $E(B_i)$ is as shown in Figure \ref{fig:subgrafoBi}.
        
    \end{itemize}

    \begin{figure}
    \centering
    \includegraphics[width=0.45\textwidth]{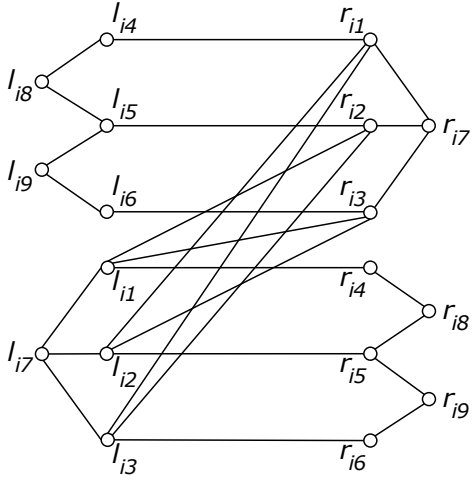}
    \caption{The subgraph $B_i$, related to clause $c_i$}
    \label{fig:subgrafoBi}
    \end{figure}

    \item For each variable $x$ present in two clauses $c_i$ and $c_j$, being the $q$-th literal of $c_i$ and the $t$-th literal of $c_j$, add two edges. If $x$ is negated in exactly one of the clauses, add the set of edges $\{r_{iq}l_{jt}, l_{i(q+3)}r_{j(t+3)} \}$. Otherwise, add $\{ l_{i(q+3)}r_{jt}, r_{iq}l_{j(t+3)} \}$.
    \begin{itemize}
        \item As an example, consider $c_2 = (x \vee y \vee \overline{z})$ and $c_5 = (d \vee \overline{x} \vee \overline{g})$. Variable $x$ is present in $c_2$ as the first literal and, in $c_5$, as the second literal. Besides, $x$ is negated only in $c_5$. Hence, we add the edge set $\{ (r_{21},l_{52}), (l_{24},r_{55}) \}$. In Figure \ref{fig:exemplo2}, we present this example, but omit some edges for better visualization.
    \end{itemize}
    
    \begin{figure}
    \centering
    \includegraphics[width=0.9\textwidth]{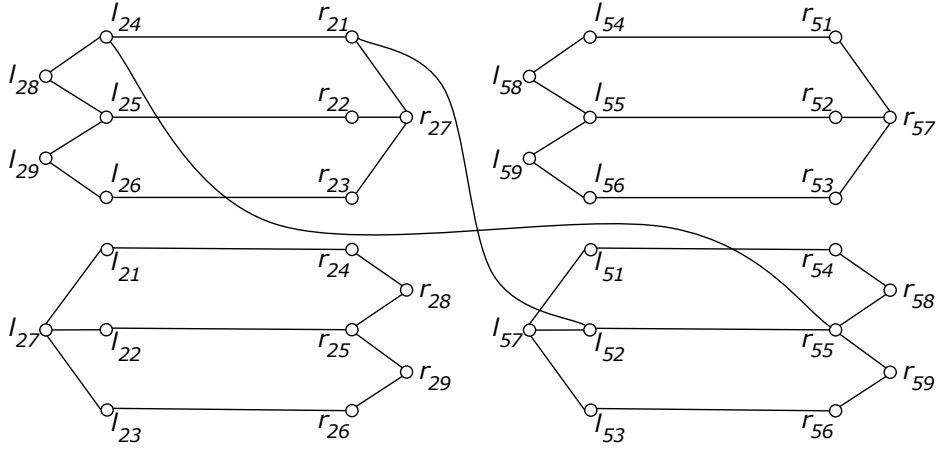}
    \caption{The simplified subgraph $G[V(B_2) \cup V(B_5)]$ for an instance with the clauses $c_2 = (x \vee y \vee \overline{z})$ e $c_5 = (d \vee \overline{x} \vee \overline{g})$}
    \label{fig:exemplo2}
    \end{figure}
    
    \item Generate two complete bipartite subgraphs $H_1$ and $H_2$, both isomorphic to $K_{3m,3m}$, $V(H_1) = V(U_1) \dot{\cup} V(U_2)$ and $V(H_2) = V(U_3) \dot{\cup} V(U_4)$.

    \item For each $u_2 \in V(U_2)$ and clause $c_i$, add the edge set $\{ u_2l_{ij} \mid j \in \{1,\ldots,6 \} \}$.

    \item For each $u_3 \in V(U_3)$ and clause $c_i$, add the edge set $\{ u_3r_{ij} \mid j \in \{1,\ldots,6 \} \}$. 

\end{enumerate}


Besides $G$ being bipartite, as shown in Figure \ref{fig:subgrafoBiComH1eH2}, it is possible to observe that its diameter is $5$, regardless of the set of clauses and its cardinality. This holds due to the distance between, for example, $u_1$ and $u_4$, $u_1 \in V(U_1)$, $u_4 \in V(U_4)$, as well as $l_{i7}$ and $r_{j7}$, $i,j \in [m]$, distinct, such that the clauses $i$ and $j$ do not have literals related to the same variable. Also, consider $G_i^+ =  G[V(B_i) \cup V(H_1) \cup V(H_2)]$. Note that $|V(G)| = \bigO{m}$.

 \begin{figure}
    \centering
    \includegraphics[width=0.8\textwidth]{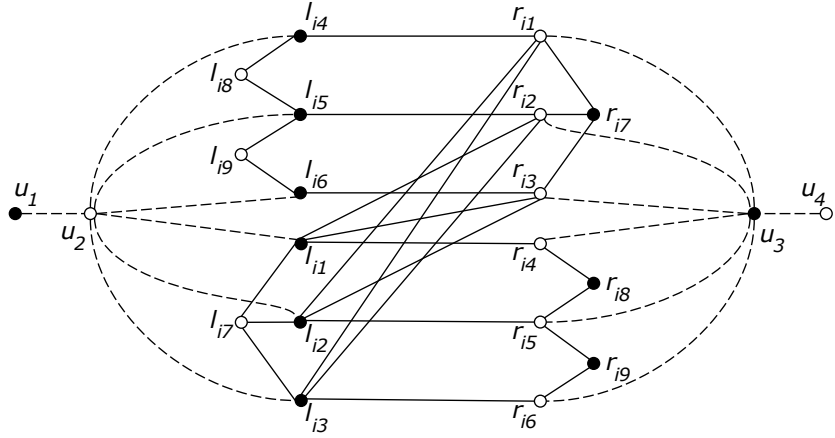}
    \caption{The subgraph $G[V(B_i) \cup \{ u_1, u_2, u_3, u_4 \}]$, $u_1 \in V(U_1)$, $u_2 \in V(U_2)$, $u_3 \in V(U_3)$ and $u_4 \in V(U_4)$. The bold vertices represent a bipartition of $G$.}
    \label{fig:subgrafoBiComH1eH2}
    \end{figure}

We denote $A$ by the subgraph induced by the vertices of $V(G) \setminus V(H_1) \setminus V(H_2)$. Note that $A$ is exactly the subgraph induced by the vertex set of $B_i$, $i \in [m]$. 

Observe that $|V(G)| = 30m$, since $V(G) = V(H_1) \cup V(H_2) \cup V(A)$. Besides, $(45m^2 + 26m) \leq |E(G)| \leq (48m^2 + 23m)$, as the following amounts of edges are generated in the construction. There are $26m$ in (I), $9m^2 $ in (III), $36m^2$ in (IV) and (V). In (II), we can have from $0$ to $3m^2 - 3m$, since each pair of clause subgraphs can have from $0$ to $6$ edges between each other.



\subsubsection{Properties of disconnected matchings in the generated graphs.}\label{subsec:propriedadesEmpDesconexos}

We now prove some properties of the disconnected matching with cardinality at least $k$ in a graph $G$ generated by the transformation described.

Initially, we show, from Lemmas \ref{lemma:H1H2matched} and \ref{lemma:duasComponentesConexas}, that a subgraph induced by the saturated vertices of such matching has exactly two connected components, one containing vertices of $H_1$ and the other, vertices of $H_2$. Afterwards, Lemma \ref{lemma:6-arestas} shows the possible sets of edges contained in the matching. 

\begin{lemma}\label{lemma:H1H2matched}
If $M$ is a disconnected matching with cardinality $k \geq 12m$, then there exists two saturated vertices $h_1 \in V(H_1)$ and $h_2 \in V(H_2)$.
\end{lemma}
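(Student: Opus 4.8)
The plan is to argue by a vertex count combined with a connectivity collapse. Since $|V(G)| = 30m$, any matching with $|M| = k \geq 12m$ saturates at least $24m$ vertices and hence leaves at most $6m$ vertices unsaturated. As $|V(H_1)| = |V(H_2)| = 6m$, the point is that each biclique is too large to be left entirely untouched without collapsing $G[M]$ into a single component. I would prove the two conclusions separately: first that some $h_1 \in V(H_1)$ is saturated, and then, by the symmetry between the roles of $(H_1, U_2, \{l_{ij}\})$ and $(H_2, U_3, \{r_{ij}\})$, that some $h_2 \in V(H_2)$ is saturated.

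For the first claim I would assume, for contradiction, that no vertex of $V(H_1)$ is saturated. Then all $6m$ vertices of $H_1$ are unsaturated; since at most $6m$ vertices can be unsaturated in total, this forces $|M| = 12m$ exactly and every vertex outside $H_1$ — that is, all of $V(A)$ and all of $V(H_2)$ — to be saturated. The key step is then to show that $G[M] = G[V(A) \cup V(H_2)]$ is connected. Here I would use that $H_2 = G[V(U_3) \cup V(U_4)]$ is complete bipartite, hence connected; that each gadget $B_i$ is connected (Figure~\ref{fig:subgrafoBi}) with all its vertices saturated; and that, by construction rule (V), every $u_3 \in V(U_3)$ is adjacent to $r_{i1}$, so each $B_i$ attaches to $H_2$ inside $G[M]$. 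Consequently every saturated vertex lies in a single component, so $G[M]$ is connected, contradicting the hypothesis that $M$ is disconnected.

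The second claim follows by the same argument with the roles of the two bicliques exchanged: assuming no vertex of $V(H_2)$ is saturated forces all of $V(A) \cup V(H_1)$ to be saturated, and now rule (IV) — every $u_2 \in V(U_2)$ is adjacent to $l_{i1}$ — together with the connectedness of $H_1$ and of each $B_i$ again makes $G[M]$ connected, a contradiction. I expect the main obstacle to be the connectivity collapse rather than the counting: the count is immediate, but arguing that $G[V(A) \cup V(H_2)]$ is connected relies on the structural facts that each clause gadget $B_i$ is internally connected and links to $U_3$ through a low-index $r$-vertex (and symmetrically to $U_2$ through a low-index $l$-vertex). Should some $B_i$ fail to be connected, I would instead verify that every piece of it still carries a saturated $r_{ij}$ with $j \leq 6$; but the gadget of Figure~\ref{fig:subgrafoBi} is connected, so this complication does not arise.
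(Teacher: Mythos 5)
Your proposal is correct and follows essentially the same route as the paper: a counting argument showing that $24m$ vertices must be saturated while $|V(A)|=18m$ and $|V(H_1)|=|V(H_2)|=6m$, so that avoiding one biclique forces a perfect matching on the remaining vertices, whose induced subgraph is connected --- contradicting disconnectedness. The only difference is that you spell out the connectivity collapse (connectedness of each $B_i$, of the biclique, and the attachment via rules (IV)/(V)), which the paper merely asserts in one clause, so your write-up is if anything more complete.
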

\begin{proof}
In order to obtain $M$ with cardinality $k$, it is necessary that $2k \geq 2\cdot(12m) = 24m$ vertices are saturated by $M$. Note that $|V(A)| = 18m$. Since we are looking for a $k$ cardinality matching, then, even if all the vertices of $A$ were saturated, we would have, at most, $18m$ vertices. Therefore, for $M$ to saturate $2k$ vertices, we need to use vertices of $V(H_1) \cup V(H_2)$. Note that, similarly, if vertices of $A$ and only one of the subgraphs $H_1$ or $H_2$, $|V(H_1)| = |V(H_2)| = 6m$, we will have a maximum of $24m$ vertices, however, the matching would be perfect in the subgraph and, therefore, connected. Thus, it is necessary that there are at least two vertices $h_1 \in V(H_1)$ and $h_2 \in V(H_2)$ saturated by $M$.
\end{proof}

\begin{lemma}\label{lemma:duasComponentesConexas}
If $M$ is a disconnected matching with cardinality $k \geq 12m$, then $G[M]$ has exactly two connected components.
\end{lemma}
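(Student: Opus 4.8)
The plan is to reduce the statement to a single claim: \emph{no connected component of $G[M]$ is contained entirely in $A$}. First I would record the consequences of Lemma~\ref{lemma:H1H2matched} that I actually need. Since $M$ saturates a vertex of $H_1 = G[V(U_1)\cup V(U_2)]$, and the only neighbours of $U_1$ lie in $U_2$, that saturated vertex is either already in $U_2$ or lies in $U_1$ and is therefore matched to a vertex of $U_2$; either way $M$ saturates some $u_2 \in V(U_2)$, and symmetrically some $u_3 \in V(U_3)$. This is the foothold for everything else.

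Next I would exploit the completeness of the joins created in steps (IV) and (V). By (IV), $u_2$ is adjacent in $G$ to every $l_{ij}$ with $j \in \{1,\dots,6\}$ and to every vertex of $U_1$, and because $G[M]$ is an \emph{induced} subgraph, $u_2$ is adjacent in $G[M]$ to every saturated vertex among these. Hence all saturated vertices of $U_1 \cup U_2$ together with all saturated low-index vertices $l_{ij}$ ($j \le 6$) collapse into a single component $C_1$ of $G[M]$. The mirror argument through (V) places all saturated vertices of $U_3 \cup U_4$ and all saturated low-index $r_{ij}$ ($j \le 6$) in a single component $C_2$.

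With $C_1$ and $C_2$ isolated, the proof closes once I know every saturated vertex lies in $C_1 \cup C_2$: then each component of $G[M]$ equals $C_1$ or $C_2$, giving at most two components; since a disconnected matching forces at least two, this yields exactly two (and in particular $C_1 \ne C_2$, as $C_1 = C_2$ would make $G[M]$ connected). So the whole burden falls on the high-index vertices $l_{ij}, r_{ij}$ with $j \in \{7,8,9\}$, the only ones not joined to $U_2 \cup U_3$. Moreover the inter-clause edges of step (II) involve only indices at most $6$, so such a vertex can reach the rest of $G[M]$ solely through the internal edges of its own gadget $B_i$. I would therefore argue that any saturated high-index vertex has, among the saturated vertices, a path inside $B_i$ to a saturated low-index vertex, which by the previous paragraph belongs to $C_1$ or $C_2$.

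The main obstacle is precisely this last step, and it is where the size bound does its work: it requires reading off the adjacencies of $B_i$ from Figure~\ref{fig:subgrafoBi} and combining them with the tight counting behind $k = 12m$, which leaves only $6m$ unsaturated vertices among the $30m$ of $V(G)$. The delicate case to exclude is a saturated high-index vertex whose only saturated neighbours are themselves high-index and mutually matched, forming a component inside $A$. I expect to rule this out by showing that saturating such an isolated high-index cluster forces too many of its low-index gadget-neighbours to remain unsaturated, so that $M$ can no longer reach the required $2k = 24m$ saturated vertices. Verifying that this accounting is genuinely tight, leaving the gadget no budget for a third component, is the crux of the argument.
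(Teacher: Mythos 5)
Your overall decomposition is sound and, up to its last step, coincides with the paper's argument: a saturated vertex of $U_2$ (resp.\ $U_3$) absorbs every saturated vertex of $U_1 \cup U_2$ and every saturated $l_{ij}$, $j \le 6$ (resp.\ $U_3 \cup U_4$ and $r_{ij}$, $j \le 6$) into a single component, and ``at most two components plus at least two components equals exactly two'' finishes the proof. The gap is that you never discharge the crux you yourself flag: why no saturated high-index vertex can sit outside $C_1 \cup C_2$. The paper settles this with a one-line structural fact, not with counting: the set $\{ l_{i7}, l_{i8}, l_{i9}, r_{i7}, r_{i8}, r_{i9} \mid i \in [m]\}$ is an independent set of $G$ --- every edge of the graph has at least one endpoint outside it. Consequently, a saturated high-index vertex is matched, by its own matching edge, to a low-index vertex of its gadget; that partner is saturated and adjacent to $u_2$ or $u_3$, so the high-index vertex lands in $C_1$ or $C_2$ immediately. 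No path inside $B_i$ and no accounting of unsaturated vertices is needed.

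Your proposed counting substitute does not fill the hole. The ``delicate case'' you want to exclude --- a cluster of mutually matched high-index vertices --- is not something a cardinality bound can dispose of on its own: whether such a matched edge can exist depends exactly on the gadget adjacencies you defer to, and once those adjacencies are read off, the case is vacuous (there are no edges between two high-index vertices) rather than expensive. Conversely, if such edges did exist, a tight count of saturated vertices would not by itself forbid a third component; one would be forced into the detailed per-gadget analysis that the paper only performs later, in Lemma~\ref{lemma:6-arestas}, and the present lemma is needed as an ingredient of that analysis, not the other way around. So as written the proof is incomplete at its decisive step, and the route you sketch for completing it is both unnecessary given the construction and not workable without the very structural information it tries to avoid.
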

\begin{proof}
From Lemma \ref{lemma:H1H2matched}, we know that $M$ saturates $h_1 \in V(H_1)$ and $h_2 \in V(H_2)$ by two edges, $(h_1, v_1)$ and $(h_2, v_2)$. Note that, due to the graph structure, every edge $e$ saturated by $M$ is incident to at least one vertex of $V(G) \setminus \{ l_{i7}, l_{i8}, l_{i9}, r_{i7}, r_{i8}, r_{i9} \}$, $i \in [m] $, as every edge of the graph has this property. Then, one end of $e$ is adjacent to any of the vertices in $\{ h_1, v_1, h_2, v_2 \}$. Therefore, $G[M]$ has exactly two connected components $C_1$ and $C_2$ such that $h_1 \in V(C_1)$ and $h_2 \in V(C_2)$.
\end{proof}

\begin{lemma}\label{lemma:6-arestas}
Let $M$ be a disconnected matching with cardinality $k \geq 12m$ and $B_i$ be a clause subgraph. There are exactly $6$ edges saturated by $M$ in $G[V(B_i)]$ and, moreover, there are exactly $3$ sets of edges that satisfy this constraint.
\end{lemma}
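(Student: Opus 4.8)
The plan is to combine a global vertex-counting argument with the two-component structure already supplied by Lemmas~\ref{lemma:H1H2matched} and~\ref{lemma:duasComponentesConexas}. Throughout, write $s_i$ for the number of vertices of $V(B_i)$ saturated by $M$, and $h_1,h_2$ for the number of saturated vertices of $V(H_1)$ and $V(H_2)$; since the gadgets $B_i$ partition $V(A)$ and $V(G)=V(A)\,\dot\cup\,V(H_1)\,\dot\cup\,V(H_2)$, every saturated vertex is counted exactly once, giving $2|M| = \sum_{i\in[m]} s_i + h_1 + h_2$, and the hypothesis $|M|\geq 12m$ yields $\sum_i s_i + h_1 + h_2 \geq 24m$. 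The first key observation is that, because $G[M]$ is the subgraph \emph{induced} by the saturated vertices, the two components $C_1\ni h_1$ and $C_2\ni h_2$ of Lemma~\ref{lemma:duasComponentesConexas} impose a clean labelling on the saturated vertices of each $B_i$. By Lemma~\ref{lemma:H1H2matched} some vertex of $V(H_1)$ is saturated; as every vertex of $V(U_1)$ is adjacent only to $V(U_2)$, it follows that $V(U_2)$ contains a saturated vertex, and this vertex lies in $C_1$. By construction step (IV) every low-left vertex $l_{ij}$ with $j\in\{1,\dots,6\}$ is adjacent to all of $V(U_2)$, so if such an $l_{ij}$ is saturated it shares a component with that $V(U_2)$ vertex, i.e.\ it lies in $C_1$; symmetrically, by step (V), every saturated $r_{ij}$ with $j\leq 6$ lies in $C_2$. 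Consequently, within each $B_i$ no saturated low-left vertex may be $G$-adjacent to a saturated vertex of $C_2$, and no saturated low-right vertex to a saturated vertex of $C_1$.

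Next I would establish the bound $s_i\leq 12$ for every clause and turn it into equalities. Inspecting $G[V(B_i)]$ in Figure~\ref{fig:subgrafoBi} together with the separation just derived, one checks that any saturable set of vertices of $B_i$ that the $C_1/C_2$ labelling splits into two $G$-non-adjacent parts has at most $12$ elements: saturating more than $12$ vertices necessarily forces an edge between the $l$-side (in $C_1$) and the $r$-side (in $C_2$), contradicting Lemma~\ref{lemma:duasComponentesConexas}. Granting $s_i\leq 12$, we obtain $12m \geq \sum_i s_i = 2|M| - h_1 - h_2 \geq 24m - h_1 - h_2 \geq 12m$, so every inequality is tight: $s_i = 12$ for all $i$, $h_1 = h_2 = 6m$, and $|M| = 12m$. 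In particular all of $V(H_1)$ is saturated, so all $3m$ vertices of $V(U_1)$ are; since $V(U_1)$ is adjacent only to $V(U_2)$, this requires a perfect matching between $V(U_1)$ and $V(U_2)$, leaving no vertex of $V(U_2)$ free to be matched to a vertex of some $B_i$, and the same holds for $V(U_3),V(U_4)$. Hence no edge of $M$ joins $V(B_i)$ to $V(H_1)\cup V(H_2)$. Moreover, by step (II) every inter-clause edge joins a low-left vertex (in $C_1$) of one gadget to a low-right vertex (in $C_2$) of another; such an edge would place both its endpoints in a single component if both were saturated, so no inter-clause edge lies in $M$. Therefore every saturated vertex of $B_i$ is matched \emph{inside} $G[V(B_i)]$, and the $12$ saturated vertices of $B_i$ yield exactly $6$ edges of $M$ in $G[V(B_i)]$.

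It remains to enumerate these $6$-edge configurations, and this case analysis is where I expect the real work to lie. With external and inter-clause edges excluded and the $l$-side/$r$-side labelling fixed, I would read off the edges of $B_i$ from Figure~\ref{fig:subgrafoBi} and determine all matchings that saturate $12$ of the $18$ vertices while separating the saturated set into $G$-non-adjacent $C_1$- and $C_2$-parts. The gadget is engineered so that these are precisely the constraints encoding the \pname{One-in-three 3SAT} semantics: the high vertices $l_{i7},l_{i8},l_{i9},r_{i7},r_{i8},r_{i9}$ can be covered in exactly three mutually exclusive ways, one for each choice of the literal of $c_i$ set to true. The main obstacle is verifying that no fourth configuration survives the separation constraint and that each of the three genuinely attains $6$ edges; I would present the three admissible edge sets explicitly, both to discharge this step and to make the subsequent \NPc{} reduction (translating a satisfying assignment into a matching and back) transparent.
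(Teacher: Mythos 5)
Your global framework is sound and, in fact, follows the same route as the paper: invoke Lemmas~\ref{lemma:H1H2matched} and~\ref{lemma:duasComponentesConexas} to split the saturated vertices into the component $C_1$ (containing a saturated vertex of $V(U_2)$) and $C_2$ (containing one of $V(U_3)$), observe that every saturated $l_{ij}$, $j \leq 6$, is forced into $C_1$ and every saturated $r_{ij}$, $j \leq 6$, into $C_2$, and then use a counting argument to force $s_i = 12$ for every gadget, full saturation of $H_1$ and $H_2$, and the exclusion of matching edges leaving a gadget. Your counting chain $2|M| = \sum_i s_i + h_1 + h_2$ with tightness everywhere is actually written more explicitly than in the paper, and your exclusion of inter-clause edges and of edges into $H_1 \cup H_2$ is correct.

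However, there is a genuine gap: the two verifications on which everything rests are never carried out. First, the bound $s_i \leq 12$ is asserted with ``one checks''; the paper proves it from the gadget's edge structure, namely that $l_{ij}$ and $r_{i(j+3)}$ are adjacent for each $j \in \{1,\dots,6\}$, so each of these six disjoint pairs contributes at least one unsaturated vertex to the $u_2$--$u_3$ separator $S_i \subseteq V(B_i)$, giving $|S_i| \geq 6$ and hence $|W_i| \leq 12$. Without this (or an equivalent) structural fact, your counting chain cannot be closed. Second, and more importantly, the claim that there are \emph{exactly three} admissible $6$-edge sets --- which is half of the lemma's statement and precisely the part that encodes the \pname{One-in-three 3SAT} semantics used later in Lemma~\ref{lemma:NPHardVolta} --- is explicitly deferred (``this case analysis is where I expect the real work to lie''). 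The paper discharges it via four structural properties of $B_i$ (e.g., two saturated vertices $l_{ij}, l_{iq}$ with $j,q \in \{1,2,3\}$ exclude $r_{ij}, r_{iq}, r_{it}, r_{i7}$; saturation of $l_{i8}, l_{i9}$ requires saturation of corresponding vertices among $l_{i4}, l_{i5}, l_{i6}$), deduces that $W_i$ must consist of $\{l_{ij}, r_{ij}\}$ for a single $j \in \{1,2,3\}$ together with $\{l_{i(q+3)}, r_{i(q+3)} \mid q \neq j\}$ and all of $\{l_{i7}, l_{i8}, l_{i9}, r_{i7}, r_{i8}, r_{i9}\}$, and then observes that each such vertex set supports exactly one matching saturating it (Table~\ref{tab:separadoresBi}, Figure~\ref{fig:separadoresBi}). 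As it stands, your proposal reduces the lemma to an unperformed inspection of Figure~\ref{fig:subgrafoBi}, so it does not yet establish either the bound or the enumeration.
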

\begin{proof}
Let $M$ be a $2$-disconnected matching in $G$, $|M| \geq 12m$, and $B_i$ be a clause subgraph. From Lemma \ref{lemma:H1H2matched}, we know that there are two saturated vertices $u_2 \in V(U_1)$ and $u_3 \in V(U_3)$. Also, Lemma \ref{lemma:duasComponentesConexas} shows that any other saturated vertex in the graph must be in one of the two connected components of $G[M]$ containing $u_2$ or $u_3$. Therefore, there is a $u_2-u_3$ separator $S_i$ not saturated in $B_i$.

For the rest of the proof, we use $S_i$ separators with cardinality $6$, so that all the $12$ vertices of $V(B_i) \setminus S_i$ will be saturated by $M$. We prove that there are only $3$ separators of this type, due to the following properties.

\begin{enumerate}
    \item The vertex pairs $l_{ij}$ and $r_{i(j+3)}$ cannot be saturated simultaneously. Thus, $S_i$ contains at least one of these two vertices.
    \item If there are two saturated vertices $l_{ij}$ and $l_{iq}$, then the four vertices $r_{ij}$, $r_{iq}$, $r_{it}$ and $r_{i7}$ cannot be saturated, $j,q,t \in \{ 1,2,3 \}$, distinct.
    \item The vertex $l_{ij}$ cannot be saturated simultaneously with $r_{iq}$ or $r_{it}$, $j,q,t \in \{ 1,2,3 \}$, distinct.
    \item The number of saturated vertices of $\{ l_{i8}, l_{i9} \}$ must be at most the number of saturated vertices of $\{ l_{i4},l_{i5},l_{i6} \}$.
\end{enumerate}

Consider $W_i$ the set of vertices in $V(B_i) \setminus S_i$ that could possibly be saturated by $M$. From Property $1$, we can see that $| S_i | \geq 6$. Thereby, $| W_i | \leq 12 $, that is, the largest number of saturated vertices in a clause subgraph is $12$. Next, we'll show that there are only $3$ separators and $W_i$ sets of that type. If $|S_i| = 6$, which is its minimum cardinalty, then, given Property $2$, there can only be a single saturated vertex $l_{ij}$, $j \in \{ 1,2,3 \}$. In addition, for $r_{i7}$ to be in $W_i$, a vertex $r_{it}$, $t \in \{ 1,2,3 \}$ must be in $W_i$ as well. Given Property $3$, the only possibility is if $t = j$. Thus, the vertices $l_{i (j + 3)} $ and $r_{i (j + 3)}$ cannot belong to $W_i$. In addition, from Property $4$, for $l_{i8}$ and $l_{i9}$ to be in $W_i$, then two vertices $l_{i (q + 3)}$ are in $W_i$ as well, $q \in \{ 1,2,3 \}$. The only possibility of this occurring is if $q \neq j$. Analogously, the same is true for the vertices $r_{i (q + 3)}$. Finally, we can define the set described as $W_i =$ $\{ l_{ij}, r_{ij}, l_{i(q+3)}, r_{i(q+3)}, l_{i(t+3)}, r_{i(t+3)}$ $|$ $ j,q,t \in \{ 1,2,3 \}, \text{distinct} \}$ $\cup$ $\{ l_{ij}, r_{ij} \mid j \in \{ 7,8,9 \} \}$. Therefore, there are only $3$ possibilities for the set $W_i$, which are shown in Table \ref{tab:separadoresBi} and in Figure \ref{fig:separadoresBi}. Moreover, there is exactly one corresponding saturated edge set for each of the vertex sets, shown in Figure \ref{fig:separadoresBi}.

\begin{figure}%
        \centering
        \subfloat[]{{\includegraphics[width=0.5\textwidth]{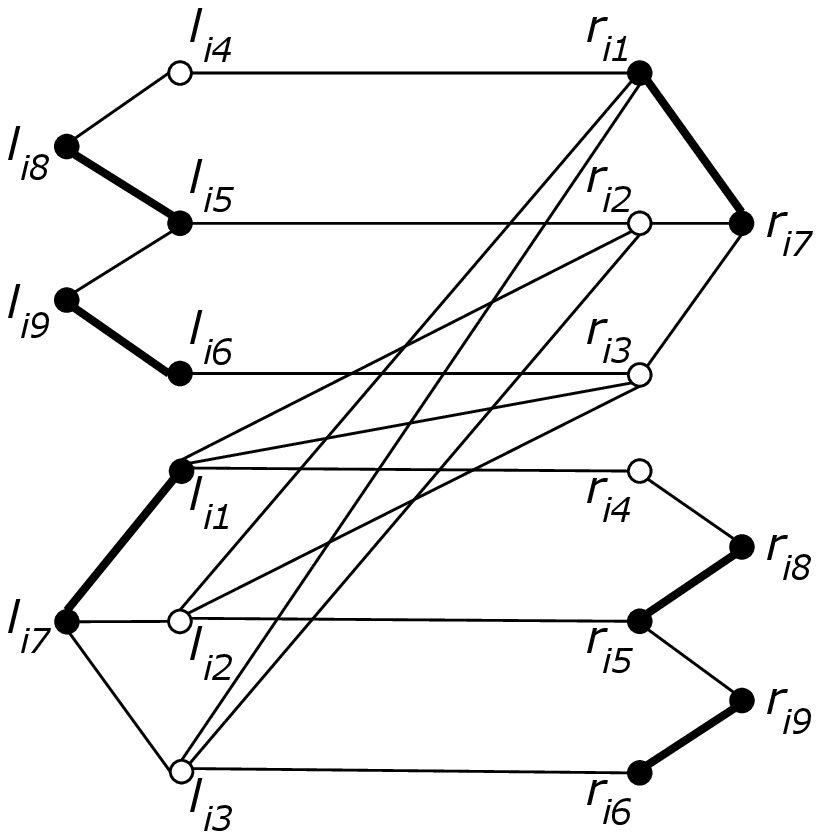} }}
        \subfloat[]{{\includegraphics[width=0.5\textwidth]{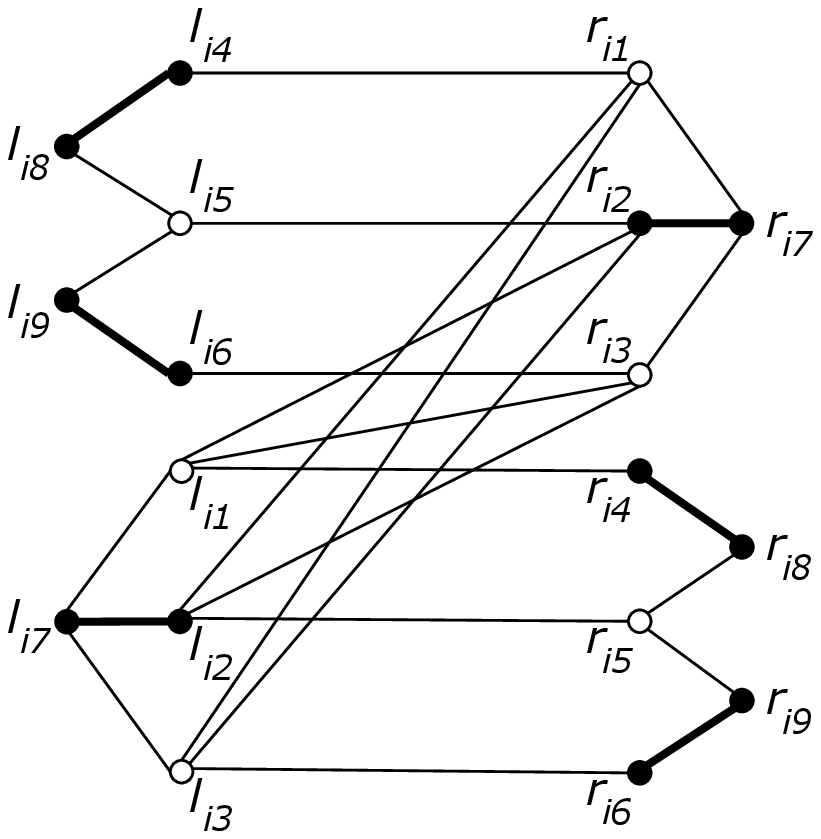} }}
        
        \subfloat[]{{\includegraphics[width=0.5\textwidth]{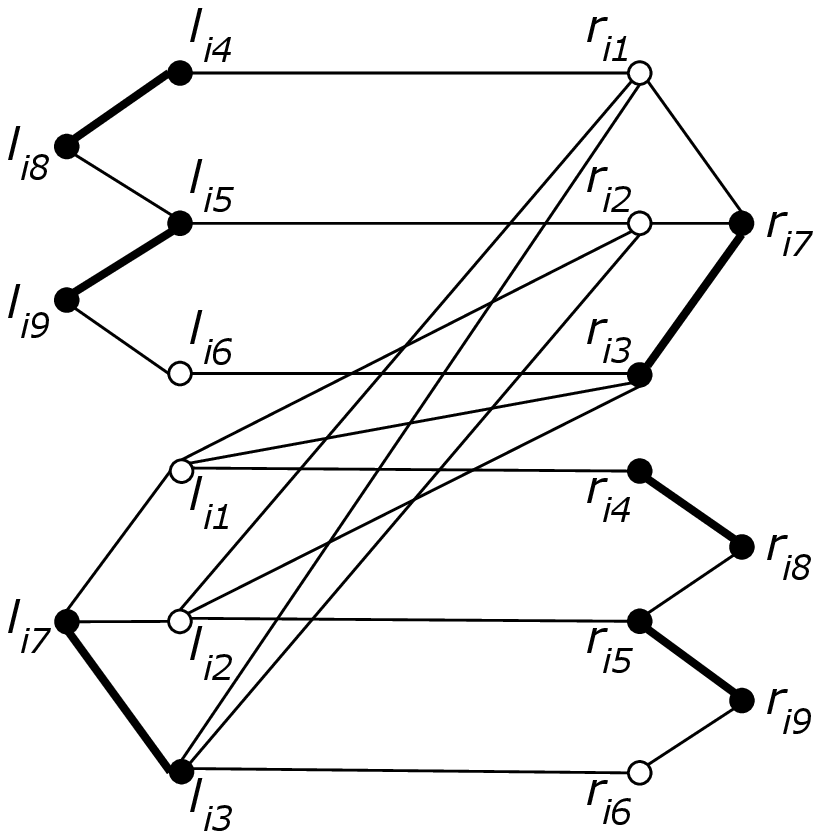} }}
        \caption{Subgraph $B_i$ with, in bold, the three sets of vertices possibly saturated with cardinality $12$ from Table \ref{tab:separadoresBi}.}%
        \label{fig:separadoresBi}
\end{figure}

\end{proof}

\subsubsection{Transforming a disconnected matching into a variable assignment.}\label{subsubsec:emparelhamentoDesconexoemAtribuicao}

First, we define, starting from a $2$-disconnected matching $M$, $|M| = 12m$, a variable assignment $R$ and, in sequence, we present Lemma \ref{lemma:NPHardVolta}, proving that $R$ is a {\sc One-in-three 3SAT} solution.

\begin{enumerate}[(I)]
    \item For each clause $c_i$, where $x_{ij}$ corresponds to the $j$-th literal of $c_i$, generate the following assignments.
    
    \begin{itemize}
        \item If $l_{ij}$ is $M$-saturated, then assign $x_{ij} = T$.
        \item Otherwise, assign $x_{ij} = F$.
    \end{itemize}
\end{enumerate}

Note that, analyzing the generated graph, the pair of saturated vertices $l_{ij}$ and $r_{ij} $, $j \in \{1,2,3 \}$ define that the $j$-th literal is the true of the clause $c_i$. Similarly, each pair of saturated vertices $l_{iq}$ and $r_{iq}$, $q \in \{4,5,6 \}$, $q \neq j + 3$, defines that the $(q-3)$-th literal is false.

\begin{lemma}\label{lemma:NPHardVolta}
Let $M$ be a $2$-disconnected matching with cardinality $k = 12m$ in a graph generated from a input $I$ of {\sc One-in-three 3SAT}. It is possible to generate in polynomial time an assignment to variables in $I$ that solves {\sc One-in-three 3SAT}.
\end{lemma}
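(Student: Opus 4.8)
The plan is to check that the assignment $R$ produced by rule (I) is both well defined on variables and a valid \pname{One-in-three 3SAT} solution; its polynomial construction is clear, since $R$ only inspects, for each literal, whether the vertex $l_{ij}$ is $M$-saturated. First I would record the local behaviour of $M$ inside each clause gadget. By Lemma~\ref{lemma:6-arestas}, the edges of $M$ inside each $B_i$ realise one of exactly three configurations, and in every one of them precisely one pair $l_{ij},r_{ij}$ with $j\in\{1,2,3\}$ is saturated while the other two positions $p\in\{1,2,3\}$ have their ``false'' pairs $l_{i(p+3)},r_{i(p+3)}$ saturated. Hence, for each position $p$, exactly one of ``$l_{ip}$ saturated'' (literal $p$ true) and ``$l_{i(p+3)}$ saturated'' (literal $p$ false) holds, and through rule (I) this means that under $R$ exactly one literal of $c_i$ is true and the other two are false. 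This already gives the one-in-three property inside each clause, provided the values are globally consistent.

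The core of the argument is this global consistency, for which I would combine Lemma~\ref{lemma:duasComponentesConexas} with the way the two components split the vertices. By constructions (IV) and (V), any saturated $l_{ij}$ with $j\le 6$ is adjacent in $G$ to all of $U_2\subseteq V(H_1)$ and therefore lies in the component $C_1$ that contains $H_1$, while any saturated $r_{ij}$ with $j\le 6$ lies in the component $C_2$ that contains $H_2$. Since $G[M]$ has exactly two components, no edge of $G$ joining an $l$-vertex to an $r$-vertex may have both endpoints saturated: otherwise those endpoints would be adjacent in $G[M]$ and would merge $C_1$ with $C_2$. I would then apply this forbidden-pair principle to the two variable-linking edges added in (II) for a variable $x$ occurring at position $q$ of $c_i$ and position $t$ of $c_j$ (all relevant indices being at most $6$).

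Working through the two cases of (II) is where the bookkeeping lives, and it is the step I expect to be the main obstacle, though each case is short. For same-polarity occurrences the edges are $l_{i(q+3)}r_{jt}$ and $r_{iq}l_{j(t+3)}$; if $x$ is true in $c_i$ then $r_{iq}$ is saturated, so $l_{j(t+3)}$ is not, meaning position $t$ of $c_j$ is not false and hence true, so $x$ is true in $c_j$ as well, and the symmetric edge handles the false case. For occurrences of opposite polarity the edges are $r_{iq}l_{jt}$ and $l_{i(q+3)}r_{j(t+3)}$, and the same reasoning forces position $q$ of $c_i$ to be true exactly when position $t$ of $c_j$ is false, which is precisely the agreement required when one occurrence is negated. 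Once every variable edge is shown to enforce such an agreement, the literal values read off by (I) are consistent with a single truth value for each variable, so $R$ is a well-defined variable assignment; together with the per-clause one-in-three property established above, $R$ solves $I$, and since it is obtained from $M$ in linear time the lemma follows.
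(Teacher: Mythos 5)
Your proposal is correct and follows essentially the same route as the paper's proof: you derive the per-clause one-in-three property from Lemma~\ref{lemma:6-arestas}, and you enforce consistency of the variable assignment by observing that the two-component structure (Lemmas~\ref{lemma:H1H2matched} and~\ref{lemma:duasComponentesConexas}) forbids any edge of rule (II) from having both endpoints saturated, then checking both polarity cases. The only difference is cosmetic --- you argue consistency directly via a ``forbidden pair'' principle (and in fact justify the component placement of saturated $l$- and $r$-vertices more explicitly than the paper does), whereas the paper phrases the same argument as a proof by contradiction.
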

\begin{proof}

For this Lemma to be true, using the $R$ assignments, each clause in $I$ must have exactly one true literal and each variable must have the same assignment in all clauses. As was deduced in the Lemma \ref{lemma:6-arestas}, in fact, given $i$, $l_ {ij}$ is saturated only for a single $j$, $j \in \{ 1,2,3 \}$, $i \in [m]$. So we have a single true literal. We now show that the assignment of the variable is consistent across all clauses. By contradiction, assume this to be false, then in $R$ there are two literals, $x$ and $y$, for the same variable, and one of the following two possibilities occurs. Consider $x$ the $q$-th literal of $c_i$ and $y$ the $t$-th literal of $c_j$. Either there is one negation between $x$ and $y$ or $x$ and $y$ have the same sign. In the first possibility, as we assumed that $x$ and $y$ have different assignments, then either $r_{iq}$ and $l_{jt}$ are saturated simultaneously or $ l_{i (q + 3)}$ and $r_{j (t + 3)} $ are. Note that $c_i$ and $c_j$ have variables with opposite literals, which means that the constructed graph has the edges $r_{iq}l_{jt}$ and $l_{i(q + 3)}r_{j(t+3)}$. Therefore, by the Lemma \ref{lemma:H1H2matched}, $G[M]$ would be connected, which is a contradiction. In the second possibility, $x$ and $y$ have the same sign. So either $l_{i (q + 3)}$ and $r_{jt}$ are saturated simultaneously or $r_{iq}$ and $l_{j (t + 3)}$ are. There are also edges between these pairs of vertices and, also by Lemma \ref{lemma:H1H2matched}, it is a contradiction. Therefore, $R$ solves {\sc One-in-three 3SAT}.
\end{proof}

\begin{table}
\begin{center}
\begin{tabular}{ |c|c| } 
 \hline
  $h_1-h_2$ separator of $G_i^+$ & Possibly saturated remaining vertices \\ 
 \hline
    $\{ l_{i2}, l_{i3}, l_{i4}, r_{i2}, r_{i3}, r_{i4} \}$ & $\{ l_{i1}, l_{i5}, l_{i6}, l_{i7}, l_{i8}, l_{i9}, r_{i1}, r_{i5}, r_{i6}, r_{i7}, r_{i8}, r_{i9}\}$ \\
    
    $\{ l_{i1}, l_{i3}, l_{i5}, r_{i1}, r_{i3}, r_{i5} \}$ & $\{ l_{i2}, l_{i4}, l_{i6}, l_{i7}, l_{i8}, l_{i9}, r_{i2}, r_{i4}, r_{i6}, r_{i7}, r_{i8}, r_{i9}\}$ \\
    
    $\{ l_{i1}, l_{i2}, l_{i6}, r_{i1}, r_{i2}, r_{i6} \}$ & $\{ l_{i3}, l_{i4}, l_{i5}, l_{i7}, l_{i8}, l_{i9}, r_{i3}, r_{i4}, r_{i5}, r_{i7}, r_{i8}, r_{i9}\}$ \\
 \hline
\end{tabular}
\end{center}
\caption{Some $h_1-h_2$ minimal separators in a subgraph $G_i^+$ and the respective remaining sets of vertices, which can be saturated} \label{tab:separadoresBi}
\end{table}




\subsubsection{Transforming a variable assignment into a disconnected matching.}\label{subsubsec:atribuicaoParaEmparelhamentoDesconexo}

Finally, we define a $2$-disconnected matching $M$, obtained from a solution of {\sc One-in-three 3SAT}. Then, Lemma \ref{lemma:NPHardIda} proves that $M$ is a $2$-disconnected matching with the desired cardinality $12m$.


\begin{enumerate}[(I)]
    \item For each clause $c_i$, whose true literal is the $j$-th, add to $M$ the edge set defined as $\{ l_{ij}l_{i7}, r_{ij}r_{i7}, l_{iq}l_{i8}, r_{iq}r_{i8}, r_{it}l_{i9}, r_{it}r_{i9} $ $|$ $ q \in \{ 4,5 \}, t \in \{ 5,6 \}, q \neq j+3 \neq t \neq q \}$.
    
    \item For $H_1$, add to the matching $M$ any $3m$ disjoint edges. Repeat the process for $H_2$.
\end{enumerate}

\begin{lemma}\label{lemma:NPHardIda}
Let $R$ be a variable assignment of an input $I$ from {\sc One-in-three 3SAT}. It is possible, in polynomial time, to generate a disconnected matching with cardinality $k = 12m$ from $I$ in a graph generated by the transformation described below.
\end{lemma}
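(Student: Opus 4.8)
The plan is to verify that the matching $M$ constructed in items (I) and (II) is well-defined, is indeed a valid matching, has exactly $k = 12m$ edges, and induces a subgraph with exactly two connected components. Essentially this is the forward direction of the reduction, mirroring the backward direction already handled by Lemma~\ref{lemma:NPHardVolta}, so I would exploit the structural facts established in Lemmas~\ref{lemma:H1H2matched}, \ref{lemma:duasComponentesConexas}, and \ref{lemma:6-arestas} together with the explicit edge sets in Figure~\ref{fig:separadoresBi} and Table~\ref{tab:separadoresBi}.

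First I would check that $M$ is a matching: within each clause subgraph $B_i$, the six edges added in (I) correspond exactly to one of the three valid saturated vertex sets $W_i$ of Lemma~\ref{lemma:6-arestas} (the one determined by the true literal $j$), so no two of these edges share an endpoint; since each $B_i$ uses only its own vertices $l_{i\bullet}, r_{i\bullet}$, the clause gadgets do not interfere with one another. For (II), the $3m$ disjoint edges chosen inside $H_1 \cong K_{3m,3m}$ and the $3m$ inside $H_2$ are internally disjoint by construction, and they only touch vertices of $V(U_2)\cup V(U_3)$ via the biclique connections — but those cross edges (from steps (IV) and (V)) are not placed in $M$, so there is no conflict. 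Hence $M$ is a matching. Counting gives $6$ edges per clause times $m$ clauses, plus $3m + 3m$ from $H_1$ and $H_2$, for a total of $6m + 6m = 12m = k$ edges, as required.

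Next I would establish that $G[M]$ has exactly two connected components. The vertices saturated inside each $B_i$ consist of the $W_i$ set, which by Lemma~\ref{lemma:6-arestas} avoids an $h_1$--$h_2$ separator $S_i$; consequently no path inside $B_i$ links the ``left'' saturated vertices to the ``right'' ones. The edges of step (IV) attach the saturated $l_{ij}$ vertices (for $j \le 6$) to $V(U_2) \subseteq V(H_1)$, and step (V) attaches the saturated $r_{ij}$ vertices to $V(U_3) \subseteq V(H_2)$; since $H_1$ and $H_2$ are matched internally and are only joined to each other through vertices that remain in the separators, the whole left side collapses into one component containing $H_1$ and the whole right side into a component containing $H_2$. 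I would argue that these are exactly two components by checking that (a) every saturated vertex is reachable from either $H_1$ or $H_2$, and (b) there is no saturated edge bridging the two sides — this is where the assumption that $R$ is a genuine \textsc{One-in-three 3SAT} solution is crucial, since a consistent assignment guarantees that for each shared variable the inter-gadget edges of step (II)-of-the-construction (the $r_{iq}l_{jt}$ type edges) are never simultaneously saturated on both endpoints.

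The main obstacle I anticipate is precisely step (b): showing no cross-gadget edge joins the two components. This requires a careful case analysis over the edges added in construction step (II) for each variable shared between two clauses, splitting on whether the variable appears negated once or with the same sign in both, and confirming that the assignment consistency of $R$ forces at least one endpoint of each such edge to be unsaturated. This is the dual of the contradiction argument in Lemma~\ref{lemma:NPHardVolta}, so I would reuse that reasoning in reverse. The polynomial-time bound is immediate, since the construction of $M$ inspects each clause and each biclique once, touching $\bigO{m}$ edges in total.
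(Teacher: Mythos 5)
Your proposal is correct and takes essentially the same approach as the paper's proof: count the $6m + 6m = 12m$ edges, then establish disconnection by showing no saturated edge bridges the two sides, handling within-gadget adjacencies via the saturated-vertex structure of Lemma~\ref{lemma:6-arestas} and cross-gadget edges via the consistency of the assignment $R$ (the paper phrases this as such edges ``would represent variables and their negations as true''). Your write-up is more detailed --- explicitly verifying the matching property and the exactly-two-components structure, where the paper only needs at least two --- but the decomposition and key ideas coincide with the paper's.
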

\begin{proof}
In the procedure described, we are saturating $6$ edges for each clause in (I), and $6m$ edges in (II). Then, $M$ has $12m$ edges. We need to show now that $M$ is disconnected. It is necessary and sufficient showing that there are not two adjacent vertices $l$ and $r$ saturated. Edges between vertices of the same clause subgraph are generated in (I) and we observe that there are no two adjacent saturated vertices of this type, since the saturated vertices are those described in Lemma \ref{lemma:6-arestas}. The vertices incident to the edges between different clause subgraphs cannot be simultaneously saturated, as they would represent variables and their negations as true. Therefore, $M$ is disconnected and $|M| = 12m$.
\end{proof}

Note that for any graph with diameter $d \leq 1$ the answer to \pname{Disconnected Matching} is always \NOi.
On the other hand, if the graph is disconnected, there are two possibilities.
If the graph has no more than one connected component with more than one vertex, we again answer \NOi.
Otherwise, the problem can be solved in polynomial time by finding a maximum matching $M$ and checking if $|M| \geq k$.
These statements are used in the proof of Lemma \ref{lemma:disc-match-np-c-diameter}, which has a slight modification of the above construction, but allows us to reduce the diameter of the graph to $3$.

\begin{lemma}\label{lemma:disc-match-np-c-diameter}
Let $G = (V_1 \dot{\cup} V_2, E)$ be the bipartite graph from the transformation mentioned and $G'= (V_1' \dot{\cup} V_2', E')$ so that $V(G') = V(G) \cup \{w_1, w_2 \} $ and
$E(G') = E(G) \cup \{w_1w_2\} \cup \{vw_1 \mid v \in V (V_1) \} \cup \{vw_2 \mid v \in V (V_2) \}$.
If $M$ is a $2$-disconnected matching in $G'$, $|M| \geq k$, so $M$ is also a $2$-disconnected matching in $G$.
\end{lemma}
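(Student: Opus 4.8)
The plan is to prove the (stronger) structural fact that no $2$-disconnected matching of $G'$ can saturate either of the two new vertices $w_1$ or $w_2$; once this is established, the lemma is immediate. First I would record the relevant adjacencies in $G'$: by construction $N_{G'}(w_1) = V_1 \cup \{w_2\}$ and $N_{G'}(w_2) = V_2 \cup \{w_1\}$, while every $v \in V_2$ satisfies $N_{G'}(v) \subseteq V_1 \cup \{w_2\}$ and every $v \in V_1$ satisfies $N_{G'}(v) \subseteq V_2 \cup \{w_1\}$, since the only edges added to $E(G)$ are $w_1 w_2$ together with a complete join between $w_1$ and $V_1$ and between $w_2$ and $V_2$.

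The core of the argument is the following claim: if $w_1 \in V(M)$, then $G'[V(M)]$ is connected (and symmetrically for $w_2$). To see this, I would show that every $M$-saturated vertex lies in the connected component of $G'[V(M)]$ containing $w_1$. Every saturated $v \in V_1$ is adjacent to $w_1$ in $G'$, hence in the same component. Every saturated $v \in V_2$ is matched by some edge of $M$ to a vertex $p \in N_{G'}(v) \subseteq V_1 \cup \{w_2\}$; if $p \in V_1$ then $p$ is saturated and adjacent to $w_1$, so $v$ reaches $w_1$ through $p$, while if $p = w_2$ then $w_2$ is saturated and, since $w_1 w_2 \in E(G')$, $v$ reaches $w_1$ through $w_2$. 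Finally $w_2$, if saturated, is adjacent to $w_1$. Thus all of $V(M)$ collapses into a single component, so $G'[V(M)]$ is connected.

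With the claim in hand I would finish as follows. Since $M$ is $2$-disconnected, $G'[V(M)]$ has at least two components, so by the claim neither $w_1$ nor $w_2$ can be saturated, i.e. $w_1, w_2 \notin V(M)$. Every edge of $E(G') \setminus E(G)$ is incident to $w_1$ or $w_2$, so $M \subseteq E(G)$; and because $w_1, w_2$ are unsaturated, none of the extra edges appears in $G'[V(M)]$, whence $G'[V(M)] = G[V(M)]$. Consequently $M$ is a matching of $G$ whose induced subgraph has at least two components and whose cardinality is still $|M| \geq k$, i.e. a $2$-disconnected matching of $G$ of the required size.

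I expect the main obstacle to be the case analysis inside the claim, specifically arguing that a saturated vertex of $V_2$ is necessarily routed back to $w_1$: this is exactly the step that exploits the domination structure (that $w_1$ is joined to all of $V_1$ and $w_2$ to all of $V_2$) together with the fact that a $V_2$-vertex has no neighbours in $G'$ outside $V_1 \cup \{w_2\}$. Everything else is bookkeeping; note in particular that the cardinality hypothesis $|M| \geq k$ is not actually needed to force $w_1, w_2$ unsaturated and merely transfers unchanged to the conclusion.
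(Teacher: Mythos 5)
Your proposal is correct and follows essentially the same route as the paper: both proofs show that saturating $w_1$ (or $w_2$) would force $G'[V(M)]$ to be connected --- the paper phrases this via bipartiteness (every edge of $G'$ has an endpoint in $V_1' = V_1 \cup \{w_2\}$, all of which is adjacent to $w_1$), while you do the equivalent case analysis vertex by vertex --- and then conclude that $w_1, w_2 \notin V(M)$, so $M$ lives entirely in $G$. Your closing observation that the cardinality hypothesis $|M| \geq k$ is never needed is also accurate; the paper carries it along without using it.
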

\begin{proof}
Let's show that a $2$-disconnected matching $M$ in $G'$, $|M| \geq 12m$, saturates only vertices of $V(G)$ and, therefore, $M$ is also a $2$-disconnected matching in $G$. With this purpose, we demonstrate that the vertices $w_1$ and $w_2$ are not part of $M$. Let's assume that $w_1$ is saturated and $w_1 \in C_1$. Note that, since $G'$ is bipartite, then every edge $e \in E(G')$ has one endpoint at $V(V_1')$ and other at $V(V_2')$. Therefore, the edge $e$, if saturated, would be at $C_1$. 
Thereby, $M$ would not be $2$-disconnected, which is a contradiction. This shows that $w_1$ is not saturated. The argument is analogous to $w_2$. Thus, if $M$ is a $2$-disconnected matching in $G'$, $|M| \geq 12m$, so it's also in $G$. As we have already described the structure of such matchings in $G$ in Lemmas \ref{lemma:H1H2matched}, \ref{lemma:duasComponentesConexas} and \ref{lemma:6-arestas}, this transformation can also be used to solve the {\sc One-in-three 3SAT} problem.
\end{proof}

Combining the previous results, we obtain Theorem \ref{theo:npcomplete}.

\begin{theorem}\label{theo:npcomplete}
{\sc $2$-Disconnected Mathing} is {\NPc} even if the input is restricted to bipartite graphs with diameter $3$.
\end{theorem}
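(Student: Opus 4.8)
The plan is to turn the machinery of Section~\ref{subsec:transformacaoEntrada} and Lemmas~\ref{lemma:H1H2matched}--\ref{lemma:disc-match-np-c-diameter} into a polynomial-time many-one reduction from \pname{One-in-three 3SAT}, which is \NPH. Membership in \NP\ is immediate: a candidate matching $M$ is a certificate, and verifying that $|M| \geq k$ and that $G'[V(M)]$ has at least two connected components takes polynomial time (one graph search on $G'[V(M)]$). For hardness, given an instance $I$ of \pname{One-in-three 3SAT} with $m$ clauses, I would build the bipartite graph $G$ of Section~\ref{subsec:transformacaoEntrada}, apply the modification of Lemma~\ref{lemma:disc-match-np-c-diameter} to add $w_1, w_2$ and obtain $G'$, and output the instance $(G', 12m, 2)$. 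Since $|V(G)| = \bigO{m}$ and the modification adds only two vertices and $\bigO{m}$ edges, $G'$ is computable in time polynomial in $|I|$.

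It then remains to show that $I$ is a yes-instance of \pname{One-in-three 3SAT} if and only if $G'$ admits a $2$-disconnected matching of cardinality at least $12m$. For the forward direction, starting from a one-in-three satisfying assignment $R$, Lemma~\ref{lemma:NPHardIda} yields a $2$-disconnected matching $M$ in $G$ with $|M| = 12m$; I would then note that $M$ saturates no vertex outside $V(G)$, so $w_1, w_2 \notin V(M)$ and hence $G'[V(M)] = G[V(M)]$, which means $M$ is still a $2$-disconnected matching in $G'$ with $|M| = 12m \geq 12m$. For the converse, given a $2$-disconnected matching $M$ in $G'$ with $|M| \geq 12m$, Lemma~\ref{lemma:disc-match-np-c-diameter} guarantees that $M$ saturates only vertices of $V(G)$ and is therefore a $2$-disconnected matching in $G$. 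By Lemmas~\ref{lemma:H1H2matched}, \ref{lemma:duasComponentesConexas}, and~\ref{lemma:6-arestas}, at most twelve vertices of each clause gadget and at most the $12m$ vertices of $V(H_1) \cup V(H_2)$ can be saturated, so a matching of size at least $12m$ saturates exactly $24m$ vertices and thus satisfies $|M| = 12m$ exactly; Lemma~\ref{lemma:NPHardVolta} then converts $M$ into an assignment solving $I$. This establishes the equivalence and hence \NPc-ness.

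Finally, I would record the promised restrictions on the produced instances. The graph $G'$ is bipartite: placing $w_1$ on the side opposite $V_1$ and $w_2$ on the side opposite $V_2$ extends the bipartition $V_1 \dot{\cup} V_2$ of $G$, and the edge $w_1w_2$ respects it. Its diameter is exactly $3$, since every vertex of $V_1$ reaches $w_1$ and every vertex of $V_2$ reaches $w_2$ in a single step, $w_1w_2 \in E(G')$, and therefore any two vertices lie at distance at most three (with some pair attaining three). Combining \NP-membership, the reduction, and these observations yields the theorem. I expect the genuinely delicate content to have been isolated into the earlier structural lemmas (especially the backward-direction analysis of Lemma~\ref{lemma:6-arestas}); within this proof the only points requiring care are the cardinality bookkeeping that upgrades $|M| \geq 12m$ to $|M| = 12m$ so that Lemma~\ref{lemma:NPHardVolta} applies, and the verification that a matching remains $2$-disconnected when transferred between $G$ and $G'$.
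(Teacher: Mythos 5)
Your proof is correct and follows essentially the same route as the paper's: \NP-membership via the matching certificate, plus a reduction from \pname{One-in-three 3SAT} that assembles the construction of Section~\ref{subsec:transformacaoEntrada} with the $w_1,w_2$ modification of Lemma~\ref{lemma:disc-match-np-c-diameter}, invoking Lemmas~\ref{lemma:NPHardIda} and~\ref{lemma:NPHardVolta} for the two directions. Your write-up is in fact somewhat more explicit than the paper's (the cardinality upgrade from $|M|\geq 12m$ to $|M|=12m$, the bipartition of $G'$ including $w_1,w_2$, and the diameter-$3$ check), but these are details the paper delegates to its lemmas rather than a different approach.
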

\begin{proof}
Let $G = (V_1 \cup V_2, E)$ be a graph generated from the transformation of Section \ref{subsec:transformacaoEntrada}. Let's show that this graph is bipartite. Note that the bipartition $V_1$ of $G$ can be defined by $V_1 = \{ l_{iq} $ $|$ $i \in [m], q \in \{ 1,\ldots,6 \} \}$ $\cup$ $ \{ r_{iq} $ $|$ $i \in [m], q \in \{7,8,9\} \}$ $\cup$ $\{ u_1, u_3 $ $|$ $u_1 \in V(U_1), u_3 \in V(U_3) \}$.

Next, we prove that the problem is in {\NP} and {\NPH}. Note that a $2$-disconnected matching is a certificate to show that the problem is in {\NP}. According to the transformations between {\sc $2$-Disconnected Matching} and {\sc One-in-Three 3SAT} solutions described in Lemmas \ref{lemma:NPHardIda} and \ref{lemma:NPHardVolta}, the {\sc One-in-three 3SAT} problem, which is {\NPc}, can be reduced to {\sc $2$-Disconnected Matching} using a diameter $3$ bipartite graph. Therefore, {\sc $2$-Disconnected Matching} is {\NPH} and we have proven that {\sc $2$-Disconnected Matching} is {\NPc} even for diameter $3$ bipartite graphs.
\end{proof}

These results imply the following dichotomies, in terms of diameter.

\begin{corollary}\label{coro:dicothomyBipartite}
For bipartite graphs with diameter $ \leq d$, {\sc Disconnected Matching} is {\NPc} if $d$ is at least $3$ and belongs to $P$ otherwise.
\end{corollary}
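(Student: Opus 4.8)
The plan is to prove the two halves of the dichotomy separately, leaning on Theorem~\ref{theo:npcomplete} for the hard side and on a structural collapse for the easy side. For hardness, when $d \geq 3$ the class of bipartite graphs of diameter at most $d$ contains every bipartite graph of diameter exactly $3$, so the reduction behind Theorem~\ref{theo:npcomplete} (via Lemma~\ref{lemma:disc-match-np-c-diameter}) already produces instances living inside this class; membership in \NP{} is witnessed by the matching itself, so no extra argument is needed there.

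The substance lies in the case $d \leq 2$, where I would first pin down the structure of the admissible inputs. I would observe that a connected bipartite graph $G = (X \dot{\cup} Y, E)$ of diameter at most $2$ must be complete bipartite: since $G$ is bipartite, the distance between any $x \in X$ and $y \in Y$ is odd, and being at most $2$ it is forced to equal $1$, i.e. $xy \in E$. Hence $G = K_{|X|,|Y|}$, and the diameter-$\leq 1$ graphs are just the degenerate $K_1$ and $K_{1,1}$ covered by the same reasoning.

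Given this characterization, the second step is to show that \pname{Disconnected Matching} collapses on complete bipartite graphs. For any matching $M$ of $K_{a,b}$, the saturated set meets each side in exactly $|M|$ vertices, so $G[M] = K_{|M|,|M|}$, which is connected whenever $|M| \geq 1$. Consequently every nonempty matching induces a single component, so an instance with $c \geq 2$ is a trivial \NOi{} instance, while an instance with $c = 1$ is a \YES{} instance if and only if the maximum matching of $K_{a,b}$, of size $\min\{a,b\}$, has at least $k$ edges. Both tests are clearly polynomial. For completeness I would also record the disconnected inputs exactly as in the remark preceding Lemma~\ref{lemma:disc-match-np-c-diameter}: if at most one component is nontrivial the answer is \NOi{}, and otherwise it reduces to a single maximum-matching computation.

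I expect no genuine obstacle beyond verifying the structural characterization and handling the degenerate cases cleanly; the one delicate point is the bookkeeping over the value of $c$. In contrast to the informal statement that diameter-$\leq 1$ graphs are always \NOi{} instances, one must retain the $c = 1$ branch, where a matching of size $k$ may still exist, so that the polynomial-time claim holds uniformly over all values of $c$ in the input. Combining the two directions then yields the corollary.
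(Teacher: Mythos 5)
Your proposal is correct and follows essentially the same route the paper intends: hardness for $d \geq 3$ comes directly from Theorem~\ref{theo:npcomplete}, and the polynomial side comes from the structural collapse of small-diameter bipartite graphs. The only difference is that the paper states the corollary as an immediate consequence without spelling out the $d \leq 2$ case, whereas you make explicit the key (and correct) observation that a connected bipartite graph of diameter at most $2$ is complete bipartite, so every nonempty matching induces a connected subgraph $K_{|M|,|M|}$ — a detail worth recording, together with your careful handling of the $c=1$ branch.
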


\begin{corollary}\label{coro:dicothomyGeneral}
For graphs with diameter $\leq d$, {\sc Disconnected Matching} is {\NPc} if $d$ is at least $2$ and belongs to $P$ otherwise.
\end{corollary}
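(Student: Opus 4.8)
The plan is to prove the two halves of the dichotomy separately: membership in \P\ when $d \leq 1$, and \NPcness\ for every $d \geq 2$. Because a graph of diameter exactly $2$ also has diameter $\leq d$ for all $d \geq 2$, it suffices to establish hardness at the single threshold $d = 2$; the polynomial side only needs the boundary case $d \leq 1$.

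For the polynomial direction, I would note that a connected graph of diameter at most $1$ is complete, so every matching $M$ induces a complete subgraph $G[M]$ on its $2|M|$ saturated vertices, and this subgraph is connected whenever $M \neq \emptyset$. Hence $G[M]$ always has exactly one component. Consequently, any instance $(G,k,c)$ with $c \geq 2$ is answered \NOi\ immediately, whereas for $c = 1$ the question reduces to testing $\beta(G) \geq k$, which is polynomial (indeed $\beta(G) = \lfloor |V|/2 \rfloor$ on a clique). Either way the problem is solvable in polynomial time, in agreement with the remark preceding Lemma \ref{lemma:disc-match-np-c-diameter}.

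For the hardness direction I would start from the bipartite graph $G$ built in Section \ref{subsec:transformacaoEntrada} and form $G^\star$ by adding a single universal vertex $w$ adjacent to every vertex of $G$. Since $w$ lies at distance $1$ from all other vertices, any two original vertices are joined by a length-$2$ path through $w$, so $\mathrm{diam}(G^\star) \leq 2$, and as $G$ is not complete the diameter equals $2$. The key claim is that $w$ is never saturated by a $2$-disconnected matching: if some edge $wu$ belonged to $M$, then $w \in V(M)$ would be adjacent in $G^\star$ to every other $M$-saturated vertex, forcing $G^\star[M]$ to be connected and contradicting the existence of two components. Thus every $2$-disconnected matching of $G^\star$ uses only edges of $G$; since adding $w$ creates no edge among the vertices of $G$, we have $G^\star[V(M)] = G[V(M)]$, so $M$ is a $2$-disconnected matching of $G$ of the same cardinality, and conversely every $2$-disconnected matching of $G$ is one of $G^\star$. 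The $2$-disconnected matchings of cardinality at least $12m$ therefore correspond exactly, and by Theorem \ref{theo:npcomplete} (through Lemmas \ref{lemma:H1H2matched} and \ref{lemma:NPHardIda}) the instance $(G^\star, 12m, 2)$ is a correct reduction from \pname{One-in-three 3SAT}. As a $2$-disconnected matching certifies membership in \NP, this yields \NPcness\ on diameter-$2$ graphs, hence on all graphs of diameter $\leq d$ with $d \geq 2$.

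The step I expect to be the crux is the non-saturation argument for $w$: I must rule out that collapsing the diameter with a universal vertex either manufactures a genuinely new $2$-disconnected matching or provides a cheap way to reach the required $12m$ edges, and I must confirm the diameter does not accidentally drop to $1$ (which would place the instance in the polynomial regime). Both are settled by two observations, namely that any matching touching $w$ is automatically connected and that $G$ is not complete; the former lets the structural characterization of Lemmas \ref{lemma:H1H2matched}, \ref{lemma:duasComponentesConexas} and \ref{lemma:6-arestas} carry over verbatim to $G^\star$, while the latter pins the diameter at exactly $2$. The remaining diameter bookkeeping and the routine $c = 1$ versus $c \geq 2$ split are then straightforward.
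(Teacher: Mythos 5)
Your proposal is correct and takes essentially the same route as the paper: the paper obtains diameter-$2$ hardness by identifying $w_1$ and $w_2$ from Lemma~\ref{lemma:disc-match-np-c-diameter} into one vertex, which is precisely your universal vertex $w$, and its non-saturation argument (a saturated universal vertex would force $G[M]$ to be connected) is the same as yours. Your handling of the $d \leq 1$ side via complete graphs also matches the paper's remark preceding that lemma, and is in fact stated slightly more carefully there than in the paper (which glosses over the $c=1$ case).
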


\subsubsection{Example of {\sc $2$-Disconnected Matching} transformation}\label{subsec:example}

Consider the input $I$ with the two clauses $c_1 = (x \vee y \vee z)$ and $c_2 = (w \vee y \vee \overline{x})$. 

Let's build the graph $G$, $V(G) = 30m = 60$, $E(G) = 47m^2 + 24m = 236$, as described in Section \ref{subsec:transformacaoEntrada}.

We describe below the edges between $B_1$ and $B_2$. Note that in $I$, the third literal of $c_2$ is the negation of $c_1$. Therefore, the edges $\{ (r_{11}, l_{23}), (l_{14}, r_{26}) \}$ must be added. In addition, the second literal of $c_1$ is the same as $c_2$. Thus, we add the edges $\{ (l_{15}, r_{22}), (r_{12}, l_{25})) \}$. The rest of the literals refer to different variables, so there are no additional edges between $B_1$ and $B_2$.

We present the only two $I$ solutions for \pname{One-in-three 3SAT} and their corresponding disconnected matchings in $G$. The assignment of the variables $(w, x, y, z)$, in this order, can be either $(V, V, F, F)$, represented by the matching in Figure \ref{fig:exemploEmparelhamento1}, or $( F, F, F, V)$, in Figure \ref{fig:exemploEmparelhamento2}. For easier visualization, some edges of $B_1$ and $B_2$ are omitted, besides the complete subgraphs $H_1$ and $H_2$ and their respective $6m = 12$ saturated edges.

\begin{figure}%
        \centering
        \subfloat[]{{\includegraphics[width=0.5\textwidth]{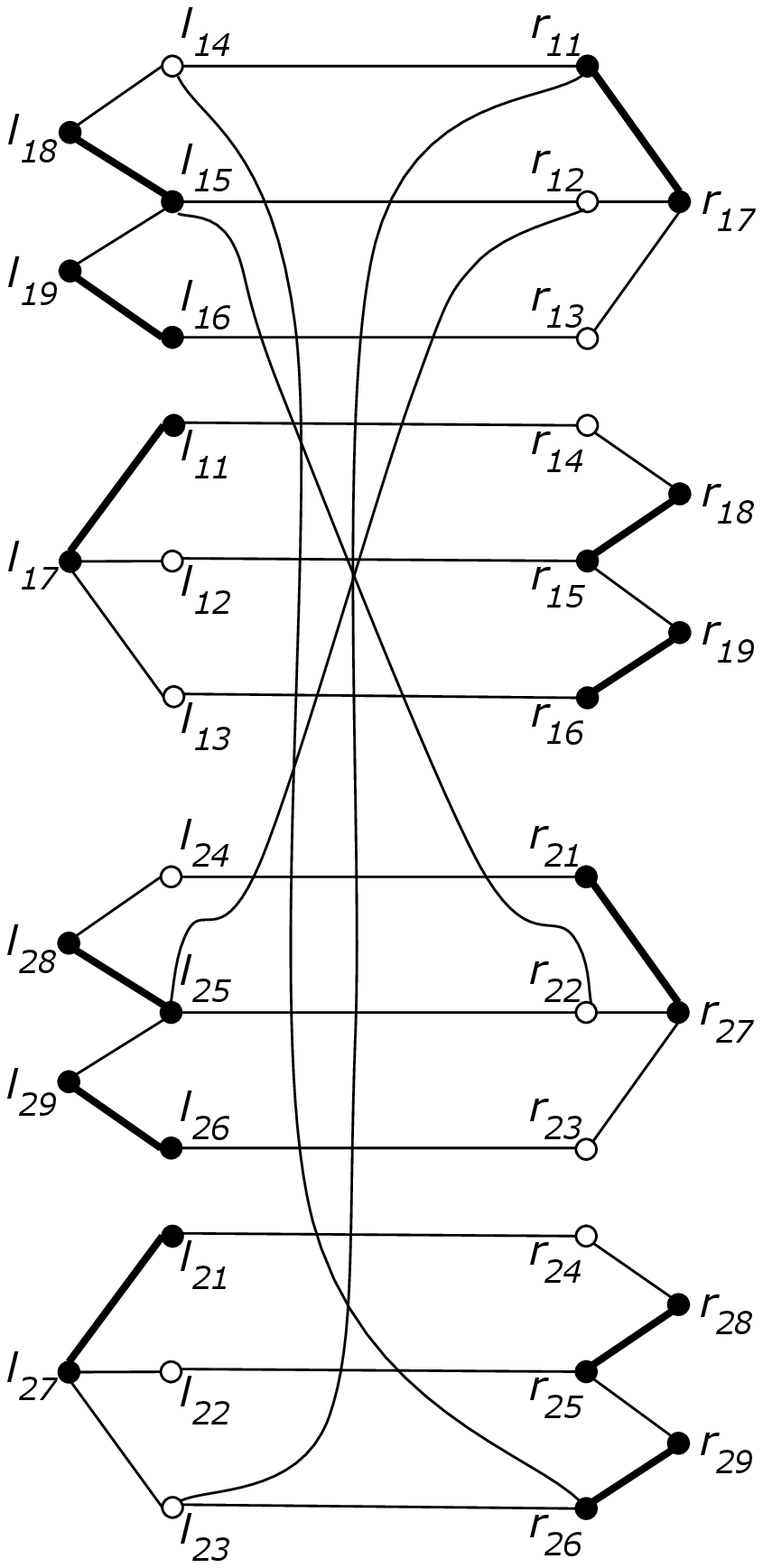}\label{fig:exemploEmparelhamento1} }}
        \subfloat[]{{\includegraphics[width=0.5\textwidth]{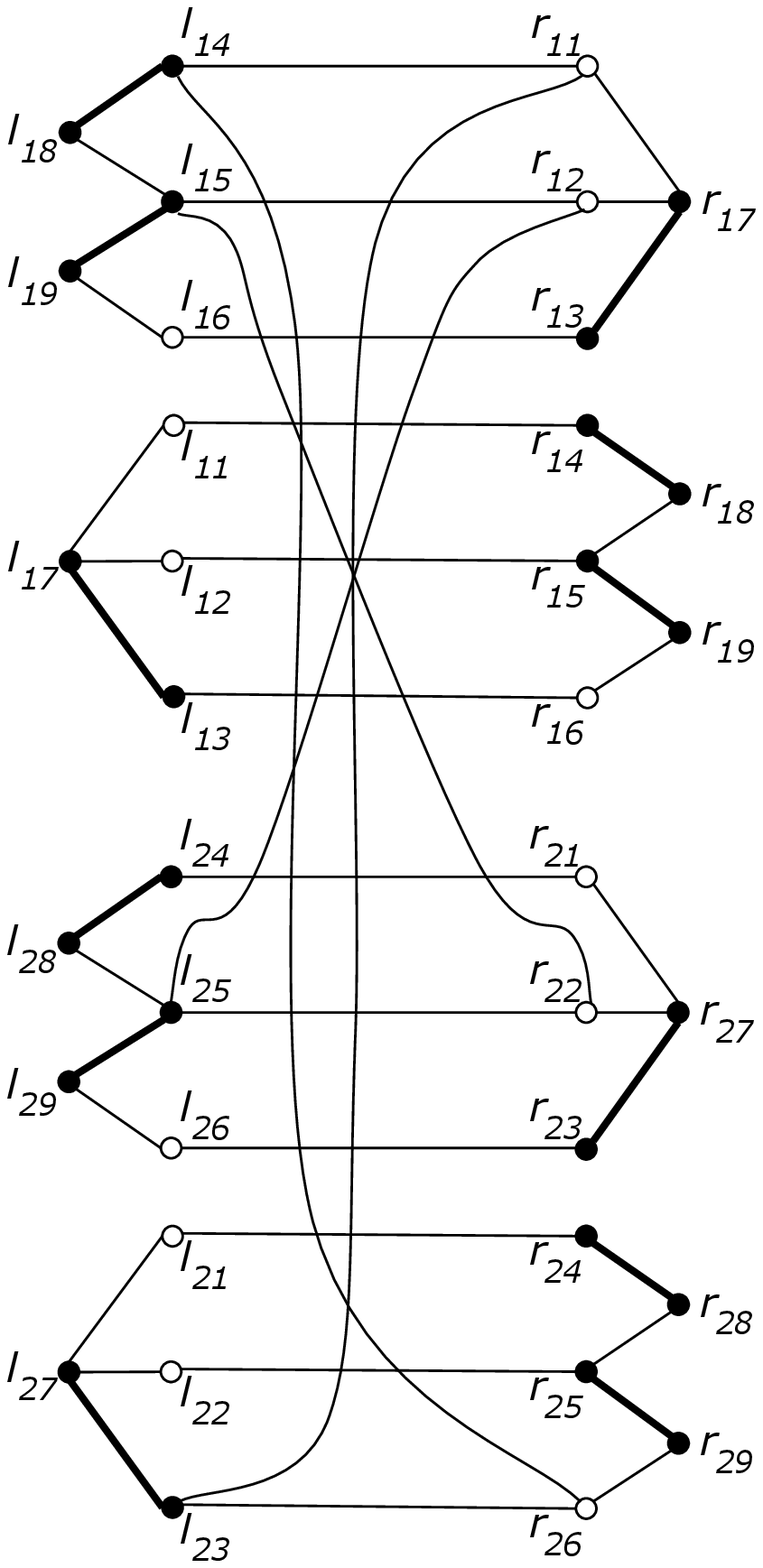}\label{fig:exemploEmparelhamento2} }}
        \caption{Disconnected matching examples in the simplified clause subgraphs for the boolean expression $(x \vee y \vee z) \wedge (w \vee y \vee \overline{x})$}.%
        \label{fig:exemploEmparelhamento}
\end{figure}

\subsection{NP-completeness for any fixed $c$}\label{subsec:c-disc-np-complete}

We now generalize our hardness proof to \pname{$c$-Disconnected Matching} for every fixed $c > 2$.
We begin by setting the number of edges in the matching $k = 12m + c - 2$, defining $G'$ to be the graph obtained in our hardness proof for \pname{$2$-Disconnected Matching}, and $H$ to be the graph with $c - 2$ isolated edges $\{v_{i1}v_{i2} \mid i \in [c-2]\}$.
To obtain our input graph to \pname{$c$-Disconnected Matching}, we make $w_1 \in V(G')$ adjacent to $v_{i1}$ and $w_2 \in V(G')$ adjacent $v_{i2}$, for every $i \in [c-2]$, where $w_1$ and $w_2$ are as defined in Lemma~\ref{lemma:disc-match-np-c-diameter}.
This proves that the problem is \NPc\ on bipartite graphs of diameter three.
Note that, if we identify $w_1$ and $w_2$, we may reason as before, but now conclude that \pname{$c$-Disconnected Matching} is \NPc\ on general graphs of diameter 2. We summarize this discussion as Theorem~\ref{teo:dicotomia-emp-c-desc}.

\begin{theorem}
\label{teo:dicotomia-emp-c-desc}
Let $c \geq 1$. The { \sc $c$-Disconnected Matching} problem belongs to \P\ if $c = 1$. Otherwise, it is {\NPc} even for bipartite graphs of diameter $3$ or for general graphs of diameter $2$.
\end{theorem}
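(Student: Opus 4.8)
The plan is to assemble the dichotomy from results already proved and to verify that the generalization to arbitrary fixed $c$ preserves both bipartiteness and the structural analysis developed for $c=2$. Membership in \NP\ is immediate for every $c$: a $c$-disconnected matching of the required size is a polynomial-size certificate, checkable by computing the connected components of $G[M]$. The case $c=1$ is exactly Theorem~\ref{teo:emp-1-desc}, and the base case $c=2$ is Theorem~\ref{theo:npcomplete}. Thus the only genuine work is the reduction for fixed $c>2$ and the diameter-two variant obtained by identifying $w_1$ and $w_2$.

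For fixed $c>2$, I would take the diameter-three bipartite graph $G'$ of Lemma~\ref{lemma:disc-match-np-c-diameter}, attach the $c-2$ pendant edges $v_{i1}v_{i2}$ joining $v_{i1}$ to $w_1$ and $v_{i2}$ to $w_2$, and set $k=12m+c-2$. First I would check that the construction stays bipartite, since the new vertices merely extend the two sides, and that the diameter is still three, as each pendant vertex is adjacent to $w_1$ or $w_2$.

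The core is the equivalence. For the forward direction, from a satisfying \pname{One-in-three 3SAT} assignment I would use Lemma~\ref{lemma:NPHardIda} to get a $2$-disconnected matching of size $12m$ in $G$ and then add all $c-2$ pendant edges; since $w_1,w_2$ remain untouched, each pendant edge becomes an isolated component, yielding exactly $c$ components and $12m+c-2$ edges. For the converse, the key step is to re-run the argument of Lemma~\ref{lemma:disc-match-np-c-diameter} to show $w_1,w_2$ stay unsaturated: were $w_1$ saturated, then in the \emph{induced} graph $G[M]$ it would be adjacent to every saturated vertex on the opposite side (including pendant endpoints), collapsing the matching into a single component and contradicting $c\geq 2$. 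With $w_1,w_2$ unsaturated, every used pendant edge is an isolated component, so removing the at most $c-2$ pendant edges leaves a matching $M'$ inside $G$ with $|M'|\geq 12m$ and at least $c-(c-2)=2$ components; Lemma~\ref{lemma:duasComponentesConexas} then forces exactly two components, and Lemma~\ref{lemma:NPHardVolta} recovers a satisfying assignment.

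I expect the delicate point to be this bookkeeping in the converse: one must combine the bound $k=12m+c-2$ with the fact that at most $c-2$ pendant edges can be used to guarantee that the non-pendant part retains the $12m$ edges needed to trigger Lemma~\ref{lemma:duasComponentesConexas}, and one must consistently use adjacency in $G[M]$ rather than in $G$. Finally, identifying $w_1$ and $w_2$ into a single vertex $w$ adjacent to everything produces a non-bipartite graph of diameter exactly two; the same collapse argument shows $w$ cannot be saturated, so the whole analysis transfers and gives \NPcness\ for general graphs of diameter two.
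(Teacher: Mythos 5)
Your proposal is correct and follows essentially the same route as the paper: the pendant-edge gadget attached to $w_1$ and $w_2$ with $k = 12m + c - 2$, reuse of Lemmas~\ref{lemma:NPHardIda}, \ref{lemma:duasComponentesConexas}, \ref{lemma:NPHardVolta} and \ref{lemma:disc-match-np-c-diameter}, and identification of $w_1$ and $w_2$ for the diameter-two case. In fact, your bookkeeping in the converse direction (showing $w_1,w_2$ remain unsaturated, peeling off at most $c-2$ pendant components, and verifying that $12m$ edges and two components survive in $G$) is spelled out more carefully than in the paper's own, rather terse, argument.
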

\begin{proof}
It is simple to verify that the problem belongs to \NP, since a certificate can be a matching that induces a graph with $c$ connected components. Such matching can be verified in polynomial time. We show next that the problem is {\NPH} for some cases and {\P} for others. For $c>1$, the {\sc One-in-three 3SAT} can be reduced to {\sc $c$-Disconnected Matching} using, as an input, the transformation graph and $k = 12m + (c-2)$. Let $M$ be a matching such that $G[M]$ has $c$ connected components and $|M| \geq k$. Let's consider two $M$ partitions, contained in $U$ and $G'$. Regarding the subgraph $U$, if we consider that all edges of $E(U)$ are saturated, then $G[M]$ will have at least $c-2$ connected components. Regarding $G'$, we know that for $G'[M]$ to have the remaining $2$ connected components, the largest number of saturated edges in $G'$ must be $12m$. In total, $M$ will have exactly $12m + c - 2$ edges. Concerning the restriction on the number of connected components of $G[M]$, we conclude that $M$ is maximum and all maximum $c$-matchings in $G$ have this form. For $c = 1$, Theorem \ref{teo:emp-1-desc} shows that the problem belongs to \P.
\end{proof}

\section{NP-completeness for chordal graphs}\label{sec:c-disc-chordal}

In this section, we prove that {\sc Disconnected Matching} is {\NPc} even for chordal graphs with diameter $2$. In order to prove it, we describe a reduction from the {\NPc} problem { \sc Exact Cover By $3$-Sets } \cite{garey_johnson}. This problem consists in, given two sets $X$, $|X| = 3q$, and $C$, $|C| = m$ of $3$-element subsets of $X$, decide if there exists a subset $C' \subseteq C$ such that every element of $X$ occurs in exactly one member of $C'$.

For the reduction, we define $c = m-q+1$, $k = m + 3q$ and build the chordal graph $G = (V,E)$ from the sets $C$ and $X$ as follows.

\begin{enumerate}[(I)]
    \item For each $3$-element set $c_i = (x,y,z)$, $c_i \in C$, generate a complete subgraph $H_i$ isomorphic to $K_5$ and label its vertices as $W_i = \{ w_{ix}, w_{iy}, w_{iz}, w_{i}^+, w_{i}^- \}$.
    
    \item For each pair of $3$-element sets $c_i = (x,y,z)$ and $c_j = (a,b,c)$ such that $c_i, c_j \in  C$, add all edges between vertices of $\{ w_{ix}, w_{iy}, w_{iz} \}$ and $\{ w_{ja}, w_{jb}, w_{jc} \}$.
    
    \item For each element $x \in X$, generate a vertex $v_x$ and the edges $v_xw_{ix}$ for every $i$ such that $c_i$ contains the element $x$.
\end{enumerate}

Note that $G$ is indeed chordal, since a perfect elimination order can begin with the simplicial vertices $\{v_1, \dots, v_{3q}, w_1^+, \dots, w_m^+, w_1^-, \dots, w_m^-\}$, and be followed by an arbitrary sequence of the remaining vertices, which induce a clique.

An example of the reduction and its corresponding $c$-disconnected matching is presented in Figure \ref{fig:exemplo-c-disc}. For better visualization, the edges from rule (II) are omitted.

\begin{figure}
    \centering
    \includegraphics[width=1\textwidth]{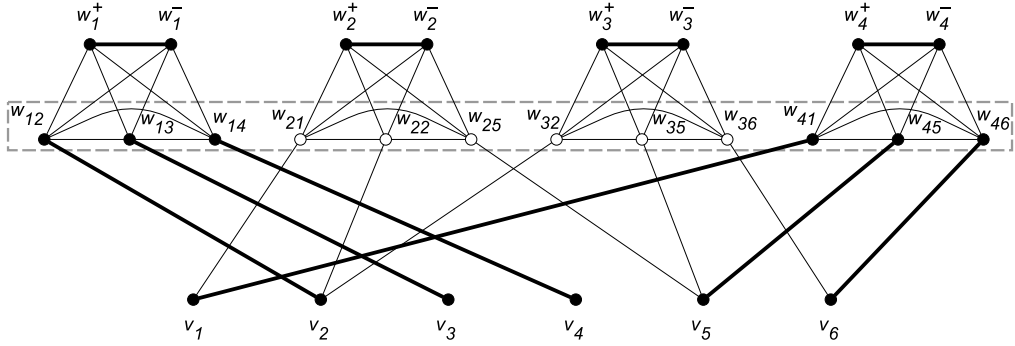}
    \caption{An example of reduction for the input $X = \{ 1,2,3,4,5,6 \}$ and $C = \{ \{ 2,3,4 \}, \{ 1,2,5 \}, \{ 2,5,6 \}, \{ 1,5,6 \} \}$. The subgraph induced by the vertices inside the dotted rectangle is complete and the matching in bold corresponds the solution $C' = \{ \{ 2,3,4 \}, \{ 1,5,6 \} \}$.}
    \label{fig:exemplo-c-disc}
\end{figure}

In Lemmas \ref{lemma:c-disc-ida} and \ref{lemma:c-disc-volta}, we define the polynomial transformation between a $(m-q+1)$-disconnected matching $M$, $|M| \geq m+3q$, and a subset $C'$ that solves the { \sc Exact Cover By 3-sets}. Then, Theorem \ref{teo:c-disc-npc-chordal} concludes the {\NPcness} for chordal graphs.

\begin{lemma}\label{lemma:c-disc-ida}
Let $(C,X)$ be an input of {\sc Exact Cover by 3-Sets} with $|C| = m$, $|X| = 3q$ and a solution $C'$. A $(m-q+1)$-disconnected matching $M$, $|M| = m+3q$, can be built in the transformation graph $G$ in polynomial time.
\end{lemma}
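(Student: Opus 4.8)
The plan is to exhibit an explicit matching and then count its connected components. Since $C'$ is an exact cover of the $3q$ elements of $X$ by pairwise disjoint $3$-element sets, we have $|C'| = q$; write $C \setminus C'$ for the $m-q$ sets that are not chosen. First I would define $M$ as the union of two edge families: (a) for every element $x \in X$, letting $c_i$ be the unique member of $C'$ containing $x$, the edge $v_x w_{ix}$ from rule (III); and (b) for every set $c_i \in C$ (chosen or not), the clique edge $w_i^+ w_i^-$ inside $H_i$ from rule (I). Family (a) contributes $3q$ edges and family (b) contributes $m$ edges, so $|M| = m+3q = k$ as required. That $M$ is a matching follows from the disjointness of the sets in $C'$: each vertex $v_x$ is matched to a single $w_{ix}$, each such $w_{ix}$ lies in a distinct clique, and the $w_i^+ w_i^-$ edges are vertex-disjoint from one another and from family (a).

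Next I would verify that $G[M]$ has exactly $m-q+1$ components. For a set $c_i \in C \setminus C'$, none of $w_{ix}, w_{iy}, w_{iz}$ is saturated, so inside the clique $H_i$ the only saturated neighbour of $w_i^+$ is $w_i^-$ and vice versa; because these two vertices receive no edges from rules (II) or (III) (those only touch the element vertices $w_{ix}$), each edge $w_i^+ w_i^-$ is an isolated component, yielding $m-q$ components. For the chosen sets, every element vertex $w_{ix}$ with $c_i \in C'$ is saturated, so each such clique $H_i$ is fully saturated and connected, with the three pendants $v_x, v_y, v_z$ attached; moreover, by rule (II) any two saturated element vertices in different chosen cliques are adjacent, so all $q$ chosen cliques, together with all pendants $v_x$, $x \in X$, coalesce into a single component. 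Hence $G[M]$ has $(m-q)+1 = c$ components.

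The construction is plainly computable in polynomial time, which finishes the argument. The step I expect to require the most care is the component count, where one must argue simultaneously that the element vertices of the chosen cliques genuinely merge into one component through the rule-(II) edges, and that the $w_i^+ w_i^-$ edges of the unchosen cliques remain isolated; the latter is precisely the reason the $+/-$ vertices were built with no inter-clique adjacencies.
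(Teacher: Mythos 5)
Your proposal is correct and follows essentially the same route as the paper's own proof: the same matching (pendant edges $v_xw_{ix}$ for the sets of the exact cover $C'$, plus the edge $w_i^+w_i^-$ for \emph{every} set in $C$) and the same component count of $(m-q)$ isolated $K_2$'s plus one large component, giving $m-q+1$ components and $m+3q$ edges. Your write-up is in fact slightly more explicit than the paper's about why the chosen cliques and pendants coalesce via the rule-(II) edges and why the $w_i^+w_i^-$ edges stay isolated; only the phrase ``each such $w_{ix}$ lies in a distinct clique'' is a small misstatement (three of them share a clique for each chosen set), but the endpoints are still pairwise distinct, so the argument is unaffected.
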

\begin{proof}
Denote the sets of vertices $U$ by $\{ w_i^+,w_i^- \mid c_i \in C \}$ and $S$ by $\bigcup_{j=1}^{j=m} V(H_j) - U$. Let's build a matching $M$ from the solution $C'$. For each set $c_i = \{ x,y,z \}$ contained in $C'$, add the edges $w_{ix}v_x$ to $M$. Also, for each set $c_j \in C$, add the edge $w_j^+w_j^-$. Consequently, each $H_i$ such that $c_i \notin C'$ will induce a connected component isomorphic to $K_2$ in $G[M]$, with the vertices $\{ w_i^+, w_i^- \}$, totalizing $m - q$ connected components and $2m-2q$ saturated vertices. There will also be one more connected component containing the $3q$ vertices of $\{ v_i \mid i \in [3q] \}$ and the $5q$ vertices of $\{ W_j \mid c_j \in C - C' \}$. Thus, $M$ saturates $2m - 2q + 3q + 5q = 2m + 6q$ vertices, corresponding to $m + 3q$ edges. Also, $G[M]$ has $m - q + 1$ connected components. So, $M$ is a valid solution for the { \sc $c$-Disconnected Matching }.
\end{proof}

\begin{lemma}\label{lemma:c-disc-volta}
Let $(C,X)$ be an input of {\sc Exact Cover by 3-Sets} with $|C| = m$, $|X| = 3q$. Given a $(m-q+1)$-disconnected matching $M$, $|M| = m+3q$, in the transformation graph $G$ described, a solution $C'$ to {\sc Exact Cover by 3-Sets} can be built in polynomial time.
\end{lemma}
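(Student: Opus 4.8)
The plan is to reverse-engineer the structure that $|M| = m+3q$ together with the number of components forces on $M$, and then read the cover off directly. Write $P_i := \{w_{ix}, w_{iy}, w_{iz}\}$ for the three element-vertices of the gadget $H_i$ associated to $c_i = \{x,y,z\}$. The key structural observation is that rules (I) and (II) make $\bigcup_{i \in [m]} P_i$ a single clique of size $3m$: within each $H_i$ it is complete because $H_i \cong K_5$, and between distinct $P_i, P_j$ it is complete by construction. Consequently, any two saturated vertices of $\bigcup_i P_i$ are adjacent and lie in the same component of $G[M]$. Since every $v_x$ is adjacent only to vertices of the form $w_{ix}$, a saturated $v_x$ also falls into this same \emph{core} component; and since each special vertex $w_i^{+}$ (resp.\ $w_i^{-}$) is adjacent only to $P_i$ and to $w_i^{-}$ (resp.\ $w_i^{+}$), a saturated special vertex either attaches to the core (when some vertex of $P_i$ is saturated) or else, together with its twin, forms an isolated $K_2$ component $\{w_i^{+}, w_i^{-}\}$ — which happens precisely when $w_i^{+}w_i^{-} \in M$ and $P_i$ is entirely unsaturated. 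Hence the components of $G[M]$ are exactly one core component together with these isolated $K_2$'s.

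Let $t$ denote the number of such isolated $K_2$ components. Since $|M| = m+3q > m \ge t$ there is at least one non-isolated edge, so the core is nonempty and $G[M]$ has exactly $t+1$ components; the hypothesis that $M$ is $(m-q+1)$-disconnected then gives $t \ge m-q$. For the reverse inequality I would double-count the saturated vertices that are not of the form $v_x$. Writing $s$ for the number of saturated $v$-vertices, there are $2|M| - s = 2(m+3q) - s$ such vertices, distributed among the $m$ gadgets; the $t$ isolated gadgets contribute exactly $2t$ of them (their two special vertices), while each of the remaining $m-t$ gadgets contributes at most $|H_i| = 5$. This yields $2(m+3q) - s - 2t \le 5(m-t)$, i.e.\ $t \le m - 2q + s/3 \le m-q$, using $s \le |X| = 3q$. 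Therefore $t = m-q$ and every inequality is tight: $s = 3q$, so all $v_x$ are saturated, and each of the $q$ non-isolated gadgets is fully saturated.

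To extract the cover, let $C' = \{c_i : H_i \text{ is not an isolated } K_2\}$, so that $|C'| = m-t = q$. Each saturated $v_x$ is matched to some $w_{ix}$ with $c_i \ni x$; since isolated gadgets have $P_i$ unsaturated, this $w_{ix}$ belongs to a gadget of $C'$. As $s = 3q$ vertices $v_x$ are matched injectively into the $3q$ element-vertices lying in $C'$-gadgets, this is in fact a bijection, so every element-vertex of a $C'$-gadget is matched to a $v$-vertex. Because $w_{ix}$ is adjacent to a unique $v$-vertex, namely $v_x$, it can only be matched to $v_x$; thus for $c_i = \{x,y,z\} \in C'$ the vertices $w_{ix}, w_{iy}, w_{iz}$ are matched to $v_x, v_y, v_z$. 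Since $M$ is a matching, each $v_x$ is used once, so every element $x \in X$ is covered by exactly one member of $C'$, i.e.\ $C'$ is an exact cover. All of this is computable in polynomial time by scanning $M$ and listing the non-isolated gadgets, mirroring the construction of Lemma~\ref{lemma:c-disc-ida}.

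The main obstacle is the structural characterization of the components of $G[M]$: everything downstream rests on recognizing that the element-vertices form one large clique, which collapses all of them into a single component and leaves only the twin pairs $\{w_i^{+}, w_i^{-}\}$ as possible extra components. Once this is established, the remaining work is the tight double-counting that forces $t = m-q$ and full saturation, after which the exact cover falls out.
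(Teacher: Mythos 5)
Your proof is correct and follows essentially the same route as the paper's: both arguments first show that the components of $G[M]$ can only be one ``core'' component (since the element-vertices form a clique) plus isolated $\{w_i^+, w_i^-\}$ pairs, then use a counting argument to force exactly $m-q$ isolated pairs with all remaining $8q$ vertices saturated, and finally read off $C'$ as the sets whose gadgets are fully saturated. Your explicit double-count bounding $t \leq m - 2q + s/3$ is a slightly more detailed rendering of the paper's vertex-count, but it is the same underlying idea, not a different approach.
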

\begin{proof}
Denote the sets of vertices $U$ by $\{ w_i^+,w_i^- \mid c_i \in C \}$ and $S$ by $\bigcup_{j=1}^{j=m} V(H_j) - U$. Consider an arbitrary $c$-disconnected matching $M$, $|M| \geq k$ in $G$. Based on the graph built, we show how such matching is structured and then build a solution $C'$. Note that every edge in $G$ is either incident to a vertex of $S$ or to two vertices of $U$. Since all vertices of $S$ are connected, $G[M]$ can only have two types of connected components. 

\begin{enumerate}[(I)]
    \item A $K_2$ with $2$ vertices of $U$.
    \item A connected component that can contain any vertex, except the ones from type (I) connected components and its adjacencies.
\end{enumerate}

Given that $G[M]$ has at least $c$ connected components, then it must have at least $c-1$ connected components of type (I). For each of them, the $2$ saturated vertices are $\{ w_i^+, w_i^- \}$, that are contained in $U$. To keep these vertices in an isolated connected component, the other $3$ vertices of $W_i \cap S$ can not be saturated. So, we will not use them for the rest of the construction. In order to have $c$ connected components, it is needed at least one more. Note that, so far, we have $2c-2 = 2m - 2q$ saturated vertices, and the remaining graph has exactly $6q + 2q = 8q$ vertices. Since $M$ saturates $2k = 2m-2q$ vertices, then the other connected component, of type (II), must saturate all the remaining $8q$ vertices. Thus, there can be no more than $c$ connected components in $G[M]$ and no more than $k$ edges in $M$. Note that, for each $i \in [m]$, the subgraph $G[W_i \cap S]$ has either $0$ or $3$ saturated vertices. If it has $3$, each vertex in $W_i \cap S$ must be matched with a vertex of $\{ v_j \mid j \in [3q] \}$, by an edge. Such edge exists because the set $c_i$ has the element $j$. Therefore, a solution $C'$ can contain the set $c_i$ if and only if $W_i \cap S$ has $3$ saturated vertices.
\end{proof}

\begin{theorem}\label{teo:c-disc-npc-chordal}
{\sc Disconnected Matching} is {\NPc} even for chordal graphs with diameter $2$.
\end{theorem}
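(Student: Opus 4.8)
The plan is to assemble the pieces already established into a complete proof that \pname{Disconnected Matching} is \NPc{} on chordal graphs of diameter two. First I would observe that membership in \NP{} is immediate: a $c$-disconnected matching $M$ itself serves as a certificate, and one can verify in polynomial time both that $M$ is a matching of the required size and that $G[M]$ has at least $c$ connected components (e.g.\ by running BFS or DFS on $G[M]$). So the entire burden of the theorem rests on \NP-hardness.

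For hardness, I would invoke the reduction from \pname{Exact Cover by 3-Sets} defined just above the statement, where from an instance $(C,X)$ with $|C|=m$ and $|X|=3q$ we build the chordal graph $G$ and set $c=m-q+1$ and $k=m+3q$. The chordality of $G$ was already argued via the perfect elimination ordering beginning with the simplicial vertices $\{v_1,\dots,v_{3q},w_1^+,\dots,w_m^+,w_1^-,\dots,w_m^-\}$, so that can simply be cited. The correctness of the reduction is exactly the content of Lemmas~\ref{lemma:c-disc-ida} and~\ref{lemma:c-disc-volta}: the forward lemma shows that any solution $C'$ to \pname{Exact Cover by 3-Sets} yields a $(m-q+1)$-disconnected matching of size $m+3q$ in $G$, and the reverse lemma shows that any such matching yields back a valid exact cover. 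Chaining these two equivalences establishes that $(C,X)$ is a \YES{} instance of \pname{Exact Cover by 3-Sets} if and only if $(G,k,c)$ is a \YES{} instance of \pname{Disconnected Matching}.

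The remaining step I would carry out is to verify the diameter claim, namely that $\operatorname{diam}(G)\le 2$. This follows from the rule (II) construction: any two of the ``selector'' vertices $w_{ix},w_{iy},w_{iz}$ across distinct sets $c_i,c_j$ are made adjacent, so the core vertices form (nearly) a clique, and every peripheral vertex $v_x$ or $w_i^{\pm}$ attaches directly into this core, giving a path of length at most two between any pair. I would spell out the handful of cases (two element-vertices $v_x,v_y$; an element-vertex and a $w_i^{\pm}$; two gadget vertices $w_i^{\pm},w_j^{\pm}$) and exhibit a common neighbor in each.

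The main obstacle is not in any single step but in confirming that the two transformation lemmas together give a genuine \emph{if and only if}, since each lemma as stated only produces a structure in one direction. I expect the care lies in the reverse direction (Lemma~\ref{lemma:c-disc-volta}), where one must be sure that the counting argument forces \emph{exactly} the type-(I) and type-(II) component structure, so that the extracted family $C'$ indeed covers each element of $X$ exactly once rather than merely covering it. Once that structural rigidity is accepted from the lemma, the theorem assembles routinely, and I would close by stating that the reduction is polynomial-time computable, which is clear since $|V(G)|$ and $|E(G)|$ are polynomial in $m$ and $q$.
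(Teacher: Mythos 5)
Your \NP-membership step and your use of Lemmas~\ref{lemma:c-disc-ida} and~\ref{lemma:c-disc-volta} for the reduction from \pname{Exact Cover by 3-Sets} match the paper's proof exactly, and the worry you flag about the ``if and only if'' is fine, since Lemma~\ref{lemma:c-disc-volta} does establish the required structural rigidity. The genuine gap is your diameter argument, which is false as stated: the constructed graph $G$ has diameter $3$, not $2$. Concretely, for two distinct elements $x \neq y$, the neighborhoods of $v_x$ and $v_y$ are the disjoint sets $\{w_{ix} \mid x \in c_i\}$ and $\{w_{jy} \mid y \in c_j\}$, and rule~(III) joins each $w_{ix}$ only to the single element-vertex $v_x$; hence $v_x$ and $v_y$ have no common neighbor and $d(v_x,v_y)=3$. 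The same failure occurs in your other two cases: for $i \neq j$, the neighborhoods of $w_i^{\pm}$ and $w_j^{\pm}$ lie inside the disjoint cliques $H_i$ and $H_j$, and $w_i^{\pm}$ has no common neighbor with $v_x$ whenever $x \notin c_i$. So the case analysis you propose (``exhibit a common neighbor in each'') cannot be carried out; precisely in the cases you list, no common neighbor exists, because the $w$-to-$v$ edges are labeled by elements and the core clique is only ``nearly'' universal in a way that does not help peripheral pairs.

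The paper closes this gap with one additional move that your proposal is missing: add a universal vertex to $G$. This reduces the diameter to $2$, preserves chordality (any cycle of length at least four through a universal vertex has a chord incident to it), and does not disturb the correctness of the reduction, because for $c = m-q+1 \geq 2$ a universal vertex can never be saturated by a $c$-disconnected matching --- if it were, $G[M]$ would be connected, contradicting $c \geq 2$. With that modification your assembly of the two lemmas goes through verbatim; without it, what you have proved is \NP-completeness for chordal graphs of diameter $3$, which is weaker than the theorem's claim.
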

\begin{proof}
Note that the $c$-disconnected matching is a certificate that the problem belongs to {\NP}. We now prove that it is also {\NPH}. The Lemmas \ref{lemma:c-disc-ida} and \ref{lemma:c-disc-volta} show that a solution $C'$ for the {\sc Exact Cover by $3$-sets} corresponds to a $c$-disconnected matching $M$ and vice versa, $c = m-q+1$, $|M| = 3q+m$ in the transformation graph $G$. Note that if we add an universal vertex to $G$, the same properties hold, and the diameter of $G$ is reduced to $2$. For this reason, { \sc $c$-Disconnected Matching } is {\NPH} and, thus, also {\NPc} even for chordal graphs.
\end{proof}


We can also make little modifications to show that {\sc Disconnected Matching} is also {\NPc} for bounded vertex degree.

We can also show that the problem is hard even for limited vertex degree graphs. In this case, the rule (II) from the previous transformation can be replaced by the following.

\begin{enumerate}[(II)]
    \item For each pair of $3$-element sets $c_i = (x,y,z)$ and $c_j = (a,b,c)$ such that $c_i, c_j \in  C$, add all edges between vertices of $\{ w_{ix}, w_{iy}, w_{iz} \}$ and $\{ w_{ja}, w_{jb}, w_{jc} \}$ if and only if there is an element $p \in c_i,c_j$.
\end{enumerate}

We know that the problem \pname{Exact Cover By 3-Sets} remains {\NPc} even if no element in $X$ appears in more than $3$ sets of $C$\cite{garey_johnson}. Therefore, the transformation graph, though not chordal, will have the maximum degree bounded by a constant.

With the same arguments for the proof given in Theorem \ref{teo:c-disc-npc-chordal}, we enunciate the following theorem.

\begin{theorem}
{\sc Disconnected Matching} is {\NPc} even for graphs with bounded maximum degree and diameter $2$.
\end{theorem}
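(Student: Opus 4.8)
The plan is to run the reduction behind Theorem~\ref{teo:c-disc-npc-chordal} through a single modified graph, reducing from the frequency-restricted variant of \pname{Exact Cover by 3-Sets} in which no element of $X$ lies in more than three members of $C$, which remains \NPc~\cite{garey_johnson}. I keep rules (I) and (III) verbatim, replace rule (II) by the sparse version stated above (join the triples $\{w_{ix},w_{iy},w_{iz}\}$ and $\{w_{ja},w_{jb},w_{jc}\}$ only when $c_i$ and $c_j$ share an element), and add a single vertex $u$ adjacent to every other vertex so that, exactly as in the proof of Theorem~\ref{teo:c-disc-npc-chordal}, the graph has diameter $2$; $u$ is never saturated by a maximum matching, so $G[M]$ and the whole component analysis are unaffected by its presence. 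Under the frequency restriction, each element vertex $v_x$ has at most three gadget neighbours, and each gadget vertex $w_{ix}$ is incident only to the four $K_5$-edges of its own $H_i$, to $v_x$, and to the cross-gadget edges of the at most two other sets sharing $x$; hence every gadget and element vertex has degree bounded by an absolute constant, which is the bounded-degree content of the claim.

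The delicate point is that sparsifying rule (II) changes how components of $G[M]$ merge, so I would first recompute the target number of components before transferring the two lemmas. In the dense construction of Lemma~\ref{lemma:c-disc-ida} the saturated triples of the chosen sets all fuse, through rule~(II), into one large component; with the sparse rule they no longer do, since in an exact cover the chosen sets are pairwise disjoint and therefore share neither an element nor a cross-gadget edge. Consequently the forward direction now produces the $m-q$ isolated $K_2$'s on $\{w_j^+,w_j^-\}$ \emph{together with} one fully saturated piece per chosen set, and I would reset $c$ to this recomputed value and re-examine $k=m+3q$ accordingly. The main obstacle is the converse, Lemma~\ref{lemma:c-disc-volta}: its original proof used that \emph{all vertices of $S$ are connected} — a direct consequence of the dense rule~(II) — to force exactly the two component types and thereby pin the count. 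With $S$ no longer a clique I would instead argue that any large saturated region is held together solely through the shared element vertices $v_x$, so that two simultaneously saturated $w$-triangles must come from sets overlapping in some element; ruling out that the large region fragments further, and showing that the recomputed component count is attained \emph{tightly}, is where I expect essentially all of the difficulty to lie.

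Once both directions are re-established on this single graph, the conclusion follows as in Theorem~\ref{teo:c-disc-npc-chordal}: membership in \NP\ is witnessed by the matching itself; the two transformations give a polynomial reduction from the frequency-restricted \pname{Exact Cover by 3-Sets}, so the problem is \NPH; the frequency bound forces every gadget and element vertex to have constant degree; and the vertex $u$ guarantees that any two vertices are at distance at most $2$. Combining these yields that \pname{Disconnected Matching} is \NPc\ on graphs of bounded maximum degree and diameter $2$. I would spend the bulk of the write-up on the reworked Lemma~\ref{lemma:c-disc-volta}, since pinning the exact component count when $S$ is only connected through the element vertices — rather than through a clique — is the step that does not transfer ``with the same arguments'' and must be carried out in full.
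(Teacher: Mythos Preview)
Your approach mirrors the paper's: replace rule~(II) by the sparse version, reduce from the frequency-bounded \pname{Exact Cover by 3-Sets}, and follow Theorem~\ref{teo:c-disc-npc-chordal} for the rest --- the paper's entire proof is the single sentence ``with the same arguments''. You are also right to be suspicious of that hand-wave: after sparsifying, $G[S]$ is no longer a clique and the gadgets of the chosen cover sets, being pairwise disjoint, are no longer joined by cross-edges, so both the target value of $c$ and the counting in the converse lemma genuinely need to be redone, exactly as you outline.

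There is, however, a more basic issue that your proposal shares with the paper's statement: the universal vertex $u$ you add has degree $|V(G)|-1$, so the resulting graph does \emph{not} have bounded maximum degree. You implicitly notice this by claiming constant degree only for ``every gadget and element vertex'', but that is not what ``bounded maximum degree'' means. In fact no reduction can deliver both properties simultaneously: a graph with maximum degree at most $\Delta$ and diameter at most $2$ has at most $1+\Delta^2$ vertices by the Moore bound, so for each fixed $\Delta$ the class is finite and the problem on it is trivially in \P. The intended statement is almost certainly ``bounded maximum degree'' alone; drop the universal vertex (and with it the diameter clause), and then the reworked component count and the reworked Lemma~\ref{lemma:c-disc-volta} you flag are exactly the work that remains.
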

\section{Polynomial time algorithms}
For our final contributions, we turn our attention to positive results, showing that the problem is efficiently solvable in some graph classes.

\subsection{Minimal separators and disconnected matchings}\label{sec:minimal-separators}

It is not surprising that minimal separators play a role when looking for $c$-disconnected matchings.
In fact, for $c = 2$, Goddard et al.~\cite{GODDARD2005129} showed how to find 2-disconnected matchings in graphs with a polynomial number of minimal separators.
We generalize their result by showing that \pname{Disconnected Matching} parameterized by the number $c$ of connected components is in \XP; note that we do not need to assume that the family of minimal separators is part of the input, as it was shown in \cite{shen1997efficient} it can be constructed in polynomial time.

\begin{theorem}\label{teo:c-matching-xp}
\pname{Disconnected Matching} parameterized by the number of connected components is in {\XP} for graphs with a polynomial number of minimal separators.
\end{theorem}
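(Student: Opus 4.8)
The plan is to give a dynamic-programming-over-separators algorithm whose running time is $f(c)\cdot n^{g(c)}$. The key structural insight is the one already exploited for $c=2$ by Goddard et al.~\cite{GODDARD2005129}: if $M$ is a $c$-disconnected matching, then the connected components of $G[M]$ partition the saturated vertices into $c$ groups, and the unsaturated vertices lying ``between'' these groups form separators. More precisely, fix a $c$-disconnected matching $M$ and let $C_1,\dots,C_c$ be the vertex sets of the components of $G[M]$. For each component $C_a$, its neighborhood $N(C_a)\setminus V(M)$ consists of unsaturated vertices (by an argument like that in Theorem~\ref{teo:connected-matching-complexity}, no saturated vertex outside $C_a$ can be adjacent to $C_a$, else the components would merge). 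The idea is that the ``frontier'' separating the components can be captured by a bounded number of minimal separators, and enumerating over these is polynomial when their total number is polynomial.

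First I would make the separator structure precise. Let $\mathcal{S}$ be the family of minimal separators of $G$, which by~\cite{shen1997efficient} can be listed in polynomial time and has polynomial size by hypothesis. I would argue that, given a $c$-disconnected matching with components $C_1,\dots,C_c$, one can choose for each way of splitting off a component a minimal separator $S_i$ contained in the unsaturated vertices so that removing the chosen separators disconnects $G$ into regions, each of which hosts one (or a controlled number of) the $C_a$. The combinatorial claim to nail down is that $c-1$ separators suffice to isolate the $c$ components, and that these separators may be taken to be minimal and pairwise ``laminar'' (nested or crossing in a controlled way), so that the regions they carve out are well defined.

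Next I would set up the enumeration and the matching computation. The algorithm guesses an unordered collection of at most $c-1$ minimal separators from $\mathcal{S}$; since $|\mathcal{S}| = n^{O(1)}$, the number of such collections is $n^{O(c)}$, which is the source of the $n^{g(c)}$ factor. For each guess, deleting the separators' vertices partitions the remaining graph into connected regions; I would group these regions so as to obtain exactly $c$ candidate ``pieces,'' require that the chosen separator vertices be left unsaturated, and then within each piece compute a maximum matching that stays inside that piece (so the induced subgraph genuinely breaks into the desired components). Summing the matching sizes over the pieces and checking the total against $k$ gives the decision. Each maximum-matching computation is polynomial via~\cite{vazirani}, so the whole procedure is $n^{O(c)}$, i.e. \XP\ in $c$.

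The correctness direction that needs care is soundness: I must show that every $c$-disconnected matching arises from some valid guess, i.e. that the component structure of an optimal $G[M]$ is always ``witnessed'' by a choice of minimal separators from $\mathcal{S}$ confining each piece. The main obstacle is precisely this structural lemma — guaranteeing that the boundaries between components can be realized by \emph{minimal} separators (not arbitrary vertex cuts) and that at most $c-1$ of them are needed, while ensuring the regions they define are consistent enough that a region-wise maximum matching recovers a matching of the full target size without accidentally reconnecting two intended components. Handling components that are not separated by a single clean cut (e.g. when one region would contain several components) requires either recursing on $c$ or bounding how the separators interact; I expect this bookkeeping, rather than the matching or enumeration steps, to be where the real work lies.
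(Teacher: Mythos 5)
Your proposal takes essentially the same route as the paper's proof: enumerate all families of at most $c-1$ minimal separators from the polynomial-size family (enumerable in polynomial time via~\cite{shen1997efficient}), delete their union, compute a maximum matching in what remains, and keep the best valid candidate, for $|V|^{\bigO{c}}$ total time. The structural lemma you flag as ``the main obstacle'' is indeed the crux (the paper asserts it in a single sentence), but it has a short proof built on the observation you already made, and neither laminarity nor recursion on $c$ is needed. Since every component $C_i$ of $G[M]$ satisfies $N(C_i) \cap V(M) = \emptyset$, process the components one at a time, maintaining the invariant that all processed components lie in pairwise distinct regions of $G$ minus the separators chosen so far. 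When $C_j$ is processed, it shares a region with at most one earlier component $C_i$ (by the invariant); if so, pick $u \in C_i$ and $v \in C_j$: the set $N(C_j)$ is a $u$--$v$ separator avoiding $V(M)$, hence it contains an inclusion-minimal $u$--$v$ separator $S_j$, which is a minimal separator of $G$; since $C_i$ and $C_j$ are connected and disjoint from $S_j$, it separates all of $C_i$ from all of $C_j$, and adding $S_j$ only refines the regions, so earlier separations persist. This uses at most $c-1$ separators, and completeness follows with no case analysis: for this guess, each region hosting some $C_i$ contains the nonempty matching $M$ restricted to $C_i$, so a maximum matching $M'$ of the punctured graph has at least one edge in each of these $\geq c$ pairwise non-adjacent regions, hence $G[M']$ has at least $c$ components, and $|M'| \geq |M|$ together with the maximality of $M$ forces $|M'| = |M|$.

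There is, however, one genuine slip in your decision rule. Grouping the regions into exactly $c$ pieces, computing per-piece maximum matchings, and comparing the \emph{sum} against $k$ is not sound: a piece may contribute no matched edge at all (e.g., a region of isolated vertices), in which case the matching found can reach $k$ edges while inducing fewer than $c$ components. The remedy is the paper's explicit final check: accept a guess only if the computed matching actually induces at least $c$ connected components (equivalently, insist that each of your $c$ pieces contributes at least one edge). With that check, soundness is immediate---every accepted candidate is a $c$-disconnected matching of size at least $k$---and the completeness argument above guarantees that the guess coming from an optimal solution is accepted.
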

\begin{proof}
Note that if a matching $M$ is a maximum $c$-disconnected matching of $G = (V,E)$, then there is a family $\mathcal{S}$ of at most $c-1$ minimal separators such that $V(G) - V(M)$ contains $\bigcup_{S \in \mathcal{S}} S$. Therefore, if we find such $\mathcal{S}$ that maximizes a maximum matching $M$ in $G[V - (\bigcup_{S \in \mathcal{S}} S)]$ and $M$ is $c$-disconnected, then $M$ is a maximum $c$-disconnected matching. Considering that $G$ has $|V|^{\bigO{1}}$ many minimal separators, the number of possible candidates for $\mathcal{S}$ is bounded by $|V|^{\bigO{c}}$. Computing a maximum matching can be done in polynomial time and checking whether $G[M]$ has $c$ components can be done in linear time. Therefore, the whole procedure takes $|V|^{\bigO{c}}$ time and finds a maximum $c$-disconnected matching.
\end{proof}

In particular, this result implies that \pname{$c$-Disconnected Matching} is solvable in polynomial time for chordal graphs~\cite{10.1007/3-540-44679-6_34}, circular-arc graphs~\cite{DEOGUN199939}, graphs that do not contain thetas, pyramids, prisms, or turtles as induced subgraphs~\cite{abrishami2020graphs}.
We leave as an open question to decide if \pname{Disconnected Matching} parameterized by $c$ is in \FPT\ for any of these classes.

\subsection{Interval Graphs}
\label{sec:disc-interval}

In this section, we show that {\sc Disconnected Matching} for interval graphs can be solved in polynomial time.
To obtain our dynamic programming algorithm, we rely on the ordering property of interval graphs~\cite{interval_order}; that is, there is an ordering $\mathcal{Q} = \langle Q_1, \dots Q_p \rangle$ of the $p$ maximal cliques of $G$ such that each vertex of $G$ occurs in consecutive elements of $\mathcal{Q}$ and, moreover the intersection $S_i = Q_i \cap Q_{i-1}$ between two consecutive cliques is a minimal separator of $G$.
Our algorithm builds a table $f(i,j,c')$, where $i,j \in [p]$ and $c' \in [c]$, and is equal to $q$ if and only if the largest $c'$-disconnected matching of $G\left[\bigcup_{i \leq \ell \leq j} Q_\ell \setminus (S_i \cup S_{j+1})\right]$ has $q$ edges; that is, $(G, k, c)$ is a positive instance if and only if $f(1, p, c) \geq k$.

\begin{theorem}\label{teo:c-disc-interval}
    \textsc{Disconnected Matching} can be solved in polynomial time on interval graphs.
\end{theorem}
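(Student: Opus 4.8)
The plan is to fill the table $f(i,j,c')$ by dynamic programming over contiguous ranges of cliques $[i,j]$, processing entries in increasing order of the range length $j-i$ and of $c'$, with the convention $S_1 = S_{p+1} = \emptyset$. The linchpin is a structural observation about how the connected components of $G[M]$ are arranged along the clique ordering $\mathcal{Q}$. Since $G[M] = G[V(M)]$ is an \emph{induced} subgraph, any two $M$-saturated vertices that share a maximal clique are adjacent in $G$ and hence lie in the same component of $G[M]$. Therefore each maximal clique $Q_\ell$ meets the vertex set of at most one component of $G[M]$, and we may order the components left to right by the cliques they occupy. Between any two consecutive components there is an index $\ell$ for which the separator $S_{\ell+1} = Q_\ell \cap Q_{\ell+1}$ is disjoint from $V(M)$: a saturated vertex of $S_{\ell+1}$ would belong to both $Q_\ell$ and $Q_{\ell+1}$, forcing the component occupying $Q_\ell$ to also occupy $Q_{\ell+1}$, a contradiction.

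Next I would check that the subproblems are \emph{self-contained}. By the consecutiveness property of $\mathcal{Q}$, if a vertex $v$ lies in some $Q_\ell$ with $\ell \in [i,j]$ but also in a clique left of $Q_i$, then $v \in Q_{i-1} \cap Q_i = S_i$; symmetrically, sticking out to the right puts $v$ in $S_{j+1}$. Hence, after deleting $S_i \cup S_{j+1}$, every surviving vertex appears only in cliques of $[i,j]$, so the graph of $f(i,j,c')$ has no edges leaving the range. The recurrence then peels off the leftmost component across an unsaturated separator. For the base case $c'=1$, $f(i,j,1)$ is simply the size of a maximum (unrestricted) matching of $G\left[\bigcup_{i \leq \ell \leq j} Q_\ell \setminus (S_i \cup S_{j+1})\right]$, computed by any maximum-matching routine. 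For $c' \geq 2$,
\[
f(i,j,c') = \max_{i \leq \ell < j}\ \bigl( f(i,\ell,1) + f(\ell+1,j,c'-1) \bigr),
\]
with the convention that $f$ is $-\infty$ whenever no matching attains $c'$ components.

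Correctness follows in both directions. Removing $S_{\ell+1}$ in both children guarantees no edge crosses the split, so concatenating a matching with at least one component on $[i,\ell]$ and one with at least $c'-1$ components on $[\ell+1,j]$ yields a matching with at least $c'$ components of the combined size; this gives $f(i,j,c') \geq$ the right-hand side. Conversely, given an optimal $c'$-disconnected matching $M$ of the range, the structural observation supplies an unsaturated separator $S_{\ell+1}$ just to the right of the leftmost component; restricting $M$ to cliques $[i,\ell]$ and to cliques $[\ell+1,j]$ gives admissible matchings for $f(i,\ell,1)$ and $f(\ell+1,j,c'-1)$ (both avoid $S_{\ell+1}$, and they avoid $S_i,S_{j+1}$ already), whence $|M|$ is bounded by the right-hand side. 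Thus $f$ is computed correctly and $(G,k,c)$ is a positive instance if and only if $f(1,p,c) \geq k$.

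For the complexity, there are $\bigO{|V|^2 c}$ entries, since an interval graph has $p \leq |V|$ maximal cliques and $c' \leq c$; each base entry costs one maximum matching and each recurrence entry $\bigO{|V|}$ (or $\bigO{|V|c}$ if one also sums over how $c'$ splits), which stays within the bound $\bigO{|V|^2 c \max\{|V|c, |E|\sqrt{|V|}\}}$ claimed in Table~\ref{tab:disc-p-matching}. I expect the main obstacle to be establishing and then carefully exploiting the structural claim that no maximal clique is shared by two components of $G[M]$, together with the self-containment of a range after deleting its two boundary separators; once these are in place the DP recurrence and its correctness proof are routine.
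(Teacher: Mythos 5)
Your proposal is correct and follows essentially the same approach as the paper: the same table $f(i,j,c')$ over ranges of the maximal-clique ordering with the boundary separators $S_i \cup S_{j+1}$ deleted, with correctness resting on the fact that an optimal matching must leave some separator $S_{\ell+1}$ unsaturated, at which the range is split. The only difference is minor: you peel off the leftmost component (fixing the left side to $c_1 = 1$), whereas the paper maximizes over all splits $c_1 + c_2 = c'$ of the component count; both recurrences are valid, and yours is marginally cheaper per entry.
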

\subsection{Treewidth}
\label{sec:tw}

A \textit{tree decomposition} of a graph $G$ is a pair $\mathbb{T} = \left(T, \mathcal{B} = \{B_j \mid j \in V(T)\}\right)$, where $T$ is a tree and $\mathcal{B} \subseteq 2^{V(G)}$ is a family where: $\bigcup_{B_j \in \mathcal{B}} B_j = V(G)$;
for every edge $uv \in E(G)$ there is some~$B_j$ such that $\{u,v\} \subseteq B_j$;
for every $i,j,q \in V(T)$, if $q$ is in the path between $i$ and $j$ in $T$, then $B_i \cap B_j \subseteq B_q$.
Each $B_j \in \mathcal{B}$ is called a \emph{bag} of the tree decomposition.
$G$ has treewidth at most $t$ if it admits a tree decomposition such that no bag has more than $t+1$ vertices.
For further properties of treewidth, we refer to~\cite{treewidth}.
After rooting $T$, $G_x$ denotes the subgraph of $G$ induced by the vertices contained in any bag that belongs to the subtree of $T$ rooted at node $x$.
Our final result is a standard dynamic programming algorithm on tree decompositions; we omit the proof and further discussions on how to construct the dynamic programming table for brevity.

An algorithmically useful property of tree decompositions is the existence of a \emph{nice tree decomposition} that does not increase the treewidth of $G$.

\begin{definition}[Nice tree decomposition]
    A tree decomposition $\mathbb{T}$ of $G$ is said to be \emph{nice} if its tree is rooted at, say, the empty bag $r(T)$ and each of its bags is from one of the following four types:
    \begin{enumerate}
        \item \emph{Leaf node}: a leaf $x$ of $\mathbb{T}$ with $B_x = \emptyset$.
        \item \emph{Introduce vertex node}: an inner bag $x$ of $\mathbb{T}$ with one child $y$ such that $B_x \setminus B_y = \{u\}$.
        \item \emph{Forget node}: an inner bag $x$ of $\mathbb{T}$ with one child $y$ such that $B_y \setminus B_x = \{u\}$.
        \item \emph{Join node}: an inner bag $x$ of $\mathbb{T}$ with two children $y,z$ such that $B_x = B_y = B_z$.
    \end{enumerate}
\end{definition}

\begin{theorem}\label{teo:c-disc-tw}
    \textsc{Disconnected Matching} can be solved in \FPT\ time when parameterized by treewidth.
\end{theorem}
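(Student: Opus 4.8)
The plan is to give a standard dynamic programming algorithm over a nice tree decomposition of $G$, where the main challenge is tracking enough information at each bag to reconstruct the number of connected components induced by a matching. First I would compute a nice tree decomposition of width $t$ in \FPT\ time using a known algorithm, and root it at the empty bag. For each node $x$ and each bag $B_x$, the state of the dynamic programming table must record, for the vertices currently in the bag: (i) which vertices are already $M$-saturated versus unsaturated, (ii) how the saturated bag vertices are partitioned into connected components of $G_x[V(M)]$ seen so far, and (iii) how many connected components have already been ``completed'' (that is, entirely forgotten, so they will never interact with future bags). The partition information is encoded as a partition of the saturated subset of $B_x$, which is why the Bell number $\eta_{t+1}$ appears in the complexity bound $\bigO{8^t \eta_{t+1}^3 |V|^2}$: there are at most $\eta_{t+1}$ partitions of a bag of size $t+1$, the factor $8^t$ accounts for the saturation/matching status flags per vertex, and the cube arises from merging partitions at join nodes.

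Next I would specify the transitions for each of the four node types. At a \emph{leaf node} the bag is empty, so the table has a single trivial entry. At an \emph{introduce vertex node} adding $u$, I extend each compatible state by either leaving $u$ unsaturated or, if $u$ is matched to another bag vertex via an edge of $G$, recording that edge in $M$ and updating the partition by placing $u$'s component accordingly (an introduced vertex starts in its own block unless it is matched inside the bag). At a \emph{forget node} removing $u$, I project out $u$ from the partition; crucially, if $u$ was the last bag vertex of its block, that block becomes a completed connected component and I increment the completed-component counter, taking the maximum over all preimage states while summing edge counts. At a \emph{join node} with children $y,z$ sharing the bag, I must combine a partition from $y$ with a partition from $z$ on the same saturated vertex set, merging the two partitions (by iteratively fusing blocks that share a vertex, as in a union–find step), summing the matching sizes and the two completed-component counters, and being careful not to double-count saturation or edges.

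The hard part will be the join node, specifically arguing that the merge of the two child partitions correctly reflects the connected components of $G_x[V(M)]$ and that no component is counted as completed prematurely. The key invariant to maintain and prove is that a block of the stored partition is exactly a connected component of $G_x[V(M)]$ that still has a representative in the current bag, while the completed counter records components that are already fully determined; since every edge of $G$ appears in some bag and any separation between future and past vertices must pass through the current bag, no completed component can ever be reconnected, which justifies incrementing the counter only at forget nodes. Finally, the answer to the instance $(G,k,c)$ is read off at the root: I accept if and only if some root state (with empty bag, hence no pending blocks) has total completed-component count at least $c$ and total matching size at least $k$. I would then verify that each transition is computed in time polynomial in the number of states, yielding the overall $\bigO{8^t \eta_{t+1}^3 |V|^2}$ running time, which is \FPT\ in $t$. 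As noted in the theorem statement, I omit the routine verification of these transitions for brevity.
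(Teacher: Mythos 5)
Your overall framework---a dynamic program over a nice tree decomposition whose states store a partition of the saturated bag vertices into connected components, plus a counter of components that have been completely forgotten---is the same as the paper's. However, there is a genuine gap in your state space: you have no way to represent a bag vertex that \emph{will} be saturated by the final matching but whose matching partner has not yet appeared in $G_x$. In your scheme a vertex only becomes ``saturated'' at the moment its matching edge is recorded (at an introduce node where both endpoints lie in the bag), so between the introduction of a vertex $u$ and the introduction of its partner, $u$ is treated as unsaturated and excluded from the connectivity bookkeeping. This breaks your key invariant that ``no completed component can ever be reconnected'': a finished-looking component \emph{can} be reconnected later, through exactly such a vertex. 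Concretely, let $G$ be the path on vertices $a,b,u,v$ (in this order) and consider the matching $M=\{ab,uv\}$ with the natural decomposition whose bags are $\{a,b\},\{b,u\},\{u,v\}$. When $b$ is forgotten, $u$ is in the bag but still marked unsaturated (its partner $v$ has not been introduced), so your forget rule closes off the block $\{b\}$ and increments the counter; later the edge $uv$ is recorded as a second component. Your DP thus reports a matching with $2$ edges inducing $2$ components, yet $G[V(M)]$ is the whole path, which is connected, so your algorithm answers YES on the instance $(G,2,2)$ whose correct answer is NO. Since the problem asks for \emph{at least} $c$ components, this overcounting produces false positives. Keeping unsaturated bag vertices as singleton blocks would not help either: whether an edge between an unsaturated vertex and a saturated one merges components depends on whether that vertex ends up saturated, which your state never guesses.

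The paper resolves precisely this point with an extra coordinate in the state: besides the set $S$ of bag vertices already matched inside $G_x$, it keeps a set $U$ of ``half-matched'' vertices---vertices guessed to be saturated whose matching edge lies in $G\setminus G_x$---and the component partition $\Gamma$ is taken over a set containing $S\cup U$, not merely over the already-matched vertices. With $u\in U$ in the example above, the edge $bu$ places $u$ in $b$'s block, so when $b$ is forgotten the block still has a bag representative and is not counted as completed. The same defect infects your join node (a vertex matched on one child's side is invisible to the connectivity computation on the other side), and the same device repairs it: both children's partitions range over the same vertex set, with vertices matched on the opposite side moved into the respective $U$-sets. Your introduce, forget, and join transitions, the edge accounting, and the root condition would all need to be rebuilt around this richer state, essentially as in the paper's proof.
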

\begin{proof}
    We define $|V(G)| = n$ and $|E(G)| = m$.
    We suppose w.l.o.g. that we are given a tree decomposition $\mathbb{T} = (T, \mathcal{B})$ of $G$ of width $t$ rooted at an empty forget node; moreover, we solve the more general optimization problem, i.e., given $(G,c)$ we determine the size of the largest $c$-disconnected matching of $G$, if one exists, in time \FPT\ on $t$.
    As usual, we describe a dynamic programming algorithm that relies on $\mathbb{T}$.
    For each node $x \in V(T)$, we construct a table $f_x(S, U, \Gamma, \ell)$ which evaluates to $\rho$ if and only if there is a (partial) solution $M_x$ with the following properties: (i) $S \subseteq B_x \cap V(M_x)$ and $G[S]$ admits a perfect matching, (ii) the vertices of $U \subseteq V(M_x) \cap B_x \setminus S$ are half-matched vertices and are going to be matched to vertices in $G \setminus G_x$, (iv) $\Gamma \in \Pi(A_x)$ is a partition of $A_x$, where $S \cup U \subseteq A_x \subseteq B_x$, and each part of $\Gamma$ corresponds to a unique connected component of $G[M_x]$ --- we say that $V(\Gamma) = A_x$ --- (iv) $G[M_x]$ has \textit{exactly} $\ell$ connected components that do not intersect $B_x$, and (v) $M_x$ has $\rho$ edges.
    If no such solution exists, we define $f_x(S, U, \Gamma, \ell) = -\infty$ and we say the state is \textit{invalid}.
    Below, we show how to compute each entry for the table for each node type.
    
    \noindent \textbf{Leaf node:} Since $B_x = \emptyset$, the only valid entry is $f_x(\emptyset, \emptyset, \{\}, 0)$, which we define to be equal to 0.
    
    \noindent \textbf{Introduce node:} Let $y$ be the child of $x$ in $T$ and $B_x = B_y \cup \{v\}$.
    We compute the table as in Equation~\ref{eq:tw_introduce}; before proceeding, we define $\Gamma(v)$ to be the block of $\Gamma$ that contains $v$ and a partition $\Gamma' \in \sift_v(\Gamma)$ if: (i) $V(\Gamma') = V(\Gamma) \setminus \{v\}$, (ii) every block in $\Gamma \setminus \Gamma(v)$ is also in $\Gamma'$, and (iii) $\Gamma'$ contains a connected refinement of $\Gamma(v) \setminus \{v\}$, i.e. there is a partition $\pi$ of $\Gamma(v) \setminus \{v\}$ that is a subset of $\Gamma'$, vertices in different blocks of $\pi$ are non-adjacent, and each block of $\pi$ has at least one neighbor of $v$.
    
    \begin{equation}
        \label{eq:tw_introduce}
        f_x(S, U, \Gamma, \ell) =
        \begin{cases}
            f_y(S, U, \Gamma, \ell), &\text{ if $v \notin V(\Gamma)$;} \\
            \max\limits_{\Gamma' \in \sift_v(\Gamma)} f_y(S, U \setminus \{v\}, \Gamma', \ell), &\text{ if $v \in U$;} \\
            \max\limits_{u \in N(v) \cap S} \max\limits_{\Gamma' \in \sift_v(\Gamma)} f_y(S \setminus \{u,v\}, U \cup \{u\},  \Gamma', \ell), &\text{ otherwise.}
        \end{cases}
    \end{equation}
    
    For the first case of the above equation, if $v \notin V(\Gamma)$ then any partial solution $M_x$ of $G_x$ represented by $(S, U, \Gamma, \ell)$ is also a solution to $G_y$ under the same constraints since $G_y = G_x \setminus \{v\}$ and $v$ is not in $V(M_x) = V(\Gamma)$.
    For the second case, let $M_x$ by the solution that corresponds to $(S, U, \Gamma, \ell)$, $C_v$ the connected component of $G[V(M_x)]$ that contains $v$, and $ \pi = \{C_1, \dots, C_q\}$ the (possibly empty) connected components of $G[C_v \setminus \{v\}]$; by definition, $C_v \cap B_x = \Gamma(v)$ is a block of $\Gamma$ and $\pi$ is a partition of $C_v$ where vertices in different blocks are non-adjacent.
    Consequently, we have that $\Gamma' = \Gamma \setminus \Gamma(v) \cup \pi$ is in $\sift_v(\Gamma)$ and $f_y(S, U \setminus \{v\}, \Gamma', \ell)$ is accounted for in the computation of the maximum, which by induction is correctly computed.
    Finally, for the third case, let $uv \in M_x$, and note that we may proceed as in the previous case: the connected components of $G[C_v \setminus \{v\}]$ induce a connected refinement $\pi$ of $\Gamma(v) \setminus \{v\}$ and $\Gamma' = \Gamma \setminus \Gamma(v) \cup \pi$ is in $\sift_v(\Gamma)$, which results in a solution of $G_y$ that corresponds to the tuple $(S \setminus \{u,v\}, U \cup \{u\}, \Gamma', \ell)$; since the maximum runs over all neighbors of $v$ in $S$ and over all partitions in $\sift_v$, our table entry is correctly computed.

    \noindent \textbf{Forget node:} Let $y$ be the child of $x$ in $T$ and $B_x = B_y \setminus \{v\}$.
    We show how to compute tables for these nodes in Equation~\ref{eq:tw_forget}, where $\Gamma^{\{u,v\}} = \Gamma \setminus \{\Gamma(u)\} \cup \{\Gamma(u) \cup \{v\}\}$, $V_{S,U}(\Gamma) = V(\Gamma) \setminus (S \cup U)$, and
    
    \begin{equation}
        \label{eq:tw_forget_aux}
        g(S, U, \Gamma, \ell, u) = \max \{f_y(S, U, \Gamma^{\{u,v\}}, \ell),\ f_y(S \cup \{u,v\}, U, \Gamma^{\{u,v\}}, \ell) + 1\}.
    \end{equation}

    \begin{equation}
        \label{eq:tw_forget}
        f_x(S, U, \Gamma, \ell) = \max
        \begin{cases}
            f_y(S, U, \Gamma, \ell) \\
            f_y(S, U, \Gamma \cup \{\{v\}\}, \ell-1) \\
            \max\limits_{u \in V(\Gamma)} f_y(S, U, \Gamma^{\{u,v\}}, \ell), \text{ if $N(v) \cap V(\Gamma) = \emptyset$;} \\
            \max\limits_{u \in N(v) \cap \gamma} g(S, U, \Gamma, \ell, u), \text{ if $\exists \gamma \in \Gamma \mid N(v) \cap V(\Gamma) \subseteq \gamma$.}
        \end{cases}
    \end{equation}
    
    Let $M_x$ be a solution to $G_x$ represented by $(S, U, \Gamma, \ell)$.
    If $v \notin V(M_x)$, then $M_x$ is a solution to $G_y$ constrained by $(S, U, \Gamma, \ell)$ and, by induction, the correctness of $f_x$ is given by the first case of Equation~\ref{eq:tw_forget}.
    Recall that, assuming $v \in V(M_x)$ implies that $N(v) \cap V(\Gamma)$ must be contained in a single block of $\Gamma$, otherwise this table entry is deemed invalid and we may safely set it to $-\infty$.
    If there is some connected component $C_v$ of $G[V(M_x)]$ that has $C_v \cap (V(\Gamma) \cup \{v\}) = \{v\}$, then it must be the case that $M_x$ is a solution of $G_y$ represented by $(S, U, \Gamma \cup \{\{v\}\}, \ell - 1)$ since $C_v \cap B_y = \{v\}$, which is the second case of the equation.
    Suppose $wv \in M_x$.
    If $\{u,v\} \subseteq C_v \cap (V(\Gamma) \cup \{v\})$, we branch our analysis on two cases:
    \begin{enumerate}
        \item For the first one, we suppose $N(v) \cap V(\Gamma) = \emptyset$ and note that $\Gamma^{\{u,v\}}(u) = C_v \cap V(\Gamma^{\{u,v\}})$, so we must have that $f_y(S, U, \Gamma^{\{u,v\}}, \ell) \neq -\infty$ is accounted for in the third case of Equation~\ref{eq:tw_forget}.
        \item Otherwise, there is some $u \in N(v) \cap V(\Gamma)$ and it must be the case that $N(v) \cap V(\Gamma) \subseteq \Gamma(u) = \gamma$. If $u = w$, then $M_x$ is a partial solution to $G_y$ represented by $(S \cup \{u,v\}, U, \Gamma^{\{u,v\}}, \ell)$, so, by induction, $f_y(S \cup \{u,v\}, U, \Gamma^{\{u,v\}}, \ell)$ is well defined, and we have one additional matched edge outside of $B_x$ than outside of $B_y$, hence the $+1$ term in Equation~\ref{eq:tw_forget_aux}.
        Finally, if $u \neq w$ and $w \notin B_x$, then we proceed as in Case 1, as shown in Equation~\ref{eq:tw_forget_aux}, but since $v$ must be in the same connected component of its neighbors, we have fewer entries to check in $f_y$.
        Either way, $u$ is accounted for in the range of the maximum in the fourth case of Equation~\ref{eq:tw_forget}.
    \end{enumerate}
    
    \noindent \textbf{Join node:} Finally, let $x$ be a join node with children $y,z$ and $B_x = B_y = B_z$. We obtain the table for these nodes according to the following recurrence relation, where $C = A \sqcup B$ is the \textit{join} between $A$ and $B$.
    
    \begin{equation}
        \label{eq:tw_join}
        f_x(S, U, \Gamma, \ell) = \max_{\substack{\ell_y + \ell_z = \ell \\ U_y \dot{\cup} U_z = V_{S,U}(\Gamma) \\ \Gamma_y \sqcup \Gamma_z = \Gamma}} f_y(S, U \cup U_y, \Gamma_y, \ell_y) + f_z(S, U \cup U_z, \Gamma_y, \ell_z)
    \end{equation}
    
    Once again, let $M_x$ be a solution to $G_x$ satisfying $(S, U, \Gamma, \ell)$ and $M_i = M_x \cap G_i$ for $i \in \{y,z\}$.
    Moreover, let $\mathcal{C}_i$ be the connected components of $G[V(M_i)]$, $\ell_i$ the number of components in $\mathcal{C}_i$ with no vertex in $B_i$, and $\Gamma_i$ the partition of $V(M_i) \cap B_i$ where each block is equal to $C \cap B_i$ for some $C \in \mathcal{C}_i$.
    Note that it must be the case that $\ell = \ell_y + \ell_z$ --- since $M_x = M_y \cup M_z$ --- and that $\Gamma = \Gamma_y \sqcup \Gamma_z$ since vertices in different connected components of $M_y$ may be in a same connected component of $M_z$, but vertices in different connected components in both solutions are not merged in a same connected component of $M_x$.
    Now, define $W_y \subseteq V_{S,U}(\Gamma)$ to be the set of vertices that are matched to a vertex of $G_y \setminus S$, let $W_z$ be defined analogously, $U_y = W_z$ and $U_z = W_y$; note that $(U_y, U_z)$ is a partition of $V_{S,U}(\Gamma)$, and that the vertices in $M_i \cap B_i$ that must be matched, but not in $G_i$, are given by $U \cup U_i$.
    As such, $M_i$ is represented by $(S, U \cup U_i, \Gamma_i, \ell_i)$ and by induction we have $f_i(S, U \cup U_i, \Gamma_i, \ell_i) = |M_i \setminus S|$, so it holds that $f_x(S, U, \Gamma, \ell) = \sum_{i \in \{y,z\}} f_i(S, U \cup U_i, \Gamma_i, \ell_i) = |M_x \setminus S|$, which is one of the terms of the maximum shown in Equation~\ref{eq:tw_join}.
    
    Recall that we may assume that our tree decomposition is rooted at a forget node $r$ with $B_r = \emptyset$; by definition, our instance $(G, k, c)$ of \textsc{Disconnected Matching} is a \YES\ instance if and only if $f_r(\emptyset, \emptyset, \{\}, c) \geq k$.
    As to the running time, we have $\bigO{2^{2t}\eta_tn}$ entries per table of our algorithm, where $\eta_t$ be the $t$-th Bell number, each of which can be computed in $\bigO{2^t\eta_{t+1}^2n}$, which is the complexity of computing a join node, so our final running time is of the other of $\bigO{8^t\eta_{t+1}^3n^2}$.
\end{proof}
\section{Kernelization}

In the previous section, we presented an \FPT\ algorithm for the treewidth parameterization, which implies tractability for several other parameters, such as vertex cover and max leaf number.
In this section, we provide kernelization lower bounds for \pname{Disconnected Matching} when parameterized by vertex cover and when parameterized by vertex deletion distance to clique.
We highlight that our lower bounds hold for the \pname{Induced Matching} problem and, for the former parameterization, even when restricted to bipartite graphs.
Our proofs are through OR-cross-compositions from the \pname{Exact Cover by 3-Sets} problem, and are inspired by the proof of Section~\ref{sec:c-disc-chordal}.
Throughout this section, let $\mathcal{S} = \{(X_1, C_1), \dots, (X_t, C_t)\}$ be the input instances to \pname{Exact Cover by 3-Sets}; w.l.o.g., we assume that $X_i = X = [n]$ and $|C_i| = m$ for all $i \in [t]$, and define $\mathcal{C} = \bigcup_{i \in [t]} C_i$.
We further assume that, for any two instances, $C_i \neq C_j$, which implies that $C_i \setminus C_j$ and $C_j \setminus C_i$ are non-empty.
We denote by $(G,k)$ the built \pname{Induced Matching} instance.

\subsection{Vertex Cover}

\noindent\textbf{Construction.}
We begin by adding to $G$ the set $A = \{v_a \mid a \in X\}$ and, for each set $S_j \in \mathcal{C}$ where $S_j = \{a,b,c\}$, we add one copy $Q_j$ of $K_{1,4}$, with vertices $\{w_j, w_j^*, w_{ja}, w_{jb}, w_{jc}\}$; $w_j$ is the central vertex, while $w_{ja}, w_{jb}, w_{jc}$ are the \textit{interface} vertices of $Q_j$. Then, we add edges to $G$ so $w_{ja}v_d \in E(G)$ if and only if $a = d$.
Now, we add to $G$ an instance selector gadget $I$, which is simple a star with $t$ leaves, with the central vertex labeled as $q$ and the $i$-th leaf labeled as $p_i$.
To complete the construction of $G$, for each $p_i \in V(I)$ and $S_j \in \mathcal{C} \setminus C_i$, we add all edges between $p_i$ and the interface vertices of $Q_j$, i.e. if $S_j$ is not a set of the $i$-th instance, we add edges between $Q_j$ and $p_i$.
Finally, we set $k = n + |\mathcal{C}| - \frac{n}{3} + 1$.

\begin{lemma}
    \label{lem:param_bound_nokernel_vc}
    Graph $G$ is bipartite and has a vertex cover of size $\bigO{n^3}$.
\end{lemma}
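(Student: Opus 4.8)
The plan is to prove the two assertions of the lemma separately, each by an explicit construction read off directly from the description of $G$.

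For bipartiteness, I would exhibit a bipartition straight away. Put into one side $L$ every interface vertex $w_{ja}$ (over all $S_j \in \mathcal{C}$ and $a \in S_j$), every pendant $w_j^*$, and the selector centre $q$; put into the other side $R$ every star centre $w_j$, every element vertex $v_a \in A$, and every selector leaf $p_i$. Then I would check that each of the four edge families of the construction joins $L$ to $R$: the star edges of each $Q_j$ go from the centre $w_j \in R$ to a leaf $w_j^*$ or $w_{ja}$ in $L$; the edges $w_{ja}v_a$ go from $w_{ja} \in L$ to $A \subseteq R$; the star edges $q p_i$ go from $q \in L$ to $p_i \in R$; and the instance-selector edges between $p_i$ and the interface vertices of $Q_j$ go from $p_i \in R$ to $w_{ja} \in L$. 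Since every edge crosses between $L$ and $R$, the graph $G$ is bipartite.

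For the vertex cover, the essential observation --- and the crux of the whole lemma --- is that every set of every instance is a $3$-element subset of the common ground set $X = [n]$, so $|\mathcal{C}| \leq \binom{n}{3} = \bigO{n^3}$, a bound independent of the number $t$ of composed instances. I would then take the cover $K = \{w_j \mid S_j \in \mathcal{C}\} \cup \{w_{ja} \mid S_j \in \mathcal{C},\ a \in S_j\} \cup \{q\}$, consisting of all star centres, all interface vertices, and the selector centre. Checking coverage is immediate: the interface vertices dominate the edges $w_j w_{ja}$, the edges $w_{ja} v_a$, and all instance-selector edges $p_i w_{ja}$; the centres $w_j$ dominate the remaining pendant edges $w_j w_j^*$; and $q$ dominates the edges $q p_i$. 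Hence $K$ is a vertex cover with $|K| = 4|\mathcal{C}| + 1 = \bigO{n^3}$.

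The only genuine pitfall to avoid is choosing a cover that scales with $t$ rather than with $n$: covering the edges $w_{ja} v_a$ by $A$ and the selector edges by the leaves $p_i$ would yield a cover of size $\Theta(n + t + |\mathcal{C}|)$, which is useless for the cross-composition since $t$ may be exponential in $n$. The bipartition argument is routine once the four edge families are enumerated; what does the real work is the bound $|\mathcal{C}| \leq \binom{n}{3}$, which keeps the parameter polynomial in $n$ alone, together with the choice of placing the interface vertices (rather than the $p_i$) into the cover, which removes any dependence on $t$.
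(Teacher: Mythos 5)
Your proof is correct and takes essentially the same approach as the paper: the bipartition you exhibit is exactly the paper's (interface vertices, pendants $w_j^*$, and $q$ on one side; centres $w_j$, the set $A$, and the leaves $p_i$ on the other), and both arguments rest on the key bound $|\mathcal{C}| \leq \binom{n}{3}$ to keep the cover size independent of $t$. The only difference is cosmetic: the paper uses the cover $V(G) \setminus (V(I) \setminus \{q\})$ of size $5|\mathcal{C}| + n + 1$, while yours also omits $A$ and the pendants, giving $4|\mathcal{C}| + 1$; both are $\bigO{n^3}$.
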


\begin{proof}
    We construct the bipartition $(Y,W)$ as follows: $Y = (I \setminus \{q\}) \cup \{w_j \mid S_j \in \mathcal{C}\} \cup A$ and $W = V(G) \setminus Y$.
    To see that $Y$ is an independent set, note that: (i) each of its three components induce independent sets in $G$, (ii) $I \setminus \{q\}$ is not adjacent to the central vertex of any $Q_j$ nor to any vertex of $A$, and (iii) vertices of $A$ are non-adjacent to the central vertices of the $Q_j$'s.
    For $W$, note it is composed by the leaves of the $Q_j$'s, which together form an independent set, and vertex $q$, which is only adjacent to vertices of $I$, none of which belong to $W$.
    Note $D = V(G) \setminus (V(I) \setminus \{q\})$ is a vertex cover, i.e. $G \setminus D$ is an independent set, since each connected component of $G \setminus D$ is corresponds to a leaf of $I$.
    Observe that $|D| = 5|\mathcal{C}| + n + 1$ and that there are most $\binom{n}{3}$ elements in $\mathcal{C}$ since there are at most this many subsets of three distinct elements of the ground set $X$, so $|D| \in \bigO{n^3}$.
\end{proof}

\begin{lemma}
    \label{lem:forward_nokernel_vc}
    If $(X_\ell, C_\ell)$ admits a solution, then $G$ admits an induced matching with $k$ edges.
\end{lemma}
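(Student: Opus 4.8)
The plan is to construct an explicit induced matching of size exactly $k = n + |\mathcal{C}| - \frac{n}{3} + 1$ from a solution $C_\ell' \subseteq C_\ell$ to the $\ell$-th instance of \pname{Exact Cover by 3-Sets}. Recall that $C_\ell'$ is an exact cover of $X = [n]$, so it uses exactly $\frac{n}{3}$ triples and every element of $X$ is covered precisely once. First I would build the matching $M$ in three groups of edges, mirroring the chordal reduction of Section~\ref{sec:c-disc-chordal}: (i) for each element $a \in X$, find the unique triple $S_j \in C_\ell'$ containing $a$ and add the edge $w_{ja}v_a$ — since $C_\ell'$ is an exact cover, this saturates every vertex of $A$ and every interface vertex of the selected gadgets exactly once, contributing $n$ edges; (ii) for every gadget $Q_j$ with $S_j \notin C_\ell'$ (there are $|\mathcal{C}| - \frac{n}{3}$ of these), add the edge $w_jw_j^*$; and (iii) add the instance-selector edge $qp_\ell$, contributing one more edge.

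Next I would verify the cardinality: the three groups give $n + \left(|\mathcal{C}| - \frac{n}{3}\right) + 1 = k$ edges, as required. The edges are pairwise vertex-disjoint by construction, since group (i) touches only $A$ and interface vertices of selected gadgets, group (ii) touches only central vertices $w_j$ and their pendant $w_j^*$ of unselected gadgets, and group (iii) touches only $q$ and $p_\ell$. So $M$ is a genuine matching of size $k$.

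The crux of the argument, and the step I expect to require the most care, is showing that $M$ is \emph{induced}, i.e. that $G[V(M)]$ has no edge outside $M$. I would check each potentially dangerous pair of saturated vertices. The interface vertices saturated in group (i) come from distinct selected gadgets covering disjoint element sets, so no two of them can be joined (interface vertices of different gadgets are non-adjacent, and each $v_a$ has exactly one saturated interface neighbor). The central vertices $w_j$ saturated in group (ii) are only adjacent to their own gadget's leaves, none of which are saturated in an unselected gadget except $w_j^*$ itself. The critical adjacencies to rule out involve the selector vertex $p_\ell$: by construction $p_\ell$ is adjacent to interface vertices of every $Q_j$ with $S_j \notin C_\ell$; since $C_\ell' \subseteq C_\ell$, every gadget selected in group (i) satisfies $S_j \in C_\ell$, hence $p_\ell$ is \emph{not} adjacent to any saturated interface vertex. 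This is exactly where the hypothesis $C_\ell' \subseteq C_\ell$ is used, and why the instance selector correctly isolates the $\ell$-th instance. Finally, $q$ is adjacent only to the $p_i$'s, of which only $p_\ell$ is saturated, and $qp_\ell \in M$. Having ruled out every case, I conclude that $G[V(M)]$ induces precisely the edge set $M$, so $M$ is an induced matching with $k$ edges.
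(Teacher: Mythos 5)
Your proposal is correct and matches the paper's own proof essentially step for step: the same three groups of edges (interface-to-$A$ edges for selected triples, $w_jw_j^*$ for unselected gadgets, and the selector edge $qp_\ell$), the same cardinality count, and the same case analysis showing each edge forms its own component, with the key observation that $p_\ell$ is only adjacent to unsaturated interface vertices because the selected triples all lie in $C_\ell$. No further comment is needed.
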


\begin{proof}
    Let $\Pi$ be the solution to $(X_\ell, C_\ell)$, $S_j \in \mathcal{C}$, and $S_j = \{a,b,c\}$.
    We add to $M$ the edges $\{w_{ja}v_a, w_{jb}v_b, w_{jc}v_c\}$, if $S_j \in \Pi$, otherwise we add edge $w_jw_j^*$ to $M$, totalling $n + |\mathcal{C}| - \frac{n}{3}$ edges.
    For the final edge, add $qp_\ell$ to $M$.
    In terms of connected components, each edge of $M$ is a distinct component, since: (i) each $Q_j$ either has 3 of its leaves in $M$ but not its central vertex, or it has its central vertex in $M$, (ii) each vertex of $A$ is adjacent to only one saturated vertex, i.e. its only neighbor in $V(M)$ is $w_{ja}$, and (iii) $p_\ell$ is adjacent only to interface vertices that are not saturated by $M$, so its unique neighbor in $V(M)$ is $q$.
    As such, $M$ is an induced matching with $k$ edges.
\end{proof}

Let us now show the converse.

\begin{lemma}
    \label{lem:normalized_nokernel_vc}
    In every solution $M$ of $(G,k)$, $q$ is $M$-saturated.
\end{lemma}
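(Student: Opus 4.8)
We need to show that in every induced matching $M$ of $G$ with exactly $k = n + |\mathcal{C}| - \frac{n}{3} + 1$ edges, the central vertex $q$ of the instance selector gadget $I$ is saturated. The plan is a counting/maximality argument: I would show that the edges of $M$ that avoid $q$ can saturate at most $k-1$ edges, so to reach $k$ edges one is forced to use an edge incident to $q$.

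**Setting up the count.** First I would observe that the vertex set $V(G)$ partitions naturally into the gadgets $Q_j$ (each a $K_{1,4}$ with one center and three interface vertices plus $w_j^*$), the set $A = \{v_a \mid a \in X\}$, and the instance selector $I$. Since $M$ is an \emph{induced} matching, within each $Q_j$ at most three edges of $M$ can use interface vertices (the three edges $w_{ja}v_a, w_{jb}v_b, w_{jc}v_c$ to $A$), \emph{or} the single internal edge $w_j w_j^*$ can be used, but never both, because the center $w_j$ and an interface vertex are adjacent. I would then bound the contribution of edges incident to $A$: each $v_a \in A$ can be matched to at most one interface vertex, so edges into $A$ saturate at most $n$ vertices of $A$, i.e. contribute at most $n$ edges. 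The key arithmetic is that the maximum number of $M$-edges that stay entirely inside $\bigcup_j Q_j \cup A$ is exactly $n + |\mathcal{C}| - \frac{n}{3} = k - 1$, achieved precisely when $\frac{n}{3}$ of the gadgets contribute three $A$-edges each (covering all of $A$) and the remaining $|\mathcal{C}| - \frac{n}{3}$ gadgets each contribute their internal edge $w_j w_j^*$.

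**Forcing $q$.** The heart of the argument is then: if $q$ were \emph{not} $M$-saturated, every edge of $M$ would lie inside $\bigcup_j Q_j \cup A$ together with possibly edges incident to the leaves $p_i$ of $I$. But an edge incident to a leaf $p_i$ can only go to an interface vertex of some $Q_j$ with $S_j \notin C_i$, and such an edge already consumes an interface vertex of that gadget, so by the induced-matching constraint it \emph{blocks} the three $A$-edges of that gadget. I would argue that replacing, or rather accounting for, such a $p_i$-edge never increases the total beyond $k-1$, because using a $p_i$-edge forfeits at least as much as it gains relative to the optimal internal configuration. Hence without $q$ the matching has at most $k - 1 = n + |\mathcal{C}| - \frac{n}{3}$ edges, contradicting $|M| = k$. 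Therefore $q$ must be saturated.

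**Main obstacle.** The delicate step is the bookkeeping for edges incident to the leaves $p_i$: one must verify carefully that these edges cannot be used to boost the count past the $q$-free maximum of $k-1$. Concretely, I expect to need a small exchange argument showing that any induced matching using a $p_i$-leaf edge can be modified into one of at least the same size that avoids all $p_i$ and uses only $A$-edges and internal $Q_j$-edges, thereby reducing to the clean count above. Everything else is a direct consequence of the structure of $K_{1,4}$ gadgets and the disjointness of $A$, $\bigcup_j Q_j$, and $I$.
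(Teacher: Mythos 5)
Your overall strategy is the same as the paper's: assume $q$ is unsaturated and show that every induced matching is then limited to $k-1 = n + |\mathcal{C}| - \frac{n}{3}$ edges; your count for matchings confined to $A \cup \bigcup_j Q_j$ (at most $\frac{n}{3}$ gadgets contributing three $A$-edges, the rest contributing $w_jw_j^*$) coincides with the paper's first case. Where you diverge is the treatment of saturated leaves $p_i$, which you correctly identify as the delicate step. The paper handles it by direct counting: if $p_iw_{ja} \in M$, then every gadget $Q_x$ adjacent to $p_i$ satisfies $E(Q_x) \cap M \subseteq \{w_xw_x^*\}$, which accounts for $|\mathcal{C}| - m$ edges, and the $m$ gadgets with $S_j \in C_i$ contribute at most $n + m - \frac{n}{3}$ more, totalling exactly $k-1$. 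You instead propose an exchange argument reducing to the leaf-free case; although you leave it unexecuted, it does go through, and arguably more cleanly than the paper's three-way case analysis: if $p_iw_{ja} \in M$, then no other vertex of $Q_j$ is saturated ($w_{jb}$ and $w_{jc}$ are adjacent to $p_i$, the center $w_j$ is adjacent to $w_{ja}$, and $w_j^*$'s only neighbor is $w_j$), so replacing $p_iw_{ja}$ by $w_jw_j^*$ preserves both cardinality and inducedness, because $w_j$ and $w_j^*$ have no neighbors outside $Q_j$. Moreover, two saturated leaves cannot be matched into the same gadget (a leaf adjacent to a gadget is adjacent to all three of its interface vertices), so these exchanges are independent and eliminate all saturated leaves simultaneously, after which your leaf-free bound of $k-1$ applies and yields the contradiction. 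Your plan is therefore sound; the one ingredient you must add to turn it into a proof is precisely this verification that the swap preserves the induced property.
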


\begin{proof}
    Towards a contradiction, suppose that $q \notin V(M)$ and, furthermore that $I \cap V(M) = \emptyset$.
    In this case, note that $|M| \leq n + |\mathcal{C}| - \frac{n}{3} = k - 1$, since we may have at most $\frac{n}{3}$ stars $Q_j$ with the three interface vertices in $M$, contributing with $n$ edges to $M$, and all other $Q_j$'s have at most edge $w_jw_j^*$ in $M$, totaling $n + |\mathcal{C}| - \frac{n}{3}$ edges in the matching.
    
    If, on the other hand, $I \cap V(M) \neq \emptyset$, then suppose $p_i \in V(M)$.
    Since $q \notin V(M)$, $p_i$ is matched with a vertex in $Q_j$, say $w_{ja}$, which implies that $Q_j \cap V(M) = \{w_{ja}\}$, since $p_i$ is adjacent to all three interface vertices of $Q_j$ and, if $w_j^*$ is saturated by $M$, then $w_j$ also is, which is impossible since $M$ is an induced matching.
    Moreover, note that, for every $Q_x$ with vertices adjacent to $p_i$, we have that $E(Q_x) \cap M \subseteq \{w_xw_x^*\}$.
    At this point, we have accounted for $1 + (|\mathcal{C}| - m - 1) =  |\mathcal{C}| - m$ edges of $M$.
    For the $m$ $Q_j$'s with no vertex adjacent to $p_i$, they can each contribute with at most three edges to $M$ but no more than $n + m - \frac{n}{3}$ in total, since each $Q_j$ will either: (i) have some of its interface vertices matched to vertices $\{v_a, v_b, v_c\}$, (ii) have $w_j^*$ saturated, or (iii) have exactly one of its interface vertices saturated to either some other $p_y$ or to $w_j$.
    As such, we have at most $n$ edges coming from the first option, while the others amount to, at most $m - \frac{n}{3}$ additional edges.
    Finally, this implies that $|M| \leq |\mathcal{C}| - m + n + m - \frac{n}{3} = n + |\mathcal{C}| - \frac{n}{3} < k$, and we conclude that $q$ must be $M$-saturated.
\end{proof}

As a consequence of our previous lemma, there is an edge of the form $qp_i$ in every solution to $(G,k)$.

\begin{lemma}
    \label{lem:backward_nokernel_vc}
    If $(G,k)$ admits a solution, then at least one instance $(X_\ell, C_\ell) \in \mathcal{S}$ also admits a solution.
\end{lemma}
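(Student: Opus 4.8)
The plan is to extract an exact cover from the matching $M$ by first pinning down which instance-selector leaf is used, and then showing that the cardinality budget $k$ forces the gadgets of that instance to encode an exact cover. By Lemma~\ref{lem:normalized_nokernel_vc}, $q$ is $M$-saturated; since the only neighbours of $q$ are the leaves $p_1, \dots, p_t$ of $I$, there is a unique index $\ell$ with $qp_\ell \in M$. I would then record the two immediate consequences of $M$ being an \emph{induced} matching: no other leaf $p_i$ with $i \neq \ell$ can be saturated, as $p_iq$ would be an edge joining two saturated vertices outside $M$; and the only saturated neighbour of $p_\ell$ is $q$. Because $p_\ell$ is adjacent to every interface vertex of each $Q_j$ with $S_j \in \mathcal{C} \setminus C_\ell$, the interface vertices of all such gadgets are forbidden from $V(M)$.

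The second step is a contribution count, classifying the gadgets $Q_j$ by whether $S_j \in C_\ell$. For $S_j \in \mathcal{C}\setminus C_\ell$ the interface vertices are blocked, so the only edge $Q_j$ can contribute is $w_jw_j^*$, giving at most $|\mathcal{C}| - m$ edges in total. For $S_j \in C_\ell$, the induced-matching condition inside the star $Q_j$ forces a dichotomy: either $w_j$ is saturated (via $w_jw_j^*$ or a single interface edge), contributing exactly one edge and no interface-to-$A$ edge, or $w_j$ is unsaturated and the gadget contributes only interface-to-$A$ edges, of which there are at most three. Writing $I_A$ for the total number of interface-to-$A$ edges and $E_c$ for the number of $C_\ell$-gadgets using a $w_j$-edge, the total contribution of $C_\ell$ is exactly $I_A + E_c$; since the leaf edge $qp_\ell$ accounts for the one remaining edge, counting gives $k = |M| \leq 1 + (|\mathcal{C}| - m) + I_A + E_c$, hence $I_A + E_c \geq \tfrac{2n}{3} + m$.

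The crux, and the step I expect to be the main obstacle, is turning this single inequality into rigid structure. Here I would use three bounds: $I_A \leq |A| = n$ (matching edges incident to $A$ saturate distinct vertices of $A$); $I_A \leq 3g_A$, where $g_A$ is the number of $C_\ell$-gadgets with at least one interface-to-$A$ edge; and $E_c \leq m - g_A$, since a gadget using interface-to-$A$ edges cannot also use a $w_j$-edge. Substituting $E_c \leq m - g_A$ into the budget inequality yields $I_A \geq \tfrac{2n}{3} + g_A$, which combined with $I_A \leq 3g_A$ forces $g_A \geq \tfrac{n}{3}$, while $I_A \leq n$ forces $g_A \leq \tfrac{n}{3}$. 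Thus $g_A = \tfrac{n}{3}$ and $I_A = n$, and each of these $\tfrac{n}{3}$ gadgets has all three of its interface vertices matched into $A$.

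Finally I would read off the exact cover: since $I_A = n$ every vertex of $A$ is saturated, and being a matching each is saturated exactly once, so the $\tfrac{n}{3}$ fully-used gadgets correspond to sets $S_j \in C_\ell$ whose elements partition $X = [n]$. This collection is an exact cover of $X$ drawn entirely from $C_\ell$, so $(X_\ell, C_\ell)$ admits a solution, which is exactly what the OR-cross-composition requires.
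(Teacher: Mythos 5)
Your proof is correct and follows essentially the same route as the paper's: use the saturation of $q$ (Lemma~\ref{lem:normalized_nokernel_vc}) to fix the selected instance $\ell$, observe that induced-ness blocks the interface vertices of all gadgets outside $C_\ell$, and then force, by a cardinality count against the budget $k$, exactly $\tfrac{n}{3}$ gadgets of $C_\ell$ to have all three interface vertices matched into $A$, from which the exact cover is read off. If anything, your bookkeeping (the bounds $I_A \leq n$, $I_A \leq 3g_A$, $E_c \leq m - g_A$) is slightly more careful than the paper's two-case analysis, since your dichotomy explicitly covers gadgets whose center $w_j$ is matched to an interface vertex, a configuration the paper's count glosses over.
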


\begin{proof}
    Let $M$ be a solution to $(G,k)$ with $qp_\ell \in M$ and $\mathcal{Q}$ be the set of $Q_j$'s with at least one saturated interface vertex.
    Note that no vertex in $Q_j \in \mathcal{Q}$ is adjacent to $p_\ell$, otherwise $V(M)$ would not induce a matching.
    Let us show that $|\mathcal{Q}| = \frac{n}{3}$.
    If we had any more elements in $\mathcal{Q}$, $M$ would have at most $n$ edges incident to an interface vertex and at most $|\mathcal{C}| - |\mathcal{Q}|$ edges incident to the non-interface vertices of $Q_j$'s, which implies that $|M| \leq 1 + n + |\mathcal{C}| - |\mathcal{Q}| < 1 + n + |\mathcal{C}| - \frac{n}{3} = k$; the first property follows from the fact that $q$ is already in $V(M)$ and the neighbors of interface vertices, aside outside of $A$ and the central vertex of $Q_j$, are also neighbors of $q$.
    On the other hand, if $|\mathcal{Q}| < \frac{n}{3}$, then we would have that $|M| \leq 1 + 3|\mathcal{Q}| + |\mathcal{C}| - |\mathcal{Q}| < 1 + |\mathcal{C}| + 2\frac{n}{3} = k$.
    With this in hand, note that, to obtain $|M| = k$, it must be the case that $|V(M) \cap \bigcup_{Q_j \in \mathcal{Q}} Q_j| = n$ and interface vertices are matched with vertices of $A$. This holds since $k \geq |M| = |M \cap E(I)| + \sum_{S_j \in \mathcal{C}} |M \cap E(Q_j)| \leq 1 + 3|\mathcal{Q}| + |\mathcal{C}| - |\mathcal{Q}| = 1 + n + |\mathcal{C}| - \frac{n}{3} = k$.
    Moreover, the elements of $\mathcal{Q}$ must not be adjacent to $p_\ell$, which implies that, for each $Q_j \in \mathcal{Q}$, we have that $S_j \in C_\ell$.
    Since vertices $\{a,b,c\}$ of $A$ are matched with vertices $\{w_{ja}, w_{jb}, w_{jc}\}$ of $Q_j \in \mathcal{Q}$, it follows that $S_j = \{a,b,c\}$ and that $\{S_j \mid Q_j \in \mathcal{Q}\}$ is a solution to $(X_\ell, C_\ell)$.
\end{proof}

Finally, combining Lemmas~\ref{lem:param_bound_nokernel_vc},~\ref{lem:forward_nokernel_vc}, and~\ref{lem:backward_nokernel_vc}, we obtain our kernelization lower bound.

\begin{theorem}
    \label{thm:no_kernel_vc}
    \pname{Induced Matching} does not admit a polynomial kernel when jointly parameterized by vertex cover and solution size unless $\NP \subseteq \coNP/\poly$, even when restricted to bipartite graphs.
\end{theorem}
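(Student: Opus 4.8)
The plan is to assemble the three preceding lemmas into an OR-cross-composition from \pname{Exact Cover by 3-Sets}, which is {\NPH}, and then appeal to the standard theorem stating that such a composition precludes a polynomial kernel unless $\NP \subseteq \coNP/\poly$. Before composing, I would invoke a polynomial equivalence relation that buckets the input instances by the pair $(n,m) = (|X_i|, |C_i|)$; instances in distinct buckets are treated separately and malformed instances are discarded in polynomial time, which justifies the standing assumption that every $(X_i,C_i)$ shares the ground set $X = [n]$ and satisfies $|C_i| = m$.

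Next I would check the three requirements of the framework. First, the construction is computable in time polynomial in $N = \sum_{i \in [t]}|P_i|$: the graph $G$ has $\bigO{n + |\mathcal{C}| + t}$ vertices and a polynomial number of edges, and all gadgets are built by direct inspection of the $C_i$. Second, the OR-equivalence ``$(G,k)$ is a \YES\ instance if and only if some $(X_\ell,C_\ell)$ is'' is exactly the conjunction of Lemma~\ref{lem:forward_nokernel_vc} (the forward direction) and Lemma~\ref{lem:backward_nokernel_vc} (the backward direction). Third, the parameter --- here the combination of the vertex cover number and the solution size $k$ --- must be bounded by a polynomial in $\max_{i}|P_i|$, independently of $t$.

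The crux, and the step I expect to carry the weight, is this last condition. By Lemma~\ref{lem:param_bound_nokernel_vc} the vertex cover number is $\bigO{n^3}$, so it suffices to bound $k = n + |\mathcal{C}| - \tfrac{n}{3} + 1$. The essential observation is that, since every set in $\mathcal{C} = \bigcup_{i}C_i$ is a $3$-element subset of the common ground set $[n]$, we have $|\mathcal{C}| \leq \binom{n}{3} = \bigO{n^3}$ regardless of how many instances are composed. Hence $k = \bigO{n^3}$ as well, and the joint parameter is polynomial in $\max_i |P_i| \geq n$ with no dependence on $t$, as the framework demands. The characteristic danger in a cross-composition is precisely that the parameter grows with $t$; the shared-ground-set normalisation is what defuses it here.

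Finally, I would conclude: the OR-cross-composition yields that \pname{Induced Matching} parameterized jointly by vertex cover and solution size admits no polynomial kernel unless $\NP \subseteq \coNP/\poly$. Since Lemma~\ref{lem:param_bound_nokernel_vc} also certifies that $G$ is bipartite, the lower bound persists under the restriction to bipartite inputs, completing the statement.
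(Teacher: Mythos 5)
Your proposal is correct and follows essentially the same route as the paper, whose proof of this theorem consists precisely of combining Lemmas~\ref{lem:param_bound_nokernel_vc}, \ref{lem:forward_nokernel_vc}, and~\ref{lem:backward_nokernel_vc} within the OR-cross-composition framework. The additional details you supply --- the polynomial equivalence relation bucketing instances by $(n,m)$, and the bound $k \leq n + \binom{n}{3} - \frac{n}{3} + 1 = \bigO{n^3}$ ensuring the joint parameter is independent of $t$ --- are exactly the standard framework checks the paper leaves implicit.
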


\begin{corollary}
    \pname{Disconnected matching} does not admit a polynomial kernel when jointly parameterized by vertex cover and number of edges in the matching unless $\NP \subseteq \coNP/\poly$, even when restricted to bipartite graphs.
\end{corollary}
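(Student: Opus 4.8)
The plan is to transfer the kernelization lower bound of Theorem~\ref{thm:no_kernel_vc} through a polynomial parameter transformation, exploiting the fact that \pname{Induced Matching} is precisely the case $c = k$ of \pname{Disconnected Matching}. Concretely, given an instance $(G, k)$ of \pname{Induced Matching}, I would map it to the instance $(G, k, k)$ of \pname{Disconnected Matching}, leaving the graph untouched. As argued in the proof of Theorem~\ref{teo:c-disc-induced-np-complete}, $(G,k)$ is a \YES\ instance of \pname{Induced Matching} if and only if $(G, k, k)$ is a \YES\ instance of \pname{Disconnected Matching}: an induced matching on $k$ edges has exactly $k$ single-edge components and is therefore a $k$-disconnected matching with $k$ edges, while any $k$-disconnected matching with $k$ edges must have each of its $k$ components consisting of a single edge, i.e. it is an induced matching.

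First I would verify that this map is a polynomial parameter transformation with respect to the two parameters at hand. Since the underlying graph is unchanged, its vertex cover number is identical in source and target, and the matching-size parameter $k$ is copied verbatim; hence the combined parameter (vertex cover plus number of matched edges) is preserved exactly and is in particular polynomially bounded. Moreover, the instances produced by the construction of Theorem~\ref{thm:no_kernel_vc} are bipartite and the transformation does not alter the graph, so bipartiteness is maintained throughout.

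Then I would invoke the standard transitivity of kernelization lower bounds under polynomial parameter transformations~\cite{cygan_parameterized}: if the source problem is \NPH, the target problem is in \NP, and there is a polynomial parameter transformation from the source to the target, then a polynomial kernel for the target would yield one for the source. Here \pname{Induced Matching} is \NPc~\cite{CAMERON198997}, \pname{Disconnected Matching} is in \NP\ (a $c$-disconnected matching is a polynomial-time checkable certificate), and the map above is a polynomial parameter transformation; combining this with Theorem~\ref{thm:no_kernel_vc} yields the claim, with bipartiteness carried over from the source construction.

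The argument is essentially routine, so I do not anticipate a genuine obstacle; the only points requiring care are confirming the exact equivalence of \YES\ instances under the $c = k$ specialisation and checking that the vertex cover number is genuinely unchanged — rather than merely polynomially related — both of which follow immediately from the identity of the underlying graph in source and target.
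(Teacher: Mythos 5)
Your proposal is correct and follows essentially the same (implicit) route as the paper: the corollary is deduced from Theorem~\ref{thm:no_kernel_vc} via the identity-on-the-graph map $(G,k)\mapsto(G,k,k)$ of Theorem~\ref{teo:c-disc-induced-np-complete}, which leaves the vertex cover number, the parameter $k$, and bipartiteness untouched. One small caution: since \pname{Disconnected Matching} asks for \emph{at least} $k$ edges, a solution to $(G,k,k)$ may have more than $k$ edges, so the backward direction should be argued as in the cited proof---pick one edge per connected component of $G[M]$ to extract an induced matching of size at least $k$---rather than asserting that every solution has exactly $k$ single-edge components.
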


\subsection{Distance to Clique}

It is worthy to note at this point that the proof we have just presented can be adapted to the distance to clique parameterization without significant changes.
To do so, we replace $I$ with a clique of size $t + 1$, label its vertices arbitrarily as $\{q, p_1, \dots, p_t\}$ and proceed exactly as before.
The caveat being that we must show that any solution that picks an edge $p_xp_y$ can be changed into a solution that picks, say, $qp_x$ and that this new solution behaves in the exact same way as the one we outline in Lemma~\ref{lem:backward_nokernel_vc}.
We prove this in the following lemma.

\begin{lemma}
    \label{lem:normalized_nokernel_dc1}
    If $(G,k)$ admits a solution, then it also admits a solution where $q$ is saturated.
\end{lemma}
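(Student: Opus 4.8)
The plan is to exploit that $I$ is now a clique: in any induced matching $M$, the set $V(M) \cap I$ induces a clique (being a subset of the clique $I$) that is simultaneously $1$-regular, whence $|V(M) \cap I| \leq 2$. Moreover, if two vertices of $I$ lie in $V(M)$ they must be matched to each other, and if exactly one lies in $V(M)$ it is matched to a vertex outside $I$. I would first dispatch the case $q \in V(M)$, where there is nothing to prove, and then establish that any solution saturates at least one vertex of $I$.

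To rule out $V(M) \cap I = \emptyset$, I would reuse the counting argument from the proof of Lemma~\ref{lem:normalized_nokernel_vc}: with no vertex of $I$ saturated, $M$ is confined to the gadgets $Q_j$ and the set $A$, and such a matching has at most $n + |\mathcal{C}| - \frac{n}{3} = k - 1$ edges, contradicting $|M| \geq k$. This portion of the graph is identical in the star and the clique constructions, so the bound carries over verbatim. Hence some $p_x \in V(M)$ while $q \notin V(M)$.

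The core step is a local rerouting. Let $e \in M$ be the unique edge saturating $p_x$; by the observation above, either $e = p_xp_y$ with $p_y \in I$, or $e = p_xw$ with $w$ an interface vertex, and in either case $p_x$ is the only vertex of $I$ in $V(M)$ besides the possible $p_y$. I set $M' = (M \setminus \{e\}) \cup \{qp_x\}$, which clearly has the same cardinality as $M$ and saturates $q$. The part I expect to require the most care --- though it is ultimately routine --- is verifying that $M'$ remains an induced matching. The three checks are: $q$ is adjacent only to $p_1, \dots, p_t$, of which only $p_x$ survives in $V(M')$, so $q$ has exactly one neighbor there; the deleted partner of $p_x$ (whether $p_y$ or $w$) had $p_x$ as its sole neighbor in $V(M)$, so its removal disturbs no other saturated vertex; and $p_x$ trades its former partner for $q$, with no other former neighbor of $p_x$ lying in $V(M)$. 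Thus every vertex of $V(M')$ retains exactly one neighbor within $V(M')$, so $M'$ is an induced matching of size $|M|$ in which $q$ is saturated, and it interacts with the gadgets exactly as the solution analyzed in Lemma~\ref{lem:backward_nokernel_vc}.
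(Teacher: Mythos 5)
Your proof is correct, and it shares the paper's overall skeleton --- a counting argument showing $V(M)\cap I \neq \emptyset$ (which, as you note, carries over verbatim because $G \setminus I$ is identical in the star and clique constructions), followed by a local edge swap that moves the saturation onto $q$ --- but it differs in one genuine respect. The paper first rules out the possibility that a saturated $p_x$ is matched to a vertex outside $I$: it re-invokes the counting argument from the second paragraph of the proof of Lemma~\ref{lem:normalized_nokernel_vc} to show such a matching would fall short of $k$ edges, and therefore only ever performs the swap in the case $p_xp_y \in M$. You instead treat both possible partners of $p_x$ (a vertex $p_y \in I$ or an interface vertex $w$) uniformly with the same rerouting $M' = (M \setminus \{e\}) \cup \{qp_x\}$, and your verification that $M'$ stays induced is valid in both cases: since $I$ is a clique and $G[V(M)]$ is $1$-regular, $p_x$ is the only $I$-vertex of $V(M)$ apart from a possible $p_y$, the deleted partner's unique neighbor in $V(M)$ is $p_x$, and $q$'s unique neighbor in $V(M')$ is $p_x$. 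What this buys is a more self-contained argument: you never need to import the extra counting step from the vertex-cover lemma. What it costs is slightly less structural information: the paper's route shows that every solution avoiding $q$ must contain an edge with both endpoints in $I$, whereas yours only exhibits some equal-size solution saturating $q$. Since the downstream argument (repeating Lemma~\ref{lem:backward_nokernel_vc}) needs only a solution containing some edge $qp_\ell$, either version suffices.
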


\begin{proof}
    Let $M$ be a solution to $(G,k)$ that does not saturate $q$.
    Our first claim is that $M \cap E(I) \neq \emptyset$.
    Note that no $p_i$ may be matched to a vertex outside of $I$; we could immediately repeat the second paragraph of the proof of Lemma~\ref{lem:normalized_nokernel_vc}, so if $M \cap E(I) \neq \emptyset$, it holds that $V(M) \cap I = \emptyset$.
    Now, observe that $|M| \leq k - 1$ since the maximum induced matching in $G \setminus I$ uses as many edges between $A$ and the $Q_j$'s as possible and, for each $Q_j$ without $M$-saturated interface vertices, we pick edge $w_jw_j^*$, totalling at most $|A| + |\mathcal{C}| - \frac{n}{3} \leq k - 1$ edges in $M$.
    As such, $p_xp_y \in M$ these are the only vertices saturated in $I$, otherwise $M$ would not be induced.
    Replacing edge $p_xp_y$ by edge $qp_y$ maintains the property that $M$ is an induced matching and does not change its cardinality, completing the proof.
\end{proof}

At this point, we can immediately repeat the proof of Lemma~\ref{lem:backward_nokernel_vc}. Together with  Lemma~\ref{lem:normalized_nokernel_dc1}, we observe that $(G,k)$ admits a solution if and only if some instance $(X, C_i)$ also does.
Finally, by observing that the same vertex cover described in Lemma~\ref{lem:param_bound_nokernel_vc} is a clique modulator for the current construction, we obtain the following theorem.

\begin{theorem}
    \label{thm:no_kernel_dc1}
    \pname{Induced Matching} does not admit a polynomial kernel when jointly parameterized by vertex deletion distance to clique and solution size unless $\NP \subseteq \coNP/\poly$.
\end{theorem}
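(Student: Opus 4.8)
The plan is to establish Theorem~\ref{thm:no_kernel_dc1} by verifying that the modified construction---where the instance selector $I$ is a clique on $\{q, p_1, \dots, p_t\}$ rather than a star---is a valid OR-cross-composition of \pname{Exact Cover by 3-Sets} into \pname{Induced Matching} parameterized jointly by distance to clique and solution size. Since \pname{Exact Cover by 3-Sets} is \NP-hard, producing such a cross-composition immediately yields, via the framework of~\cite{cross_composition}, that no polynomial kernel exists unless $\NP \subseteq \coNP/\poly$. Thus the three obligations to discharge are: (i) the instance $(G,k)$ is computable in time polynomial in $N = \sum_{i \in [t]} |P_i|$; (ii) the joint parameter is polynomially bounded in $\max_i |P_i|$ (up to the allowed $\log t$ factor); and (iii) $(G,k)$ is a \YES\ instance if and only if at least one $(X, C_i)$ is.

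For the parameter bound, I would observe that the set $D = V(G) \setminus (V(I) \setminus \{q\})$ of Lemma~\ref{lem:param_bound_nokernel_vc}, which had size $\bigO{n^3}$, is now a clique modulator: deleting it leaves exactly the clique $I$, so the vertex deletion distance to clique is at most $|D| = 5|\mathcal{C}| + n + 1$. Since $|\mathcal{C}| \leq \binom{n}{3}$ and $k = n + |\mathcal{C}| - \frac{n}{3} + 1$, both the modulator size and the solution size are $\bigO{n^3}$, hence polynomial in $n \leq \max_i |P_i|$ and independent of $t$, which comfortably satisfies the framework's requirement. Polynomiality of the construction is routine, as only $\bigO{t|\mathcal{C}|}$ edges are added relative to the vertex-cover construction.

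The equivalence splits into two directions. The forward direction reuses Lemma~\ref{lem:forward_nokernel_vc} essentially verbatim: given a solution $\Pi$ to $(X_\ell, C_\ell)$, the same matching---interface edges for sets in $\Pi$, stalk edges $w_jw_j^*$ otherwise, and the single edge $qp_\ell$---is still induced, because the new clique edges among the $p_i$'s are irrelevant (only $q$ and $p_\ell$ are saturated in $I$, and $qp_\ell$ remains an edge of the clique). For the backward direction, I would first apply Lemma~\ref{lem:normalized_nokernel_dc1} to pass to a solution in which $q$ is $M$-saturated, and then invoke the argument of Lemma~\ref{lem:backward_nokernel_vc} unchanged to recover the index $\ell$ with $qp_\ell \in M$ and to read off a solution $\{S_j \mid Q_j \in \mathcal{Q}\}$ of $(X_\ell, C_\ell)$.

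The step I expect to require the most care is the normalization in the backward direction, since in a clique any edge $p_xp_y$ internal to $I$ could a priori be used by a solution, whereas the whole counting argument of Lemma~\ref{lem:backward_nokernel_vc} presupposes an edge incident to $q$. Lemma~\ref{lem:normalized_nokernel_dc1} is precisely what closes this gap: it shows that a solution avoiding $q$ must saturate exactly two vertices of $I$ via a single edge $p_xp_y$, and that rerouting this edge to $qp_y$ preserves both the cardinality and the induced property. I would therefore emphasize verifying that after this reroute the neighborhood hypotheses used in Lemma~\ref{lem:backward_nokernel_vc}---that the interface vertices adjacent to $p_\ell$ stay unsaturated and that each saturated $Q_j$ forces $S_j \in C_\ell$---still hold, after which the conclusion follows directly from the cross-composition framework.
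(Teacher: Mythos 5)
Your proposal is correct and follows essentially the same route as the paper: replace the star $I$ by a clique, reuse the construction and Lemmas~\ref{lem:param_bound_nokernel_vc}, \ref{lem:forward_nokernel_vc}, and \ref{lem:backward_nokernel_vc}, and close the only new gap via the normalization Lemma~\ref{lem:normalized_nokernel_dc1}, which reroutes an edge $p_xp_y$ to $qp_y$ while preserving cardinality and the induced property. The paper's argument is exactly this, including the observation that the same $\bigO{n^3}$ deletion set $D$ now serves as a clique modulator.
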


\begin{corollary}
    \pname{Disconnected matching} does not admit a polynomial kernel when jointly parameterized by vertex deletion distance to clique and number of edges in the matching unless $\NP \subseteq \coNP/\poly$.
\end{corollary}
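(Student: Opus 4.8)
The plan is to derive this corollary from Theorem~\ref{thm:no_kernel_dc1} by exhibiting a \emph{polynomial parameter transformation} (PPT) from \pname{Induced Matching} into \pname{Disconnected Matching} that preserves both coordinates of the parameter. The transformation I would use is precisely the one underlying Theorem~\ref{teo:c-disc-induced-np-complete}: given an instance $(G, k)$ of \pname{Induced Matching}, output the instance $(G, k, k)$ of \pname{Disconnected Matching}, which asks for a matching with at least $k$ edges inducing at least $k$ connected components. This map runs in linear time and leaves the graph untouched.

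First I would verify that the two instances are equivalent. If $G$ admits an induced matching $M$ with $k$ edges, then each edge of $M$ is its own connected component of $G[M]$, so $M$ is a $k$-disconnected matching with $k$ edges and $(G,k,k)$ is a \YES\ instance. Conversely, suppose $(G, k, k)$ is a \YES\ instance, witnessed by a matching $M$ with at least $k$ edges such that $G[M]$ has at least $k$ connected components. Selecting exactly one edge from each of $k$ distinct components yields $k$ edges that lie in pairwise distinct components of $G[M]$; since such edges share no endpoint and $G$ has no edge joining two different components of $G[M]$, these $k$ edges form an induced matching of size $k$ in $G$.

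Next I would check parameter preservation. Because the output graph is literally $G$, its vertex deletion distance to clique is unchanged; and the number of edges requested in the matching equals $k$, the very solution size requested by the source instance. Hence the parameter of $(G,k,k)$---distance to clique plus number of edges in the matching---coincides with, and is thus polynomially bounded by, the parameter of $(G,k)$---distance to clique plus solution size. Both problems lie in \NP\ and \pname{Induced Matching} is \NPH, so the map is a valid PPT between \NP\ problems.

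Finally I would invoke the standard link between PPTs and kernelization lower bounds: a PPT from an \NPH\ problem $P$ to an \NP\ problem $Q$ carries the non-existence of a polynomial kernel from $P$ to $Q$, because composing a hypothetical polynomial kernel (or compression) for $Q$ with the transformation would yield a polynomial compression for $P$. Combining this with Theorem~\ref{thm:no_kernel_dc1} gives that \pname{Disconnected Matching} admits no polynomial kernel under the stated parameterization unless $\NP \subseteq \coNP/\poly$. I expect no real obstacle here; the only step deserving a moment's care is the backward direction of the equivalence, where one prunes a possibly oversized $c$-disconnected matching down to an induced matching of the exact target size.
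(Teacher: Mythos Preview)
Your proposal is correct and matches the paper's intended argument: the corollary is stated in the paper without proof precisely because the transformation $(G,k)\mapsto(G,k,k)$ from Theorem~\ref{teo:c-disc-induced-np-complete} is a polynomial parameter transformation preserving both the graph (hence distance to clique) and the matching size, so the lower bound of Theorem~\ref{thm:no_kernel_dc1} transfers directly.
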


\section{Conclusions and future works}\label{sec:conclusions}

We have presented $c$-disconnected matchings and the corresponding decision problem, which we named \pname{Disconnected Matching}.
They generalize the well studied induced matchings and the problem of recognizing graphs that admit a sufficiently large induced matching.
Our results show that, when the number of connected components $c$ is fixed, \pname{$c$-Disconnected Matching} is solvable in polynomial time if $c = 1$ but \NPc\ even on bipartite graphs if $c \geq 2$.
We also proved that, unlike \pname{Induced Matching}, \pname{Disconnected Matching} remains \NPc\ on chordal graphs.
On the positive side, we show that the problem can be solved in polynomial time for interval graphs, in \XP\ time for graphs with a polynomial number of minimal separators when parameterized by the number of connected components $c$, and in \FPT\ time when parameterized by treewidth.
Finally, we showed that \pname{Disconnected Matching} does not admit polynomial kernels for very powerful parameters, namely vertex cover and vertex deletion distance to clique, by showing that this holds for the \pname{Induced Matching} particular case.

Possible directions for future work include determining the complexity of the problem on different graph classes.
In particular, we would like to know the complexity of \pname{Disconnected Matching} for strongly chordal graphs; we note that the reduction presented in Section~\ref{sec:c-disc-chordal} has many induced subgraphs isomorphic to a sun graph.

Aside from graph classes, we would like to understand structural properties of disconnected matchings. In particular, we are interested in determining sufficient conditions for a graph $G$ to have $\beta_{d,2}(G) = \beta_{*}(G)$ or $\beta_{d,2}(G) = \beta(G)$.

We are also interested in the parameterized complexity of the problem.
Our results show that, when parameterized by $c$, the problem is {\pNPH}; on the other hand, it is {\WH{1}} parameterized by the number of edges in the matching since \pname{Induced Matching} is {\WH{1}} under this parameterization~\cite{MOSER2009715}.
A first question of interest is whether chordal graphs admit an \FPT\ algorithm when parameterized by $c$; while the algorithm presented in Section~\ref{sec:minimal-separators} works for all classes with a polynomial number of minimal separators, chordal graphs offer additional properties that may aid in the proof of an \FPT\ algorithm.
Another research direction would be the investigation of other structural parameterizations, such as vertex cover and cliquewidth; while the former yields a fixed-parameter tractable algorithm due to Theorem~\ref{teo:c-disc-tw}, we would like to know if we can find a single exponential time algorithm under this weaker parameterization.
On the other hand, cliquewidth is a natural next step, as graphs of bounded treewidth have bounded cliquewidth, but the converse does not hold.
Finally, while we have settled several kernelization questions for \pname{Disconnected Matching} and \pname{Induced Matching}, other parameterizations are still of interest, such as max leaf number, feedback edge set, and neighborhood diversity.

We are currently working on weighted versions of $\mathscr{P}$-matchings, where we want to find matchings whose sum of the edge weights is sufficiently large, and the subgraph induced by the vertices of the matching satisfies some given property.

\section*{Acknowledgements} We thank the research agencies CAPES, CNPq, FAPEMIG, and FAPERJ for partially funding this work.

 \bibliographystyle{elsarticle-num} 
 \bibliography{cas-refs}





\end{document}